\title{Einstein-Yang-Mills wormholes haunted by a phantom field}
\author{Marko Sobak}
\address[Marko Sobak]{Faculty of Mathematics, University of Vienna, Oskar-Morgenstern-Platz 1, 1090 Vienna, Austria}
\email{marko.sobak@univie.ac.at}
\keywords{Wormholes, Einstein-Yang-Mills equations, phantom field}
\begin{document}


\begin{abstract}
In this article, we study wormhole spacetimes in the framework of the static spherically symmetric $\SU 2$ Einstein-Yang-Mills theory coupled to a phantom scalar field. We show rigorously the existence of an infinite sequence of symmetric wormhole solutions, labelled by the number of zeros of the Yang-Mills potential. These solutions have previously been discovered numerically. Mathematically, the problem resembles the pure Einstein-Yang-Mills system for black hole initial conditions, which was well-studied in the 90s. The main difference in the present work is that the coupling to the phantom field adds a non-trivial degree of complexity to the analysis. After proving the existence of the symmetric wormhole solutions, we also present numerical evidence for the existence of asymmetric ones.
\end{abstract}
    
\maketitle

\section{Introduction}

\subsection{Wormholes}

Wormholes are hypothetical models of spacetime that describe traversable tunnels connecting asymptotically flat universes. 
Possibly the simplest example of a wormhole is the so-called \textit{Ellis-Bronnikov wormhole}, modelled as the static spherically symmetric spacetime $\rn^2\times\mathbb S^2$ endowed with the Lorentzian metric
\begin{equation}\label{eq-ellis-bronnikov}
g = - \mathrm dt \otimes \mathrm dt + r_0^2 \cosh^2(\rho) (\mathrm d\rho \otimes \mathrm d\rho + \gS),
\end{equation} 
where $\gS$ denotes the standard round metric of the sphere and $r_0>0$ is a free parameter.
The metric $g$ is asymptotically flat with two ends as $\rho \to\pm\infty$, which is the main characterizing condition of wormholes.
Static spherically symmetric wormholes were defined more generally in the pioneering work of Morris and Thorne \cite{morris-thorne}, and many authors have studied them since, see e.g.\ the extensive book of Visser \cite{visser}.
Throughout this paper, we will employ the following definition of wormholes in the static spherically symmetric setting, which is essentially equivalent to the definitions of the aforementioned authors.

\begin{definition}\label{def-wh}
A (static spherically symmetric) \textit{wormhole} is a spacetime with topology $\rn^2 \times \mathbb S^2$, endowed with a Lorentzian metric of the form
\begin{equation}\label{eq-wh-metric}
g = -e^{2\tau(\rho)} \, \mathrm dt \otimes \mathrm dt + r(\rho)^2 \left( \mathrm d\rho \otimes \mathrm d\rho + \gS \right),
\end{equation}
where $\tau,r \in C^\infty(\rn)$ with $r > 0$, such that $g$ is \textit{asymptotically flat with two ends}, so that 
\begin{equation*}
\lim_{\rho\to\pm\infty} r(\rho) = \infty, \quad 
\lim_{\rho\to\pm\infty} m(\rho) = m_\infty^\pm, \quad  
\lim_{\rho\to\pm\infty} \tau(\rho) = \tau_\infty^\pm,
\end{equation*}
where $m_\infty^\pm, \tau_\infty^\pm \in \rn$ are finite, and $m$ denotes the \textit{Misner-Sharp} (or \textit{Hawking}) \textit{mass}
\begin{equation*}
m = \frac{r}{2} \left(1 - \frac{\dot r^2}{r^2}\right).
\end{equation*}
If $\tau$ and $r$ are additionally even functions of $\rho$, then the wormhole is said to be \textit{symmetric}.
\end{definition}

\begin{remark*}
The asymptotic conditions on the metric coefficients ensure that the induced Riemannian metric on the spacelike slices $\{t = \text{const.}\}$ satisfies
\begin{equation*}
g_{ij} = \delta_{ij} + O(1/r)
\end{equation*} 
in the standard Cartesian coordinates $x^i$.
One could obtain decay rates for the derivatives of the metric $g_{ij}$ by imposing additional boundedness assumptions on the derivatives of the mass $m$, but we do not include this in the general definition. 
\end{remark*}

Note that the asymptotic flatness condition on $r$ implies that there exists \textit{at least one} point at which $r$ has a minimum.
The local minima (resp.\ maxima) of $r$ are usually referred to as wormhole \textit{throats} (resp.\ \textit{bellies}).
If $\rho=\rho_0$ describes a wormhole throat, then we have
\begin{equation*}
\dot r(\rho_0) = 0 \quad\text{and}\quad \ddot r(\rho_0) \geq 0.
\end{equation*}
In fact, one sometimes requires that the latter is strictly positive at a throat, in which case the condition is called the \textit{flare-out condition}, although it is strictly speaking not necessary (indeed, the main point is that a wormhole should describe a connection betwen two asymptotically flat universes).
It should however be noted that a throat with $\ddot r(\rho_0)=0$ is degenerate in the sense that it has vanishing surface gravity, cf.\ \cite{hayward-sph-sym-wh}.

Definition \ref{def-wh} describes a wormhole as a geometric object, but the ultimate goal is of course to obtain a spacetime model. Thus, one should aim to construct wormhole geometries satisfying the Einstein field equations for some accepted matter model.
In doing so, one must show some leniency in deciding which matter models which are to be considered acceptable, as wormholes require support from matter violating the null energy condition \cite{morris-thorne,visser,hayward-dynamic-wh}.
Matter of this type is often called \textit{exotic}, and is the main reason why wormholes are still considered to be hypothetical from a physics perspective.
Possibly the most natural way of modelling exotic matter is by means of a \textit{phantom field} (or \textit{ghost}), which is a scalar field with a reversed sign in front of its energy density in the Lagrangian (this will be clarified in the next subsection).
Such fields often appear in cosmological research, as they could explain the accelerated expansion of the universe \cite{phantom-cosmology}.
Matter models coupled to such a field will henceforth be referred to as \textit{haunted}.

\subsection{Haunted Einstein-Yang-Mills theory}
The haunted theory of gravity that we consider in this work also carries non-abelian electromagnetic charge in the form of a Yang-Mills field, and will be referred to as the \textit{haunted Einstein-Yang-Mills (EYM) theory}.
Let us briefly describe the setting.
Let $M$ be a smooth $n$-dimensional manifold, $G$ a compact Lie group with a bi-invariant metric $\langle\cdot,\cdot\rangle$, and $P$ a principal $G$-bundle over $M$.
We consider the \textit{haunted EYM Lagrangian}
\begin{align}\label{eq-eym-phantom-functional}
(g,\omega,\phi) \mapsto \left(\Scal_g - \Vert F_\omega \Vert^2 + \Vert \mathrm d\phi \Vert^2 \right) \vol_g,
\end{align}
where 
\begin{itemize}
\item $g$ is a semi-Riemannian metric on $M$ with scalar curvature $\Scal_g$ and volume form $\vol_g$, 
\item $\omega$ is a connection on $P$ with curvature two-form $F_\omega =  \tfrac12\,\mathrm dx^\mu \wedge \mathrm dx^\nu \otimes F_{\mu\nu}$,
\item $\phi:M\to\rn$ is a smooth function, called the \textit{phantom field},
\item the norms are given in local coordinates by
\begin{equation*}
\Vert F_\omega \Vert^2 = \tfrac{1}{2} \langle F_{\mu\nu}, F^{\mu\nu} \rangle
\quad\text{and}\quad 
\Vert \mathrm d\phi \Vert^2 = \partial_\mu \phi \, \partial^\mu \phi.
\end{equation*}
\end{itemize} 
Integrating the Lagrangian (\ref{eq-eym-phantom-functional}) over compact subsets of $M$ and varying with respect to $(g,\omega,\phi)$ leads to the (trace-reversed) Einstein field equation(s), the Yang-Mills equation, and the phantom field equation:
\begin{subequations}
\begin{empheq}[left=\empheqlbrace]{align}
\label{eq-efe-general}  \Ric_g - 2\langle F_\omega \otimes F_\omega \rangle + \tfrac1{n - 2} \Vert F_\omega \Vert^2g + \mathrm d\phi\otimes \mathrm d\phi &= 0,\\[0.1cm]
\label{eq-ym-general}  \mathrm d_\omega \star F_\omega &= 0,\\[0.1cm]
\label{eq-ph-general}  {\textstyle\dalembertian_g} \phi &= 0,
\end{empheq}
\end{subequations}
where 
\begin{itemize}
\item $\mathrm d_\omega$ denotes the covariant derivative with respect to the connection $\omega$, 
\item $\star$ is the Hodge star operator with respect to $g$, 
\item ${\textstyle\dalembertian_g} = \nabla^\mu\nabla_\mu$ is the wave operator with respect to $g$, 
\item the tensor field $\langle F_\omega \otimes F_\omega \rangle$ is given in a gauge and local coordinates by
\begin{equation*}
\langle F_\omega \otimes F_\omega \rangle = \frac{1}{2} \langle F_{\mu\alpha}, F\indices{_\nu^\alpha} \rangle \, \mathrm dx^\mu \otimes \mathrm dx^\nu.
\end{equation*}
Note that $\tr_g \langle F_\omega \otimes F_\omega \rangle = \Vert F_\omega \Vert^2$.
\end{itemize}
The procedure of deriving the equations is standard in literature, so we omit it for brevity.
The reader should perhaps only note the reversed sign in front of the scalar field term in (\ref{eq-efe-general}). 

If the connection $\omega$ is flat, so that $F_\omega\equiv 0$, then there is no contribution of the Yang-Mills field to the system and one obtains the haunted Einstein equations, which are solved e.g.\ by the Ellis-Bronnikov wormhole (\ref{eq-ellis-bronnikov}).
On the other hand, setting $\phi\equiv 0$ yields the classical EYM system.

\subsection{Static spherically symmetric $\SU 2$ ansatz}
We are interested in manufacturing wormhole solutions of the system (\ref{eq-efe-general}--\ref{eq-ph-general}).
Following Definition \ref{def-wh}, we therefore assume that our spacetime has the four-dimensional topology $M = \rn^2\times\mathbb S^2$ and is endowed with a Lorentzian metric $g$ of the form
\begin{equation*}
g = -e^{2\tau(\rho)} \, \mathrm dt \otimes \mathrm dt + r(\rho)^2 \left( \mathrm d\rho \otimes \mathrm d\rho + \gS \right).
\end{equation*}
We must also choose a principal $G$-bundle over $M$ and (more importantly) an ansatz for a connection on this bundle.
Here, we wish to assume some further symmetries in order to make the problem feasible.
To this end, observe that the symmetry group $K=\SU 2$ acts naturally on the spherical factor of $M$ by isometries.
A principal $G$-bundle $P \xrightarrow\pi M$ is said to be \textit{spherically symmetric} (or more generally \textit{$K$-symmetric}) if the action of the symmetry group $K=\SU 2$ on $M$ admits a lift to a left action on the total space $P$ by bundle automorphisms.
As far as the gauge group $G$ is concerned, one of the simplest non-trivial yet sufficiently rich (cf.\ next subsection) choices turns out to also be $G = \SU 2$, and we will work with this choice throughout this manuscript.
Now, using the general theory of $K$-symmetric principal $G$-bundles \cite{brodbeck-symm,harnad-bundle-1}, one can show that the equivalence classes of spherically symmetric principal $\SU 2$-bundles over $M$ are in one-to-one correspondence with non-negative integers $0 \leq m\in\mathbb Z$.
Using the same theory, one can also classify spherically symmetric (i.e.\ invariant under the left action of $K=\SU 2$) connections on these bundles.
In fact, in the only non-trivial case%
\footnote{For $m\not=1$, any spherically symmetric connection is necessarily $\mathfrak u(1)$-valued, so the $\SU 2$ Yang-Mills theory degenerates to the classical Maxwell theory.}
$m=1$, a gauge can be constructed so that a general spherically symmetric connection has the well-known \cite{bartnik-connection,breit-forg-mais,bart-mckin} form
\begin{equation*}
\omega = \, (w\,\mathrm d\theta - v \sin\theta\, \mathrm d\varphi)\otimes X
+ (v\,\mathrm d\theta + w\sin\theta\,\mathrm d\varphi) \otimes Y
+ (a\,\mathrm dt + b\,\mathrm d\rho + \cos\theta \, \mathrm d\varphi)\otimes Z,
\end{equation*}
where $(\theta,\varphi) \in \mathbb S^2$ are the standard spherical coordinates coordinates, $w,v,a,b$ are smooth functions of $(t,\rho)$, and $X,Y,Z$ is the standard basis for the Lie algebra $\su 2$ given by 
\begin{equation*}
X =- 
\frac{i}{2}
\begin{bmatrix}
0 & 1\\
1 & 0
\end{bmatrix},
\quad
Y =-
\frac{i}{2}
\begin{bmatrix}
0 & -i\\
i & 0
\end{bmatrix},
\quad
Z =
-\frac{i}{2}
\begin{bmatrix}
1 & 0\\
0 & -1
\end{bmatrix}.
\end{equation*}
Since we are working in the static setting, we will additionally assume that the functions $w,v,a,b$ depend only on $\rho$ and not on $t$.
This leads to the following simplifications:
\begin{itemize}
\item We may set $b\equiv 0$ by a gauge transformation depending only on $\rho$;
\item The Yang-Mills equation (\ref{eq-ym-general}) in this setting implies that $w$ and $v$ are constant multiples of one another so that we may also set $v \equiv 0$ by making a constant gauge transformation;
\item Similarly as in \cite{bizon-popp}, it can be shown \cite{thesis} that if $a$ is not identically zero and the metric $g$ is asymptotically flat with two ends as in Definition \ref{def-wh}, then the Yang-Mills equation (\ref{eq-ym-general}) necessitates $w \equiv 0$, in which case the haunted EYM system reduces to the haunted Einstein-Maxwell system and the solutions of the equation system can be obtained explicitly \cite{charged-wh}. We therefore do not lose generality by also setting $a \equiv 0$.  
\end{itemize}
Thus we work with the so-called \textit{purely magnetic} ansatz
\begin{equation}\label{eq-ym-connection}
\omega = w(\rho) \left( \mathrm d\theta \otimes X + \sin\theta\,\mathrm d\varphi \otimes Y \right) + \cos\theta\,\mathrm d\varphi \otimes Z,
\end{equation}
depending on a single function $w$.
The corresponding curvature form is given by
\begin{equation*}
F_\omega = \dot w\,\mathrm d\rho\wedge [\mathrm d\theta \otimes X + \sin\theta\,\mathrm d\varphi \otimes Y]
- (1-w^2)\sin\theta\,\mathrm d\theta\wedge\mathrm d\varphi \otimes Z.
\end{equation*}

Last but not least, we also assume that the phantom field $\phi$ is also static and spherically symmetric, so that it only depends on $\rho$.
For this setting, the phantom field equation (\ref{eq-ph-general}) admits the general solution formula
\begin{equation}\label{eq-ph-sol}
\phi(\rho) = \phi_0 + \pi_0\int_0^\rho \frac{1}{re^{\tau}}, \qquad \phi_0, \pi_0 \in \rn,
\end{equation}
which can then be inserted directly back into the Einstein field equation (\ref{eq-efe-general}) to eliminate $\phi$ from the system.

Putting everything together, we see that the haunted EYM system (\ref{eq-efe-general}--\ref{eq-ph-general}) is equivalent to
\begin{subequations}
\begin{empheq}[left=\empheqlbrace]{align}
\label{eq-ym}\displaystyle
\ddot w + \left(\dot\tau - \frac{\dot r}{r}\right)\dot w + w(1-w^2) &= 0,
\\[0.1cm]
\label{eq-efe-tau}\displaystyle
\ddot\tau + \dot\tau^2 + \frac{\dot r\dot\tau}{r} - \frac{2\dot w^2}{r^2} - \frac{(1-w^2)^2}{r^2} &= 0,
\\[0.1cm]
\label{eq-efe-r}\displaystyle 
\frac{\ddot r}{r} + \frac{\dot r\dot\tau}{r} - 1 + \frac{(1-w^2)^2}{r^2} &= 0,
\\[0.1cm]
\label{eq-efe-constraint}\displaystyle
1 + \frac{2\dot w^2}{r^2} - \frac{(1-w^2)^2}{r^2} - \frac{\dot r}{r} \left(\frac{\dot r}{r} + 2\dot\tau \right) &= \frac{\pi_0^2}{(re^\tau)^2},
\end{empheq}
\end{subequations}
where (\ref{eq-ym}) is the Yang-Mills equation and the remaining equations arise from the Einstein field equation (\ref{eq-efe-general}).
The free parameter $\pi_0$ can be interpreted as the charge of the phantom field (\ref{eq-ph-sol}), and setting $\pi_0 = 0$ yields the classical EYM theory.
In fact, one easily verifies that the equation (\ref{eq-efe-constraint}) is implied by the other three equations (\ref{eq-ym}--\ref{eq-efe-r}), assuming that it holds at least at one point.
Hence, we may view it as a constraint on the initial conditions.

\subsection{Past and present}

The classical $\SU 2$ EYM system, i.e.\ (\ref{eq-ym}--\ref{eq-efe-constraint}) with $\pi_0=0$, was extensively studied in the late 20th century, cf.\ \cite{volkov-galtsov}.
This was initiated by Bartnik and McKinnon \cite{bart-mckin} and Bizoń \cite{bizon-bh}, when they numerically found particle-like and black hole solutions to these equations.
It has since been mathematically shown that these equations in fact admit infinite sequences of particle-like and black hole solutions.
This was first done in the series of papers \cite{smol-wass-1,smol-wass-2,smol-wass-3} by Smoller, Wasserman et al.
A complete classification of the solutions to the equations was later provided in \cite{breit-forg-mais} by Breitenlohner, Forg\'acs and Maison, which also allowed for a somewhat more elegant existence proof.
In a later work \cite{yang-mills-dilaton}, Maison also performed a similar analysis of the $\SU 2$ Yang-Mills-dilaton system, which can interestingly be put into a similar form as the EYM system and in fact also allows for an infinite sequence of solutions, although the proof is more involved, despite the simpler appearance of the system.

As for the haunted $\SU 2$ EYM framework, i.e.\ with $\pi_0\not=0$ in (\ref{eq-ym}--\ref{eq-efe-constraint}), sequences of wormhole solutions have been constructed numerically by Hauser, Ibadov, Kleihaus, Kunz \cite{hairy-wormholes}, and the main purpose of the present manuscript is to mathematically prove the existence of these solutions.

\begin{theoremintro}\label{thm-sym-wh}
For each $r_0 > 0$, there exists a sequence 
$$\{(\tau^{(n)}, r^{(n)}, w^{(n)})\}_{n\geq0}$$
of symmetric wormhole solutions to the system {\normalfont (\ref{eq-ym}--\ref{eq-efe-constraint})}, satisfying $r^{(n)}(0) = r_0$. 
The Yang-Mills potential $w^{(n)}$ satisfies $|w^{(n)}| \to 1$ as $\rho\to\pm\infty$, has $n$ zeros, and is symmetric, i.e.\ even or odd with the same parity as $n$.
Furthermore:
\begin{itemize}
\item For $r_0 \geq 1$, the wormholes have a single throat at $\rho=0$.
\item For $r_0 < 1$ and even (resp.\ odd) $n$, the wormholes have a non-degenerate throat (resp.\ belly) at $\rho=0$.
In particular, the wormholes have at least two throats for odd $n$.
\end{itemize}
\end{theoremintro}

\begin{remark*}
By a wormhole solution of (\ref{eq-ym}--\ref{eq-efe-constraint}), we mean a solution such that the metric coefficients $\tau$ and $r$ meet the criteria of Definition \ref{def-wh}.
We also recall that symmetry of a wormhole here means that the metric coefficients $\tau$ and $r$ are even functions.
\end{remark*}

The wormhole solutions from Theorem \ref{thm-sym-wh} bear a lot of resemblance to the black hole solutions of the classical $\SU 2$ EYM equations discussed above.
In fact, the proof of Theorem \ref{thm-sym-wh} follows closely the blueprint laid forth in the already mentioned work \cite{breit-forg-mais}, which essentially boils down to using a shooting method to construct the desired sequences of solutions.
However, even though one might expect that this procedure requires only a simple modification of the already existing proofs for the classical EYM system, it turns out that the phantom field destroys certain properties that the classical EYM system has, which also makes the proofs more difficult.
In particular, one of the main difficulties here is that certain quantities lose their monotonicity properties and could even oscillate, which makes the behaviour of the solutions somewhat analogous to those of the aforementioned Yang-Mills-dilaton system \cite{yang-mills-dilaton}
The shooting method in our case also requires the development of certain new techniques, in particular in the proof of the existence of wormholes whose Yang-Mills potential has an odd number of zeros (the analogues of these solutions were not interesting in the context of classical EYM theory, and consequently were not studied).

\subsection{First-order initial value problem}
Before proceeding with a description of the proof of Theorem \ref{thm-sym-wh}, we rewrite (\ref{eq-ym}--\ref{eq-efe-constraint}) as a first-order system (following \cite[\S 6]{breit-forg-mais}) by defining the new dependent variables
\begin{equation}\label{eq-breit-variables}
N = \frac{\dot r}{r}, \quad U = \frac{\dot w}{r}, \quad \kappa = \dot\tau + N, \quad \zeta = \frac{\pi_0}{re^\tau}.
\end{equation}
Thus, the system (\ref{eq-ym}--\ref{eq-efe-constraint}) transforms to
\begin{subnumcases}{}
\dot r = rN, \label{eq-r}\\[0.1cm]
\dot N = 1 - \tfrac{(1-w^2)^2}{r^2} - \kappa N, \label{eq-N}\\[0.1cm]
\dot w = rU, \label{eq-w}\\[0.1cm]
\dot U = -(\kappa-N)U - \tfrac{w(1-w^2)}{r}, \label{eq-U}\\[0.1cm]
\dot \kappa = 1 + 2U^2 - \kappa^2, \label{eq-kappa} \\[0.1cm]
\dot \zeta = -\kappa \zeta,\label{eq-zeta}
\end{subnumcases}
together with the constraint (\ref{eq-efe-constraint})
\begin{equation}
\zeta^2 = 1 + 2U^2 - \frac{(1-w^2)^2}{r^2} - 2\kappa N + N^2. \label{eq-kappa-constraint}
\end{equation}
Using this constraint, we can also rewrite (\ref{eq-N}) as
\begin{equation}\label{eq-N-2}
\dot N = (\kappa-N)N - 2U^2 + \zeta^2 = \frac12 \left(1 - N^2 - 2U^2 - \frac{(1-w^2)^2}{r^2} + \zeta^2\right).
\end{equation}

\begin{remark*}
Even though the equation (\ref{eq-zeta}) for the phantom term $\zeta$ is decoupled from the rest of the system (\ref{eq-r}--\ref{eq-kappa}), $\zeta$ still appears in the constraint (\ref{eq-kappa-constraint}), and one uses this constraint repeatedly throughout the analysis.
E.g.\ we will often use the alternate forms (\ref{eq-N-2}) of the equation for $N$.
We therefore keep $\zeta$ as a dependent variable.
\end{remark*}

The system (\ref{eq-r}--\ref{eq-zeta}) is regular as long as $r>0$.
Hence, for any choice of initial conditions with $r(0)>0$, there exists a unique local (real) analytic solution to the problem by standard ODE theory.

Note that the radial function $r$ of a wormhole spacetime requires at least one point at which $r$ is stationary (cf.\ Definition \ref{def-wh}), so it is natural to assume the initial value $N(0) = 0$.
Note that
\begin{equation*}
\dot N(0) = 1 - \frac{(1-w(0)^2)^2}{r(0)^2}, 
\end{equation*}
so that if $r(0) + w(0)^2 > 1$, then the initial conditions describe a wormhole throat.
However, since the wormhole could have several throats, the stationary point could also describe a belly, so we do not enforce this condition.
In fact, we will mainly focus on symmetric wormholes, and in some cases they will be symmetric around a belly rather than a throat.
For the constraint (\ref{eq-kappa-constraint}) to be satisfied, we also need to assume
\begin{equation*}
\zeta(0)^2 = 1 + 2U(0)^2 - \frac{(1-w(0)^2)^2}{r(0)^2},
\end{equation*}
which can only be satisfied if the right hand is non-negative.
The initial value $\kappa(0)$ is not a priori constrained in any way, other than the requirement that it should be finite (note that, for a black hole horizon, one would need $\kappa(0)=\infty$ \cite[\S 6]{breit-forg-mais}).
However, the analysis of the equations is considerably simplified by making the assumption $\kappa(0) = 0$, which we will do throughout the manuscript.
Finally, the initial values $w(0)$ and $U(0)$ for the Yang-Mills potential are allowed to be arbitrary, but we assume for simplicity that $|w(0)|\leq 1$ because the solutions that exit the strip $|w|\leq 1$ will turn out to be ill-behaved.

Thus, we supplement the system (\ref{eq-r}--\ref{eq-zeta}) with the initial conditions
\begin{equation}\label{eq-initial-vals}
\begin{array}{lll}
r(0) = r_0, & w(0) = w_0, & \kappa(0) = 0, \\[0.2cm]
N(0) = 0, & U(0) = U_0, & \zeta(0) = \sqrt{E_0},
\end{array}
\end{equation}
such that the parameters $(r_0, w_0, U_0)$ belong to the \textit{set of admissible initial data}
\begin{equation}\label{eq-adm-vals}
\mathscr{I}_0 = \left\{ (r_0, w_0, U_0) \in \mathbb R^3 \;\Big|\; r_0 > 0, \; |w_0|\leq1, \; E_0 \geq 0\right\},
\end{equation}
where we denote by
\begin{equation*}
E_0 = 1 + 2U_0^2 - \frac{(1-w_0^2)^2}{r_0^2}
\end{equation*}
the initial value of the \textit{energy}, a quantity which will turn out to have useful properties.

\begin{remark}\label{remark-continuity-initial-data}
This particular choice of initial conditions depends continuously (note the square root in the initial condition for $\zeta$) on the initial data $(r_0,w_0,U_0) \in \mathscr I_0$, so the solutions of the initial value problem also depend continuously on the initial data.
\end{remark}

\begin{remark}\label{remark-tau0-free}
Since we will eventually want to transform back to the original variables $(\tau, r, w)$, we finally wish to recall here that the initial value $\tau_0$ of the temporal metric coefficient $\tau$ is kept free and instead we fix the phantom charge
\begin{equation}\label{eq-phantom-charge}
\pi_0 = r_0e^{\tau_0}\sqrt{E_0},
\end{equation} 
in order to remain compatible with $\zeta = \pi_0/(re^\tau)$.
\end{remark}

\begin{remark}\label{remark-symmetries}
Note that the equations (\ref{eq-r}--\ref{eq-zeta}) possess the symmetries
\begin{equation*}
(w,U) \mapsto -(w,U) \quad\text{and}\quad (\rho, N, U, \kappa) \mapsto -(\rho, N, U, \kappa). 
\end{equation*}
Thus, the general solution to the initial value problem with initial data $(r_0,w_0,U_0) \in \mathscr I_0$ satisfies the identities
\begin{align*}
& \;  (r,\phantom{-}N,\phantom{-}w,\phantom{-}U,\phantom{-}\kappa,\phantom{-}\zeta)(\phantom{-}\rho,\phantom{-}r_0,\phantom{-}w_0,\phantom{-}U_0) \\[0.1cm]
=& \; (r,\phantom{-}N,-w,-U,\phantom{-}\kappa,\phantom{-}\zeta)(\phantom{-}\rho,\phantom{-}r_0,-w_0,-U_0) \\[0.1cm]
=& \; (r,-N,\phantom{-}w,-U,-\kappa,\phantom{-}\zeta)(-\rho,\phantom{-}r_0,\phantom{-}w_0,-U_0)\\[0.1cm]
=& \; (r,-N,-w,	\phantom{-}U,-\kappa,\phantom{-}\zeta)(-\rho,\phantom{-}r_0,-w_0,\phantom{-}U_0).
\end{align*}
Solutions with either $w_0=0$ or $U_0=0$ are therefore symmetric, since in that case $r,\zeta$ are even functions, while $N,\kappa$ are odd.
Moreover:
\begin{itemize}
\item  If $U_0=0$, then $w$ is even, so these are often referred to as \textit{even} solutions.
\item  If $w_0=0$, then $w$ is odd, so these are often called \textit{odd} solutions.
\end{itemize}
In particular, we see here that $w_0=U_0=0$ implies $w\equiv U \equiv 0$.
With these symmetries in mind, we see that it suffices to study the solutions for $\rho \geq 0$.
\end{remark}

\subsection{Proof summary and organization}

To prove Theorem \ref{thm-sym-wh}, we wish to show the existence of initial data of the form $(r_0,w_0,0)$ and $(r_0,0,U_0)$, for which the solution of (\ref{eq-r}--\ref{eq-zeta},\ref{eq-initial-vals}) is defined for all $\rho \geq 0$ and the dependent variables have the following limits as $\rho \to \infty$:
\begin{equation*}
r \to \infty, \quad  r(1-N^2) \to \beta, \quad r\zeta \to \alpha, \quad |w|\to 1, 
\end{equation*}
where 
\begin{equation*}
0 < \alpha = \pi_0e^{-\tau_\infty} < \infty \quad\text{and}\quad 0 \leq \beta = 2m_\infty < \infty.
\end{equation*}
The proof is based on a shooting method, in the sense that we consider infima over suitably chosen subsets of the set of initial data $\mathscr I_0$, with the expectation that these infima correspond to initial data describing the desired global solutions.
The proof consists of three main ingredients:

\begin{itemize}
\item \textit{Classification theorem} (\S \ref{sec-classification}): It turns out that, just as in the setting of \cite{breit-forg-mais}, the initial value problem admits three possible types of solutions, classified according to the behaviour of $N$.
In fact, $N$ plays a more important role than the other dependent variables, since it will turn out the solution can only stop existing if $N \to -\infty$, and even in that case the remaining dependent variables stay bounded. 
Of particular importance will be the dichotomy between the regions $N+\zeta < 0$ and $N+\zeta \geq 0$, which also demonstrates that the phantom field $\zeta$ affects the solutions in a non-trivial way.
Indeed, any orbit entering the former region will turn out to be singular, whereas the orbits staying in the latter region will be well-defined for all $\rho\geq0$.
The strip $|w|\leq 1$ will also play a major role, and any orbit exiting it will also turn out to be singular.
These facts will allow us to work in the region $\{N+\zeta \geq 0,\,|w|\leq 1\}$, in which the solutions are generally well-behaved.
We will then study the asymptotic behaviour of the dependent variables, depending on the behaviour of $N$ (in particular the number of its zeros and its sign near infinity), and show that only a handful of cases can occur.
One of the biggest difficulties will be the fact that, prima facie, we do not know whether the dependent variables even have limits at infinity, so that various techniques will be applied to extract these limits (note also Remark \ref{remark-bfm-gap} below).
An important tool for this will be a simple result called \textit{Barbălat's lemma}.

\begin{lemma}[Barbălat's lemma \cite{barbalat}]\label{lemma-barbalat}
Suppose that $f:[0,\infty) \to \rn$ is a uniformly continuous function such that
\begin{equation*}
\lim_{y \to \infty} \int_0^y f(x)\,\mathrm dx \quad \text{exists and is finite.}
\end{equation*}
Then $f(x) \to 0$ as $x\to\infty$.
\end{lemma}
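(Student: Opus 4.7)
The plan is a standard contradiction argument: assume $f(x) \not\to 0$ as $x\to\infty$, exploit uniform continuity to produce a disjoint sequence of short intervals on which $f$ is bounded below in absolute value by a positive constant of fixed sign, and then contradict the Cauchy criterion for the convergence of the improper integral.

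More concretely, I would first suppose, for contradiction, that $f$ does not tend to zero. Then there exist $\varepsilon > 0$ and a sequence $x_n \to \infty$ with $|f(x_n)| \geq \varepsilon$ for all $n$. By the pigeonhole principle, after passing to a subsequence I may assume that $f(x_n)$ has a fixed sign (say positive, the negative case being symmetric) and that $x_{n+1} - x_n \geq 1$, so that the short intervals constructed below are pairwise disjoint.

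Next, I would invoke uniform continuity of $f$ to obtain $\delta \in (0,1)$ such that $|x - y| < \delta$ implies $|f(x) - f(y)| < \varepsilon/2$. Then for every $x \in [x_n, x_n + \delta]$ one has $f(x) > \varepsilon/2$, and hence
\[
\int_{x_n}^{x_n+\delta} f(x)\,\mathrm dx \;\geq\; \frac{\varepsilon\delta}{2}.
\]
Since these intervals are pairwise disjoint, the partial integrals $y\mapsto \int_0^y f$ gain at least $\varepsilon\delta/2 > 0$ on each of them. Thus the Cauchy criterion fails for these partial integrals, contradicting the assumed existence of $\lim_{y\to\infty}\int_0^y f$.

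There is no real obstacle: the only minor subtlety is arranging the subsequence so that $f(x_n)$ has a consistent sign, which is handled trivially by the pigeonhole principle. It is perhaps worth remarking that the uniform continuity hypothesis is essential, as one can easily construct an integrable function consisting of taller and taller but shorter and shorter spikes that fails to converge to zero.
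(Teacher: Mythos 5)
Your proof is correct and complete: the contradiction argument via uniform continuity and the Cauchy criterion for the convergent improper integral is the standard proof of Barb\u{a}lat's lemma, and every step (the pigeonhole reduction to a fixed sign, the lower bound $f>\varepsilon/2$ on $[x_n,x_n+\delta]$, and the resulting jump $\int_{x_n}^{x_n+\delta} f \geq \varepsilon\delta/2$ contradicting convergence of $\int_0^y f$) goes through. The paper itself gives no proof of this lemma --- it is quoted as a known result with a citation to the original reference --- so there is nothing to compare against; for what it is worth, the disjointness of the intervals is not actually needed, since the single estimate $\bigl|\int_{x_n}^{x_n+\delta} f\bigr|\geq\varepsilon\delta/2$ for a sequence $x_n\to\infty$ already violates the Cauchy criterion.
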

We recall that a simple sufficient condition for the uniform continuity of $f$ is that its derivative is bounded (provided that $f$ is differentiable).
\item \textit{Neighbourhood theorem} (\S \ref{sec-neighbourhood-theorem}): For each solution type, we will study the solutions with nearby initial data.
Such a result will provide us with a mechanism of controlling which solution types the aforementioned infima can describe.
The proof is mainly based on the fact that the solutions depend continuously on the initial data, but certain cases require a closer analysis (again, note also Remark \ref{remark-bfm-gap} below).
\item \textit{Energy theorem} (\S \ref{sec-energy-theorem}): We study solutions with small resp.\ large initial energy $E_0$. 
This will provide us with an upper and lower bound for the shooting method.
The small case $E_0=0$ will follow by a simple analysis of the initial conditions.
The case when $E_0$ is large will, however, be much more involved.
In fact, this case has not been studied in the aforementioned citations, so the proof given here involves completely new techniques developed specifically for the problem corresponding to wormhole initial conditions. 
\end{itemize}

With these three results in hand, Theorem \ref{thm-sym-wh} will follow by a simple shooting method argument, the details of which we leave for \S \ref{sec-construction}.

\begin{remark}\label{remark-bfm-gap}
We would like to point out a potential gap in some of the the proofs given in \cite{breit-forg-mais} for the analogues of the above mentioned theorems.
Therein, the authors study the asymptotic behaviour of the dependent variables using some heavy machinery of dynamical systems, namely the theory of structurally stable vector fields \cite[\S 1.3]{anosov-arnold-dynamical} (particularly in their analogues of the classification and the neighbourhood theorem).
This is done by viewing the Yang-Mills equation
$$ \ddot w + (\kappa-2N)\dot w + w(1-w^2) = 0 $$
as a perturbation of the Yang-Mills equation in the flat limit ($\kappa\to 1, \, N\to1$) and the cylindrical limit ($\kappa\to1,\, N \to 0$) i.e.,
$$ \ddot w \pm \dot w + w(1-w^2) = 0. $$
The latter equation is the so-called \textit{Duffing type equation}, which can be studied using the elementary theory of planar autonomous ODE systems.
While this certainly provides a good heuristic overview of how the solutions should behave, the author of the present manuscript does not understand how one can mathematically apply the theory of structural stability in this context.
Aside from some technical difficulties such as the fact that the vector field corresponding to the equation is tangential to (at least some points of) the boundary of any compact set containing the equilibria (whereas the theory assumes transversality at the boundary), the entire theory of structural stability only applies to \textit{autonomous} perturbations of autonomous planar dynamical systems.
On the other hand, the idea here is to consider $(\kappa-2N+1)\dot w$ as a small (for large $\rho$) perturbation of the flat Yang-Mills equation, with $\kappa$ and $N$ being interpreted as fixed externally given functions.
But such a perturbation is clearly non-autonomous, so that the theory of structural stability cannot be applied directly.
The author was also unable to find other references containing results that could be applied in this context.
In view of this, we take on a more raw analytical approach in the present work, which arguably also simplifies the proofs.
We would like to point out that the methods used in this work can also be applied in the context of the above mentioned particle-like and black hole settings, if this is indeed deemed necessary.
\end{remark}

\section{Classification of solutions}
\label{sec-classification}

The goal of this section is to show the following classification result.

\begin{theorem}[Classification theorem]\label{thm-classification}
Any solution of the initial value problem {\upshape (\ref{eq-r}--\ref{eq-zeta}, \ref{eq-initial-vals})} with respect to fixed initial data in $\mathscr I_0$ belongs to one of the following classes:
\begin{enumerate}
\item[(i)] Singular: There exists a finite point $\rho_\infty > 0$ such that
\begin{equation*}
r \to 0, \quad N \to -\infty \quad\text{as}\quad \rho\to \rho_\infty,
\end{equation*}
and the remaining dependent variables remain bounded as $\rho \to \rho_\infty$.
\item[(ii)] Asymptotically cylindrical: The solution is defined for all $\rho \geq 0$, stays in the region $|w|\leq 1$, and the dependent variables have the following limits at infinity:
\begin{equation*}
r\to 1, \quad N \to 0, \quad \zeta\to 0, \quad \kappa \to 1.
\end{equation*}
Furthermore, either
\vskip0.1cm
\begin{itemize}
\item $r \equiv 1$ and $w \equiv 0$, or
\item $r_0 < 1, \; (w,U) \to (0,0)$ as $\rho \to\infty$, and $w$ has infinitely many zeros.
\end{itemize}
\item[(iii)] Asymptotically flat: The solution is defined for all $\rho \geq 0$, stays in the region $|w|\leq 1$, and the dependent variables have the following limits at infinity:
\begin{equation*}
r\to \infty, \quad r(1-N^2) \to \beta, \quad r\zeta \to \alpha, \quad \kappa \to 1,
\end{equation*}
where $0 < \alpha < \infty$ and $0\leq \beta < \infty$.
Furthermore, either
\vskip0.1cm
\begin{itemize}
\item $r_0 > 1$ and $w \equiv 0$, or
\item  $(w,rU) \to (\pm 1,0)$ as $\rho \to \infty$.
\end{itemize}
\end{enumerate}
\end{theorem}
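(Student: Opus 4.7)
The plan is to organize the phase space around two key partitions, the sign of $N+\zeta$ and whether $|w|\leq 1$, and to prove that any orbit which ever leaves the region $\{|w|\leq 1,\,N+\zeta\geq 0\}$ must end up in case (i), while any orbit which stays in it globally must fall into case (ii) or (iii). The skeleton is: (A) a finite-time blow-up lemma for orbits entering $\{N+\zeta<0\}$; (B) a ``forbidden region'' lemma showing that leaving the strip $|w|\leq 1$ eventually forces $N+\zeta<0$; (C) global existence and uniform bounds inside the good region; and (D) extraction of asymptotic limits through integrability arguments and Barbălat's lemma.

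For (A), note that $\zeta\geq 0$ because $\dot\zeta=-\kappa\zeta$ and $\zeta(0)\geq 0$, so $N+\zeta<0$ implies $N<0$ and $\dot r=rN<0$. Using the rewritten equation
\[ \dot N=\tfrac12\bigl(1-N^2-2U^2-(1-w^2)^2/r^2+\zeta^2\bigr), \]
one first bootstraps local bounds on $\kappa$ from $\dot\kappa=1+2U^2-\kappa^2$ and the constraint, and then bounds $\zeta$ on bounded $\rho$-intervals via $\dot\zeta=-\kappa\zeta$. This yields a differential inequality $\dot N\leq -\tfrac14 N^2+C$ once $|N|$ is sufficiently large, forcing $N\to-\infty$ at some finite $\rho_\infty$; correspondingly $\log r=\int N$ diverges to $-\infty$, so $r\to 0$. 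Standard Grönwall estimates on the remaining equations keep $w,U,\kappa,\zeta$ bounded on $[0,\rho_\infty)$ despite the blow-up of $(1-w^2)^2/r^2$. For (B), one exploits the fact that the Yang-Mills restoring force $-w(1-w^2)/r$ pushes $w$ further away from $\pm 1$ once $|w|>1$, so a suitably chosen Yang-Mills energy is monotone enough to conclude that $U^2+(1-w^2)^2/r^2$ grows; feeding this back into the equation for $\dot N$ above drives $N$ down rapidly while $\zeta$ decays, so $N+\zeta<0$ is reached in finite $\rho$, reducing to (A).

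For (C), in the good region the constraint, together with the bound $|w|\leq 1$ and $\dot\kappa=1+2U^2-\kappa^2$, gives local-in-$\rho$ bounds on all dependent variables; the assumption $N+\zeta\geq 0$ directly prevents the finite-time mechanism of (A), so the solution extends to $[0,\infty)$. For (D), the key identity $\dot\kappa=\zeta^2+(1-w^2)^2/r^2-(\kappa-N)^2$, obtained by eliminating $2U^2$ via the constraint, together with boundedness of $\kappa$, yields the $L^1$ integrability of $(\kappa-N)^2$, $\zeta^2$, and $(1-w^2)^2/r^2$ on $[0,\infty)$. Since all three expressions have bounded derivatives along the flow, Barbălat's lemma forces
\[ \kappa-N\to 0,\qquad \zeta\to 0 \text{ or } \zeta^2\cdot r \text{ bounded},\qquad (1-w^2)/r\to 0,\qquad U\to 0. \]
Splitting according to whether $r$ stays bounded or not then separates (ii) from (iii): in the bounded case $N\to 0$ and the constraint (\ref{eq-kappa-constraint}) with the above limits pins $r\to 1$, $\kappa\to 1$; in the unbounded case the constraint multiplied by $r$ produces the finite limits $\beta=\lim r(1-N^2)$ and $\alpha=\lim r\zeta$.

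The main obstacle is the fine asymptotic behaviour of $w$ in the two non-trivial cases. In case (iii) one must rule out persistent oscillation of $w$ and show $w\to\pm 1$ with $rU\to 0$; this is delicate because the phantom term $\zeta^2$ destroys the monotonicity of several Lyapunov-type quantities that worked in the classical EYM analysis. I plan to treat this by combining the integrability of $\int^\infty(1-w^2)^2/r^2$ with a quantitative rewriting of the Yang-Mills equation (\ref{eq-ym}) as a damped oscillator and an explicit barrier argument showing that no $w$-orbit can accumulate at a value distinct from $\{0,\pm 1\}$, together with the exceptional cases where $w\equiv 0$ is consistent with the constraint (forcing $r_0>1$ in case (iii), $r_0=1$ in case (ii)). In case (ii) one must conversely show that any non-trivial $w$ must have infinitely many zeros, which follows from the near-cylindrical Duffing-type linearization $\ddot w+\dot w+w=0$ at the limit point $w=0$, made rigorous by a sub/supersolution argument rather than by invoking structural stability, in line with Remark \ref{remark-bfm-gap}.
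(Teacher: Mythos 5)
Your overall architecture matches the paper's: the dichotomy between $N+\zeta<0$ (singular) and $N+\zeta\geq 0$, the role of the strip $|w|\leq 1$, finite-time blow-up of $N$ in the bad region, and global existence plus asymptotic extraction in the good region. Steps (A)--(C) are essentially the paper's Lemmas on singular orbits and global boundedness (the paper's version of (A) is cleaner: it works with $\xi=N+\zeta$ directly and gets $\dot\xi\leq-\xi^2$ from $\kappa,\zeta\geq 0$, avoiding your bootstrap on $\kappa$). However, step (D) contains a genuine error. The identity $\dot\kappa=\zeta^2+(1-w^2)^2/r^2-(\kappa-N)^2$ is correct, but boundedness of $\kappa$ only gives a uniform bound on the integral of the \emph{signed combination} $(\kappa-N)^2-\zeta^2-(1-w^2)^2/r^2$; it does not yield separate $L^1$ integrability of the three non-negative terms. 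Indeed, your claimed conclusions are false precisely in the asymptotically cylindrical case: there $\kappa\to 1$, $N\to 0$, $w\to 0$, $r\to 1$, so $(\kappa-N)^2\to 1$ and $(1-w^2)^2/r^2\to 1$, neither is integrable, and the limits ``$\kappa-N\to 0$'' and ``$(1-w^2)/r\to 0$'' that you derive via Barbălat contradict case (ii) of the very statement you are proving. Since the subsequent separation of (ii) from (iii) and the analysis of $w$ both feed on these (false) limits, the asymptotic half of the proof does not go through as written.

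A second, smaller gap: splitting on whether ``$r$ stays bounded or not'' presupposes that $r$ has a limit at infinity, which is not automatic since $\log(r/r_0)=\int_0^\rho N$ could a priori oscillate. The paper resolves this by writing $\int N=\int(N+\zeta)-\int\zeta$, where the first integrand is non-negative in the good region and $\zeta\leq\sqrt{E_0}\sech\rho$ is integrable, so $\int N$ exists with finite negative part. The paper then organizes the asymptotics not around the $\dot\kappa$ identity but around the behaviour of $N$ (infinitely many zeros; finitely many with $N<0$ after the last, which is shown to be impossible in the good region; finitely many with $N>0$ after the last), extracting limits via the monotonicity $\dot F=-4(\kappa-2N)\dot w^2$ of the autonomous energy and Barbălat's lemma. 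You would need to replace your (D) with an argument of this type, or find another mechanism that correctly distinguishes the cylindrical regime (where $\kappa-N\to1$) from the flat one (where $\kappa-N\to0$).
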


\begin{remark*}
Note that $r\zeta = \pi_0e^{-\tau}$, so that $\tau$ has a finite limit at infinity in case (iii) provided that $\pi_0\not=0$.
By the constraint (\ref{eq-kappa-constraint}), this limit can be calculated as
\begin{equation*}
\tau_\infty = -\frac12\log\left[\frac{1}{\pi_0^2}\lim_{\rho\to\infty} r^2(1-2\kappa N + N^2)\right],
\end{equation*}
but does not seem to admit a closed form in terms of the initial conditions.
\end{remark*}

This classification is highly reminiscent of the one given in \cite[Theorem 16]{breit-forg-mais},
where the Einstein-Yang-Mills equations (with no phantom field) are studied for particle-like and black hole initial conditions.
The proof in our context is, however, more involved in view of the increased complexity of the behaviour of $N$.
In fact, the main feature of the phantom system (as opposed to the phantomless one) is that $N=\dot r/r$ is allowed to change sign without the orbit being singular.


\subsection{Trivial solutions}\label{subsec-trivial-solutions}

Note that for $w_0 \in \{-1, 0, 1\}$ and $U_0 = 0$, we have that $w$ is identically constant $w \equiv w_0$ and hence also $U \equiv 0$. 
In this case we can explicitly solve
\begin{equation*}
\kappa(\rho) = \tanh(\rho),\qquad
\zeta(\rho) = \sqrt{E_0}\sech(\rho).
\end{equation*}
The remaining non-trivial equation is the Riccati type equation
\begin{equation*}
\dot N =  E_0\sech^2(\rho) + \tanh(\rho)N - N^2 = 1 - \frac{(1-w_0^2)^2}{r^2} - \tanh(\rho)N.
\end{equation*}
If $(w_0, U_0) = (\pm 1, 0)$, then $w\equiv \pm 1$ and we can also get the explicit solutions
\begin{equation*}
r(\rho) = r_0\cosh(\rho), \quad N(\rho) = \tanh(\rho),
\end{equation*}
for any $r_0 > 0$, corresponding to the Ellis-Bronnikov family of wormholes (\ref{eq-ellis-bronnikov}).

For solutions with $w_0 = U_0 = 0$, we have $w\equiv 0$.
In this case, the Yang-Mills connection (\ref{eq-ym-connection}) is $\mathfrak u(1)$-valued and the $\SU 2$ Yang-Mills theory degenerates to the classical Maxwell theory with gauge group $\mathbf U(1)$. 
These solutions can be computed explicitly as
\begin{equation*}
r(\rho) = r_0 \cosh(\rho) \cos\left( \frac{1}{r_0} \arctan(\sinh\rho) \right).
\end{equation*}
Note that the necessary assumption $E_0 \geq 0$ implies that $r_0 \geq 1$.
There are two separate cases:
\begin{itemize}
\item If $r_0 = 1$, then $r \equiv 1$ and thus $N\equiv 0$. This solution is asymptotically cylindrical.
\item If $r_0 > 1$, then the solutions are asymptotically flat.
They represent the abelian wormhole family, which has been obtained in \cite{charged-wh} as a solution of the haunted Einstein-Maxwell system.
\end{itemize}


\subsection{Proof of the classification} \label{subsec-classification-proof}

To simplify the statements of certain results, we say that a region $U \subset \rn^6$ in the phase space is \textit{(forward) invariant} if it has the following property: if there is a point $\rho_0\geq0$ such that the solution enters $U$ at $\rho=\rho_0$, then it stays in $U$ for all $\rho\geq\rho_0$, i.e.
$$ [\exists \rho_0 \geq 0 \,\colon\, (r,N,w,U,\kappa,\zeta)(\rho_0) \in U] \quad\Rightarrow\quad [\forall \rho\geq\rho_0, \,(r,N,w,U,\kappa,\zeta)(\rho) \in U]. $$
For a trivial example, we see from (\ref{eq-zeta}) that the region $\zeta > 0$ is invariant.

Throughout the rest of the manuscript, we will make extensive use of certain energy functions related to the equations.
In view of this, they deserve a proper definition.

\begin{definition}\label{def-energy}
The \textit{energy} of the system (\ref{eq-w}--\ref{eq-zeta}) is defined as the function
\begin{equation}\label{eq-energy}
E = 1 + 2U^2 - \frac{(1-w^2)^2}{r^2} = 2\kappa N - N^2 + \zeta^2,
\end{equation}
where the second equality follows from the constraint (\ref{eq-kappa-constraint}).
The \textit{autonomous energy} is defined as
\begin{equation}\label{eq-aut-energy}
F = 2\dot w^2 - (1-w^2)^2 = r^2(E-1).
\end{equation}
\end{definition}

\vskip0.1cm

We first derive some basic inequalities.

\begin{lemma}\label{lemma-ineq}
Consider a solution of the initial value problem {\upshape (\ref{eq-r}--\ref{eq-zeta}, \ref{eq-initial-vals})} with respect to fixed initial data in $\mathscr I_0$.
For all $\rho \geq 0$ for which the solution is defined, we have 
\begin{equation*}
\kappa \geq \tanh(\rho) \geq N,
\qquad
0 \leq \zeta \leq \sqrt{E_0}\sech(\rho),
\qquad 
\kappa+N \leq 2+\sqrt{E_0}\sech(\rho).
\end{equation*}
\end{lemma}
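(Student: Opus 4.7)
The plan is to prove each of the three bounds in order; the first two are short Gronwall/comparison arguments, while the third requires a more delicate barrier argument. \emph{First inequality.} From (\ref{eq-kappa}) one has $\dot\kappa = 1 + 2U^2 - \kappa^2 \geq 1 - \kappa^2$, while $\tanh(\rho)$ satisfies the corresponding Riccati equation with equality and the same initial value $0$. Setting $h := \kappa - \tanh(\rho)$, one finds $\dot h + (\kappa+\tanh)h \geq 2U^2 \geq 0$, and multiplying by the integrating factor $\exp(\int_0^\rho(\kappa+\tanh))$ gives $h \geq 0$. Analogously, setting $g := N - \tanh(\rho)$ and combining (\ref{eq-N}) with $\kappa \geq \tanh \geq 0$ yields
\begin{equation*}
\dot g = \tanh(\tanh - \kappa) - \tfrac{(1-w^2)^2}{r^2} - \kappa g \leq -\kappa g \qquad (\rho \geq 0),
\end{equation*}
so the integrating factor $\exp(\int_0^\rho\kappa)$ again forces $g \leq 0$.

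\emph{Second inequality.} Integrating (\ref{eq-zeta}) explicitly gives $\zeta(\rho) = \sqrt{E_0}\exp\bigl(-\int_0^\rho\kappa\bigr)$, which is manifestly non-negative, and the upper bound follows at once from $\int_0^\rho\kappa \geq \int_0^\rho\tanh = \log\cosh(\rho)$.

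\emph{Third inequality.} For $\kappa + N \leq 2 + \sqrt{E_0}\sech(\rho)$, set $A := \sqrt{E_0}\sech(\rho)$ and $\overline X := 2 + A$, so that $\overline X(0) > 0 = (\kappa + N)(0)$. I would argue by contradiction: suppose there is a first $\rho_0 > 0$ at which $(\kappa + N)(\rho_0) = \overline X(\rho_0)$, which forces $\dot{(\kappa+N)}(\rho_0) \geq \dot{\overline X}(\rho_0)$. Combining (\ref{eq-N}) and (\ref{eq-kappa}) with the constraint (\ref{eq-kappa-constraint}) in the form $E = 2\kappa N - N^2 + \zeta^2$ yields the identity $\dot{(\kappa+N)} = 1 + E - \kappa(\kappa+N)$. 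Substituting $\kappa = \overline X - N$ at the touching point and rearranging, a short but slightly tricky calculation should reduce the difference $\dot{(\kappa+N)}(\rho_0) - \dot{\overline X}(\rho_0)$ to
\begin{equation*}
-3(1 - N)^2 \;-\; A\bigl(4 - 3N - \tanh(\rho_0)\bigr) \;+\; (\zeta^2 - A^2),
\end{equation*}
each of whose three terms is non-positive: the first by inspection; the second because $N \leq \tanh(\rho_0) \leq 1$ gives $4 - 3N - \tanh(\rho_0) \geq 0$ while $A \geq 0$; the third by the bound $\zeta \leq A$ just proved. At any finite $\rho_0$ one moreover has $N \leq \tanh(\rho_0) < 1$, so the first term is strictly negative and the sought contradiction is obtained. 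The main obstacle is spotting the three-term decomposition after the substitution $\kappa = \overline X - N$; without that identification one is stuck trying to control a cross term involving $2U^2$, which need not be small, whereas after the substitution the sign structure of the right-hand side makes everything automatic.
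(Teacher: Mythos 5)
Your proof is correct and takes essentially the same route as the paper: each of the three bounds is obtained by a comparison/barrier argument built on exactly the same auxiliary quantities ($\kappa-\tanh\rho$, $N-\tanh\rho$, the explicit exponential formula for $\zeta$, and the barrier $2+\sqrt{E_0}\sech(\rho)$ for $\kappa+N$), with the previously established bounds $N\leq\tanh\rho\leq1$ and $\zeta\leq\sqrt{E_0}\sech(\rho)$ feeding into the last step just as in the paper. The only cosmetic differences are that the paper proves $\kappa\geq\tanh\rho$ via the substitution $\xi=(1-\kappa)/(1+\kappa)$ rather than a direct Riccati comparison, and for the final inequality it estimates $\frac{\mathrm d}{\mathrm d\rho}(\kappa+N)$ throughout the region $\eta\geq0$ using the identity $\kappa^2-\kappa N+N^2=\tfrac14(\kappa+N)^2+\tfrac34(\kappa-N)^2$ instead of your substitution $\kappa=\overline X-N$ at a first touching point; your three-term decomposition checks out and is algebraically equivalent at that point.
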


\begin{remark*}
The first set of inequalities implies that the temporal metric coefficient $\tau$ is non-decreasing, since $\kappa-N = \dot\tau$. 
Looking at the last two inequalities, one might hope that the stronger inequality $\kappa + N \leq 2 + \zeta$ holds, but this is in fact not true.
Indeed, numerical approximations suggest that this inequality is violated when the initial energy $E_0$ is large.
\end{remark*}

\begin{proof}
To prove the first inequality, let $\xi = \frac{1-\kappa}{1+\kappa}$ and calculate
\begin{equation*}
\dot\xi = -2\xi -U^2(1+\xi)^2 \leq -2\xi
\end{equation*}
Thus $e^{2\rho}\xi(\rho)$ does not increase, and $\xi(0)=1$ therefore implies $\xi(\rho) \leq e^{-2\rho}$, which can be rearranged to get
\begin{equation*}
\kappa \geq \frac{1-e^{-2\rho}}{1 + e^{-2\rho}} = \tanh(\rho).
\end{equation*}
The inequalities for $\zeta$ then follows easily because (\ref{eq-zeta}) implies
\begin{equation*}
\zeta = \zeta(0)\exp\left(-\int_0^\rho \kappa \right) \leq \sqrt{E_0}\exp\left(-\int_0^\rho \tanh(\rho) \right) = \sqrt{E_0}\sech(\rho).
\end{equation*}
Next, if $\nu = N-\tanh(\rho)$, then
\begin{equation*}
\dot \nu = - \frac{(1-w^2)^2}{r^2} - \kappa N + \tanh^2(\rho) \leq -\kappa \nu - [\kappa-\tanh(\rho)]\tanh(\rho) \leq -\kappa \nu,
\end{equation*}
so that $\nu$ decreases in the region $\nu>0$, and $\nu(0) = 0$ thus implies $\nu\leq 0$.
Finally, we set $$\eta = \kappa+N - 2- \sqrt{E_0}\sech(\rho)$$ and calculate
\begin{equation*}
\dot\eta = 1 + \zeta^2  - \frac14(\kappa+N)^2 - \frac34(\kappa-N)^2 + \sqrt{E_0}\tanh(\rho)\sech(\rho),
\end{equation*}
cf.\ \cite[Lemma 10]{breit-forg-mais}.
In the region $\eta \geq 0$, we have
\begin{equation*}
\kappa+N \geq 2+\sqrt{E_0}\sech(\rho),\qquad
\kappa-N \geq 2-2N+ \sqrt{E_0}\sech(\rho) \geq \sqrt{E_0}\sech(\rho),
\end{equation*}
where we use the fact that $N\leq\tanh(\rho) \leq 1$.
Hence, we get
\begin{equation*}
\dot \eta \leq -\sqrt{E_0}\sech(\rho)[1-\tanh(\rho)] < 0,
\end{equation*}
so that $\eta$ decreases in the region $\eta \geq 0$, which yields the desired inequality since $\eta(0) < 0$.
\end{proof}

\begin{lemma}\label{lemma-kappa-to-1}
Consider a solution of the initial value problem {\upshape (\ref{eq-r}--\ref{eq-zeta}, \ref{eq-initial-vals})} with respect to fixed initial data in $\mathscr I_0$.
If the solution is defined for all $\rho \geq 0$, then
\begin{equation*}
\liminf_{\rho\to\infty} \kappa \geq 1 \quad\text{and}\quad \zeta \to 0.
\end{equation*} 
Furthermore, if $U\to 0$, then $\kappa \to 1$.
\end{lemma}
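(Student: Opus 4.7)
The plan is to extract all three assertions from the a priori bounds already established in Lemma \ref{lemma-ineq}, supplemented by a short Riccati comparison for the last one.

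The first two statements are essentially free. Lemma \ref{lemma-ineq} provides $\kappa(\rho) \geq \tanh(\rho)$ and $0 \leq \zeta(\rho) \leq \sqrt{E_0}\sech(\rho)$, and taking $\rho \to \infty$ on the right-hand sides immediately yields $\liminf \kappa \geq 1$ and $\zeta \to 0$. No further work is required here.

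For the final claim I would assume $U \to 0$ and upgrade the lower bound $\liminf \kappa \geq 1$ to $\kappa \to 1$ by producing the matching upper bound $\limsup \kappa \leq 1$. The tool is the Riccati-type equation (\ref{eq-kappa}), $\dot\kappa = 1 + 2U^2 - \kappa^2$. The key observation is that for any $\epsilon > 0$ one can choose $L$ so large that $|U(\rho)| < \epsilon$ on $[L,\infty)$, and then, with $M = \sqrt{1+2\epsilon^2}$, one has $\dot\kappa < M^2 - \kappa^2 = -(\kappa - M)(\kappa + M)$ on $[L,\infty)$. Since Lemma \ref{lemma-ineq} guarantees $\kappa \geq \tanh(\rho) \geq 0$, the factor $\kappa + M$ is bounded below by $2M > 0$ on the region $\{\kappa > M\}$, so the difference $\kappa - M$ decays at least like $e^{-2M\rho}$ wherever it is positive. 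This forces $\limsup \kappa \leq M = \sqrt{1+2\epsilon^2}$, and sending $\epsilon \to 0$ concludes the argument.

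I do not expect any substantial obstacle in this lemma; the only conceptual point worth flagging is that one should not attempt to linearize (\ref{eq-kappa}) about the equilibrium $\kappa = 1$ without first securing some upper bound on $\kappa$, since a priori $\kappa$ could be unbounded as $\rho \to \infty$. The Riccati nonlinearity $-\kappa^2$ generates precisely such a bound for free through the exponential-decay mechanism above, which is why it is natural to work directly with (\ref{eq-kappa}) rather than with a linearization.
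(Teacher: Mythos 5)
Your proof is correct and follows essentially the same route as the paper: the first two claims are read off from the bounds of Lemma \ref{lemma-ineq}, and the upper bound $\limsup\kappa\leq\sqrt{1+O(\epsilon)}$ comes from a Riccati comparison for (\ref{eq-kappa}) once $U$ is small. The only (cosmetic) difference is that the paper linearizes via the substitution $\xi_\varepsilon=\frac{a_\varepsilon-\kappa}{a_\varepsilon+\kappa}$ with $a_\varepsilon=\sqrt{1+\varepsilon}$, whereas you run the same decay estimate directly on $\kappa-M$ in the region $\{\kappa>M\}$.
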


\begin{proof}
The first part of the statement follows trivially from the inequalities in Lemma \ref{lemma-ineq}.
For the last claim, let $\varepsilon > 0$, define $a_\varepsilon = \sqrt{1+\varepsilon}$ and $\xi_\varepsilon = \frac{a_\varepsilon-\kappa}{a_\varepsilon + \kappa}$.
A simple calculation yields
\begin{equation*}
\dot\xi_\varepsilon = -2a_\varepsilon \xi_\varepsilon - (U^2-\varepsilon)(1+\xi_\varepsilon)^2.
\end{equation*}
If $U \to 0$, then $U^2 \leq \varepsilon$ and hence $\dot\xi_\varepsilon \geq -2a_\varepsilon\xi_\varepsilon$ for large $\rho$, which implies that $\limsup \kappa \leq a_\varepsilon$ for all $\varepsilon>0$ and letting $\varepsilon \to 0$ shows that $\limsup \kappa \leq 1$, giving $\kappa \to 1$. 
\end{proof}

Next, we show that all the dependent variables behave well as long as $N$ is bounded.

\begin{lemma}\label{lemma-N-bounded-finite}
Consider a solution of the initial value problem {\upshape (\ref{eq-r}--\ref{eq-zeta}, \ref{eq-initial-vals})} with respect to fixed initial data in $\mathscr I_0$.
Suppose that the solution is defined (at least) for $0 < \rho < \bar\rho$, and that $N$ remains bounded as $\rho\to\bar\rho$.
Then all the dependent variables remain bounded as $\rho\to\bar\rho$ and the solution can be continued across $\bar\rho$.
\end{lemma}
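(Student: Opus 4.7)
I would prove the result by establishing the boundedness of each of the six dependent variables on $[0,\bar\rho)$ and then invoking standard ODE extendability, which applies since the right-hand side of (\ref{eq-r}--\ref{eq-zeta}) is smooth on $\{r>0\}\times\rn^5$. The hypothesis $|N|\leq C$ combined with $\dot r=rN$ and $\bar\rho<\infty$ bounds $r$ both above and away from zero, while Lemma \ref{lemma-ineq} already yields uniform bounds on $\kappa$ and $\zeta$. The remaining task, and the genuine content of the lemma, is therefore to bound $w$ and $U$.

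The main idea is to extract integrability of $U^2$ from the equation (\ref{eq-kappa}) for $\kappa$. Integrating on $[0,\rho]$ and rearranging produces
\begin{equation*}
2\int_0^\rho U^2\,ds \;=\; \kappa(\rho)-\kappa(0)-\rho+\int_0^\rho\kappa^2\,ds,
\end{equation*}
whose right-hand side is uniformly bounded on $[0,\bar\rho)$ by the preceding bounds on $\kappa$ and the finiteness of $\bar\rho$. Hence $U\in L^2([0,\bar\rho))$, and Cauchy--Schwarz---for which $\bar\rho<\infty$ is essential---upgrades this to $U\in L^1([0,\bar\rho))$. Since $\dot w=rU$ with $r$ bounded, $w$ is of bounded total variation on $[0,\bar\rho)$ and therefore bounded. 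Once $w$ is under control, solving the constraint (\ref{eq-kappa-constraint}) for $U^2$ gives
\begin{equation*}
2U^2 \;=\; 1 + 2\kappa N - N^2 - \zeta^2 + \frac{(1-w^2)^2}{r^2},
\end{equation*}
whose right-hand side is now a bounded expression, yielding a pointwise bound on $U$.

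With every dependent variable bounded on $[0,\bar\rho)$ and $r$ bounded away from zero, the right-hand sides of (\ref{eq-r}--\ref{eq-zeta}) are uniformly bounded, so each variable has a limit as $\rho\to\bar\rho^-$, and local existence at that endpoint continues the solution across $\bar\rho$. The most delicate link in this chain is the passage from $U\in L^2$ to boundedness of $w$: this is the place where the finiteness of $\bar\rho$ is used essentially, and it is what prevents one from trying to use a naive Hamiltonian estimate for the Yang--Mills equation $\ddot w+(\kappa-2N)\dot w + w(1-w^2) = 0$, whose underlying potential $\frac{w^2}{2}-\frac{w^4}{4}$ is non-confining and thus does not, by itself, rule out escape to infinity. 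Everything else is bookkeeping once the right integrability estimate has been extracted.
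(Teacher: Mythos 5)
Your proof is correct, and it takes a genuinely different and arguably cleaner route than the paper's. The paper also reduces everything to bounding $w$ (via boundedness of the autonomous energy $F$), but then argues by contradiction: an unbounded $w$ must enter and monotonically traverse the region $|w|>1$, forcing $\dot w\notin L^2$ near $\bar\rho$ by Cauchy--Schwarz, while a two-case analysis on $\bar N=\lim N$ (using $\dot F=-4(\kappa-2N)\dot w^2$ in one case and the quantity $b=r^2(1-N^2)$ in the other) shows $\dot w\in L^2$ after all. Your observation that $2\int_0^\rho U^2 = \kappa(\rho)-\kappa(0)-\rho+\int_0^\rho\kappa^2$ follows directly from (\ref{eq-kappa}), and that the right-hand side is uniformly bounded once $\kappa$ is (which Lemma \ref{lemma-ineq} plus the boundedness of $N$ gives), delivers $U\in L^2([0,\bar\rho))$ in one stroke, bypassing both the contradiction structure and the case split on $\bar N$; the chain $U\in L^2\Rightarrow U\in L^1\Rightarrow w$ of bounded variation $\Rightarrow w$ bounded then closes the argument, with the pointwise bound on $U$ recovered from the constraint (or equivalently from the boundedness of $F$). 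What your approach buys is a direct, shorter proof that makes transparent where finiteness of $\bar\rho$ enters; what the paper's approach buys is machinery ($F$-monotonicity, the mass quantity $b$) that it reuses elsewhere. One small slip: your rearrangement of the constraint should read
\begin{equation*}
2U^2 = \zeta^2 - 1 + \frac{(1-w^2)^2}{r^2} + 2\kappa N - N^2,
\end{equation*}
i.e.\ the signs of the $1$ and $\zeta^2$ terms are flipped relative to what you wrote; this is immaterial since either expression is manifestly bounded once $w$, $N$, $\kappa$, $\zeta$ are bounded and $r$ is bounded away from zero.
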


\begin{remark*}
In \cite{breit-forg-mais}, the authors show an analogue of Lemma \ref{lemma-N-bounded-finite} for their setting when $N$ is lower bounded by a positive constant \cite[Proposition 9]{breit-forg-mais} and when $N$ is negative \cite[Proposition 13]{breit-forg-mais}, but they do not show the result when $N$ approaches $0$ from above at $\bar\rho$.
This last case was likely just forgotten, although strictly speaking it is also necessary for their setting.
\end{remark*}

\begin{proof}
Since $N$ is bounded, it follows that $r$ is also bounded by (\ref{eq-r}).
From Lemma \ref{lemma-ineq}, we directly see that $\zeta$ is bounded, and also
\begin{equation*}
0 \leq \kappa \leq 2+\sqrt{E_0}-N,
\end{equation*}
so $\kappa$ is bounded as well.
So it remains only to study $w$ and $U$.
For this we will use the energies from Definition \ref{def-energy}.
Note that the energy $E=2\kappa N-N^2+\zeta^2$ is bounded and hence the autonomous energy $F$ is bounded as well.
In particular, we see that $w$ is bounded if and only if $U$ is bounded, so it suffices to show that $w$ is bounded at $\bar\rho$.

Aiming to reach a contradiction, assume that $w$ is unbounded at $\bar\rho$.
We first note that, since $w$ is unbounded, it must enter the region $|w|>1$ for some $0 \leq \rho_0 < \bar\rho$.
In view of the symmetry $(w,U)\mapsto -(w,U)$ (cf.\ Remark \ref{remark-symmetries}), we can without loss of generality assume that $w>1$.
In this region, $w$ is monotone since
\begin{equation*}
\ddot w = -(\kappa-2N) \dot w - w(1-w^2) > -(\kappa-2N) \dot w,
\end{equation*}
so $\ddot w|_{\dot w =0} > 0$, and hence $\dot w > 0$ for $w > 1$.
Thus $|w|\to \infty$, and hence by the Cauchy-Schwarz inequality
\begin{equation*}
\sqrt{\int_{\bar\rho-\delta}^{\rho} \dot w^2} 
\geq \frac{1}{\sqrt{\rho-\bar\rho + \delta}} \int_{\rho-\delta}^{\bar\rho} |\dot w|
\geq \frac{|w(\rho) - w(\bar\rho-\delta)|}{\sqrt{\rho-\bar\rho + \delta}} \to \infty,
\end{equation*}
as $\rho \to \bar\rho$, which implies that $\dot w$ is not square integrable near $\bar\rho$.
We will reach the desired contradiction by showing that $\dot w$ is, in fact, square integrable near $\bar\rho$.

Note that $\dot N \to -\infty$ by (\ref{eq-N}) since $\kappa, N, r$ are bounded and $|w|\to\infty$.
It follows that $N$ decreases for $\bar\rho-\delta \leq \rho \leq \bar\rho$ if $\delta > 0$ is sufficiently small, and in particular the limit
\begin{equation*}
\lim_{\rho\to\bar\rho} N = \bar N
\end{equation*}
exists.
We now consider two cases separately:
\begin{enumerate}
\item If $\bar N < \tfrac12\tanh(\bar\rho)$, then we can choose $\delta > 0$ so small that $2N - \tanh(\rho) \leq -c$ for $\bar\rho-\delta \leq \rho \leq \bar\rho$ and some positive constant $c$.
By Lemma \ref{lemma-ineq} (i), we then have
\begin{equation*}
\kappa - 2N \geq \tanh(\rho) - 2N \geq c \quad\text{for}\quad \bar\rho-\delta \leq \rho \leq \bar\rho,
\end{equation*}
and so
\begin{equation*}
\dot F = -4(\kappa-2N)\dot w^2 \leq -4c \dot w^2,
\end{equation*}
which implies that $\dot w$ is square integrable over $\bar\rho-\delta \leq \rho \leq \bar\rho$ since $F$ is bounded.
\item If $\bar N \geq \tfrac12 \tanh(\bar\rho) > 0$, then we can choose $\delta > 0$ so small that $N \geq \tfrac14 \tanh(\bar\rho) =: c$ for $\bar\rho-\delta \leq \rho \leq \bar\rho$.
If we define $b = r^2(1-N^2) = 2rm$, then $b$ is bounded since $r$ and $N$ are, and we have
\begin{equation*}
\dot b = 2r^2N \left( \frac{(1-w^2)^2}{r^2} + (\kappa-N)N \right) \geq 2\varepsilon(1-w^2)^2 = 2c (2\dot w^2 - F) \geq 4c \dot w^2 - \tilde c
\end{equation*}
for some constant $\tilde c > 0$, since $\kappa - N \geq 0$ by Lemma \ref{lemma-ineq} and $F$ is bounded.
This implies that $\dot w$ is square integrable over $\bar\rho-\delta \leq \rho \leq \bar\rho$.
\end{enumerate}
Thus, $\dot w$ is square integrable near $\bar\rho$ in both cases, which gives the desired contradiction.
\end{proof}

On the other hand, the following result characterizes singular orbits.

\begin{lemma}\label{lemma-sing}
Consider a solution of the initial value problem {\upshape (\ref{eq-r}--\ref{eq-zeta}, \ref{eq-initial-vals})} with respect to fixed initial data in $\mathscr I_0$.
\begin{enumerate}
\item The region $\{|w|>1,\, w\dot w > 0\}$ is invariant and any solution that enters it also enters the region $N+\zeta <0$.
\item The region $N+\zeta < 0$ is invariant and any solution that enters it is singular, cf.\ Theorem \ref{thm-classification} {\normalfont (i)}.
\end{enumerate}
\end{lemma}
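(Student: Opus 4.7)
The plan is to handle the two claims separately, using the reflection symmetry $(w,U)\mapsto-(w,U)$ of Remark \ref{remark-symmetries} to reduce part (1) to the region $\{w>1,\dot w>0\}$. Invariance there is a boundary analysis: on the face $\{\dot w=0,\,w>1\}$ the Yang-Mills equation (\ref{eq-ym}) yields $\ddot w = -w(1-w^2)=w(w^2-1)>0$, which prevents $\dot w$ from returning to zero from above, while $\dot w > 0$ forces $w$ to be strictly increasing and hence to stay above its entry value $>1$. To make this rigorous I would consider the first exit time $\rho_\ast>\rho_0$ and rule out both possible exit modes ($w(\rho_\ast)=1$ or $\dot w(\rho_\ast)=0$) using these sign arguments.

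For the claim that the orbit then enters $\{N+\zeta<0\}$, I argue by contradiction: if it stayed in $\{N+\zeta\geq0\}$ forever, Lemma \ref{lemma-ineq} together with $N\geq-\zeta\geq-\sqrt{E_0}$ would bound $N$, and Lemma \ref{lemma-N-bounded-finite} would then ensure global existence on $[0,\infty)$. Since $w$ is monotonically increasing, $w\to w_\infty\in(w(\rho_0),\infty]$. If $w_\infty<\infty$, then $\dot w$ is integrable and, by boundedness of $\ddot w$, uniformly continuous, so Barbălat's Lemma \ref{lemma-barbalat} gives $\dot w\to 0$; but the Yang-Mills equation then forces $\ddot w\to w_\infty(w_\infty^2-1)>0$, a contradiction. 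If $w_\infty=\infty$, I split further: either $(1-w^2)^2/r^2\to\infty$, in which case (\ref{eq-N}) drives $\dot N\to-\infty$ contradicting the bound on $N$, or this quantity stays bounded, forcing $r$ to grow comparably with $w^2$; Grönwall-type estimates on $\ddot w + (\kappa-2N)\dot w = w(w^2-1)$ combined with comparison to the autonomous ODE $\ddot v = v^3/4$ (which blows up in finite time) then force $w$ to blow up in finite time, again contradicting global existence.

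For part (2), I set $Q = N+\zeta$; using (\ref{eq-N-2}) and (\ref{eq-zeta}), a direct computation gives
\begin{equation*}
\dot Q = (\kappa-N)N - 2U^2 + \zeta^2 - \kappa\zeta = (N-\zeta)(\kappa-Q) - 2U^2.
\end{equation*}
In $\{Q<0\}$, the bounds $\zeta\geq 0$ and Lemma \ref{lemma-ineq} yield $N-\zeta\leq Q\leq 0$ and $\kappa-Q\geq\kappa\geq 0$, so $\dot Q\leq Q\kappa\leq 0$, which proves invariance. Rewriting as $\tfrac{d|Q|}{d\rho}\geq|Q|\kappa$ and using $\kappa\geq\tanh\rho$ gives $|Q(\rho)|\to\infty$ by Grönwall, so $N\to-\infty$ (as $\zeta$ stays bounded). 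Once $|N|$ dominates $\kappa$ and $\sqrt{E_0}$, equation (\ref{eq-N-2}) becomes a Riccati inequality $\dot N\leq -\tfrac12 N^2$, which forces $N\to-\infty$ at some finite $\rho_\infty$, and $\dot r = rN$ then gives $r\to 0$ at $\rho_\infty$. Boundedness of $w, U, \kappa, \zeta$ near $\rho_\infty$ is verified by a dominant-balance analysis of (\ref{eq-U}) and (\ref{eq-kappa}) using the constraint (\ref{eq-kappa-constraint}).

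The main obstacle I anticipate is the subcase $w_\infty=\infty$ of part (1): both the damping coefficient $\kappa-2N$ and the evolution of $r$ depend nontrivially on the orbit, so the Grönwall and finite-time-blow-up comparison requires careful tracking of the relative growth rates of $w$, $\dot w$, and $r$. The rest is essentially bookkeeping via Lemma \ref{lemma-ineq}, the energy identities from Definition \ref{def-energy}, and Barbălat's lemma.
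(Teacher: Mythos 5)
Your treatment of the invariance claims and of the finite-time blow-up of $N$ in part (2) is sound and essentially matches the paper: the boundary sign analysis for $\{w>1,\dot w>0\}$, the identity for $\dot Q$ (the paper writes it as $\dot\xi=-\xi^2+(\kappa+2\zeta)\xi-2\kappa\zeta-2U^2$, which is the same computation), and the Riccati mechanism forcing $N\to-\infty$ at finite $\rho_\infty$ are all correct. However, there are two genuine gaps. The first is in part (1), where you show the orbit must leave $\{N+\zeta\geq0\}$. Your route (global existence from $N$ bounded, then a contradiction via blow-up of $w$) is legitimately different from the paper's, but as written the dichotomy ``$(1-w^2)^2/r^2\to\infty$ or stays bounded'' is not exhaustive (the quantity could oscillate), and the comparison to $\ddot v=v^3/4$ is only gestured at; moreover the Barbălat step needs $\dot w$ bounded, which you do not justify. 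The paper avoids all of this with a single monotone quantity: setting $T=(w^2-1)/r>0$, one computes $\tfrac{\mathrm d}{\mathrm d\rho}\log|TU|=w\tfrac{2U^2+T^2}{TU}-\kappa\geq 2\sqrt2-\kappa\geq 2[\sqrt2-1-\sqrt{E_0}\sech(\rho)]$, so $|TU|$ eventually exceeds $1/\sqrt2$, forcing $2U^2+T^2\geq2$ and hence $\dot N\leq-\tfrac12[1-E_0\sech^2(\rho)]$, which drives $N$ below $-\zeta$. Your approach could probably be repaired (a direct Riccati argument on $\dot w/w^2$ works once $w\to\infty$ is established, without any case split), but it is not complete as it stands.

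The second and more serious gap is the final clause of part (2): the lemma asserts, via Theorem \ref{thm-classification} (i), that the \emph{remaining} dependent variables stay bounded as $\rho\to\rho_\infty$, and ``a dominant-balance analysis'' does not engage with the actual difficulty. Note that the bound $\kappa\leq 2+\sqrt{E_0}\sech(\rho)-N$ from Lemma \ref{lemma-ineq} becomes useless as $N\to-\infty$, and the autonomous energy $F=r^2(E-1)$ is indeterminate there since $r\to0$ while $E=2\kappa N-N^2+\zeta^2\to-\infty$; so boundedness of $w$ (and thence of $U$ and $\kappa$) is genuinely delicate, especially for escaping orbits where $w$ is strictly increasing beyond $1$. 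This is where the paper's proof does its heaviest lifting: it introduces $\eta=-r(N+\zeta)>0$, shows $\dot\eta>0$ and $\eta\leq|1-w^2|$ via the constraint (\ref{eq-kappa-constraint}), and then bounds $w\eta^{-\varepsilon}$ near $\rho_\infty$ by combining a Cauchy--Schwarz estimate on $\int\dot w\,\eta^{-\varepsilon}$ with integrability of $rU^2\eta^{-1-\varepsilon}$ (extracted from $\tfrac{\mathrm d}{\mathrm d\rho}\eta^{-\varepsilon}$) and a Hölder estimate on $\int r\eta^{1-\varepsilon}$. Also, the conclusion $r\to0$ requires a \emph{lower} bound on the blow-up rate of $|N|$, which itself uses the boundedness of $\kappa$ and $U$ near $\rho_\infty$, so this step cannot simply be deferred. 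Without some version of this analysis your proof of part (2) is incomplete.
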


\begin{remark*}
Note that, contrarily, a solution that enters the region $N+\zeta < 0$ does not necessarily also enter $\{|w|>1,\, w\dot w > 0\}$.
\end{remark*}

\begin{proof}
Suppose that the orbit enters the region $\{|w|>1,\, w\dot w > 0\}$.
In view of the symmetry $(w,U)\mapsto -(w,U)$ (cf.\ Remark \ref{remark-symmetries}), we can assume without loss of generality that there is a point $\rho_0\geq0$ with $w(\rho_0) > 1$ and $\dot w(\rho_0) > 0$.
Note that for $w > 1$,
\begin{equation*}
\ddot w = -(\kappa-2N) \dot w - w(1-w^2) > -(\kappa-2N) \dot w,
\end{equation*}
so $\ddot w|_{\dot w =0} > 0$, which shows that $w$ must keep increasing and hence the region $\{w > 1, \dot w > 0\}$ is invariant, so $(w,\dot w)$ remains there for all $\rho \geq \rho_0$.
Next, we want to show that the orbit enters $N + \zeta < 0$, so we study the orbit while it resides in the region $N + \zeta \geq 0$.
This implies that $1 \geq \tanh(\rho) \geq N \geq -\zeta \geq -\sqrt{E_0}\sech(\rho)$ by Lemma \ref{lemma-ineq}.
Note that the orbit exists as long as it stays in this region in view of Lemma \ref{lemma-N-bounded-finite}.
By Lemma \ref{lemma-ineq}, we have
\begin{equation*}
\kappa \leq 2 + \sqrt{E_0}\sech(\rho) - N \leq 2 + 2\sqrt{E_0}\sech(\rho).
\end{equation*}
Put $T = (w^2-1)/r > 0$ and calculate (cf.\ \cite[Proposition 11]{breit-forg-mais})
\begin{align*}
\frac{\mathrm d}{\mathrm d\rho} \log|TU| &= w\,\frac{2U^2 + T^2}{TU} - \kappa
\geq 2\sqrt{2} - \kappa
\geq 2\left[\sqrt 2 - 1 - \sqrt{E_0}\sech(\rho)\right].
\end{align*}
Thus, $|TU|$ increases strictly and uniformly for sufficiently large $\rho$, so the solution eventually reaches (and stays in) the region $|TU| \geq 1/\sqrt 2$, implying also that $2U^2 + T^2 \geq 2$. In this region, (\ref{eq-N-2}) gives
\begin{equation*}
\dot N = \frac12 (1 - N^2 - 2U^2 - T^2 + \zeta^2) \leq -\frac12[1 - E_0\sech^2(\rho)],
\end{equation*}
so $N$ uniformly decreases for large $\rho$, and thus it eventually reaches the region $N+\zeta < 0$ (recall from Lemma \ref{lemma-ineq} that $\zeta$ approaches zero), proving (i).

Now to prove (ii), put $\xi=N+\zeta$ and calculate
\begin{equation*}
\dot \xi = -\xi^2 + (k+2\zeta)\xi - 2\kappa\zeta - 2U^2.
\end{equation*}
Since $\kappa,\zeta \geq 0$, we see that $\xi$ decreases in the region $\xi < 0$ and consequently this region is preserved once reached.
In particular, if the orbit enters it, we have $\dot \xi \leq -\xi^2$, implying that $\xi \to -\infty$ at some finite point $\rho=\rho_\infty$,
which in turn implies that $N\to-\infty$ as $\zeta$ is bounded (note that none of the other variables can explode before $N\to-\infty$ in view of Lemma \ref{lemma-N-bounded-finite}).

Finally, we show that the other dependent variables remain bounded near the singular point $\rho_\infty$.
This will also imply that $r \to 0$ at $\rho_\infty$ e.g.\ by (\ref{eq-N}).
To this end, we adapt the techniques from \cite[Proposition 13]{breit-forg-mais}.
In fact, we only present the proof of the boundedness of $w$, as the boundedness of other variables follows in essentially the same way as in the citation, with only minor modifications.

Since $w$ is trivially bounded if it remains in the strip $|w|\leq 1$ for all $\rho < \rho_\infty$, we consider only the case when $w$ enters the invariant region $\{|w|>1,\, w\dot w > 0\}$,
and, as above, we assume without loss of generality that $w > 1, \, \dot w > 0$.
Put $\eta = -r(N+\zeta) = -r\xi$.
For sufficiently small $\delta$, we have $N + \zeta < 0$ and thus $\eta > 0$ for $\rho_\infty - \delta < \rho < \rho_\infty$.
We will show that $w\eta^{-\varepsilon}$ is bounded near $\rho_\infty$ for $0 < \varepsilon < \frac12$.
This will imply that $w$ is bounded because the constraint (\ref{eq-kappa-constraint}) gives
\begin{equation*}
\eta^2 = r^2(N^2-\zeta^2) - 2r\zeta\eta
\leq -r^2(1 + 2U^2 - 2\kappa N)  + (1-w^2)^2
\leq (1-w^2)^2,
\end{equation*}
so that $w\eta^{-\varepsilon}\geq w|1-w^2|^{-\varepsilon}$.
A simple calculation yields
\begin{equation*}
\dot \eta =  2rU^2 + \eta\zeta + r\kappa(\zeta-N)  > 0,
\end{equation*}
so that $\eta$ increases and in particular stays away from zero near $\rho_\infty$.
We have
\begin{equation*}
w\eta^{-\varepsilon}(\rho) - w\eta^{-\varepsilon}(\rho_\infty-\delta)
= \int_{\rho_\infty-\delta}^\rho \frac{\mathrm d}{\mathrm d\rho} (w\eta^{-\varepsilon})
= \int_{\rho_\infty-\delta}^\rho \dot w \eta^{-\varepsilon} - \int_{\rho_\infty-\delta}^\rho w \eta^{-\varepsilon} \dot\eta
\leq \int_{\rho_\infty-\delta}^\rho \dot w \eta^{-\varepsilon},
\end{equation*}
where the last inequality follows since $w > 1$ and $\dot\eta > 0$.
Now by the Cauchy-Schwarz inequality, we get
\begin{equation*}
\left(\int_{\rho_\infty-\delta}^\rho \dot w \eta^{-\varepsilon} \right)^2 \leq \int_{\rho_\infty-\delta}^\rho rU^2\eta^{-1-\varepsilon} \int_{\rho_\infty-\delta}^\rho r\eta^{1-\varepsilon},
\end{equation*}
so it suffices to show that the two integrals on the right-hand side are finite as $\rho\to\rho_\infty$.
For the first integral, we can estimate (because $\eta$ increases)
\begin{equation*}
2|\eta(\rho_\infty-\delta)|^{-\varepsilon} 
\geq 
\left|\int_{\rho_\infty-\delta}^{\rho_\infty} \frac{\mathrm d}{\mathrm d\rho} \eta^{-\varepsilon}\right|
= \varepsilon \int_{\rho_\infty-\delta}^{\rho_\infty} \left(2rU^2\eta^{-1-\varepsilon} + \zeta\eta^{-\varepsilon} + r\kappa(\zeta-N)\eta^{-1-\varepsilon}\right),
\end{equation*}
and since all the integrands on the right hand side are non-negative, their separate integrals must all be finite, in particular the one involving $U$.
For the second integral, we write
\begin{align*}
\eta = -\frac{r^{1-\varepsilon}}{\varepsilon}  \frac{\mathrm d}{\mathrm d\rho} r^\varepsilon - r\zeta 
= -r^{1-\varepsilon} \left( \frac{1}{\varepsilon}  \frac{\mathrm d}{\mathrm d\rho} r^\varepsilon + r^\varepsilon \zeta \right),
\end{align*}
so that the integral of $\eta$ near $\rho_\infty$ is finite, and H\"older's inequality implies
\begin{equation*}
\int_{\rho_\infty-\delta}^{\rho_\infty} r\eta^{1-\varepsilon}
\leq \left(  \int_{\rho_\infty-\delta}^{\rho_\infty} r^{\frac{1}{\varepsilon}} \right)^\varepsilon \left(  \int_{\rho_\infty-\delta}^{\rho_\infty} \eta \right)^{1-\varepsilon},
\end{equation*}
and the latter is finite since $r > 0$ is decreasing.
Thus, $w$ is bounded near $\rho_\infty$, as desired.
\end{proof}

In view of Lemma \ref{lemma-sing}, we may assume for the rest of the proof of Theorem \ref{thm-classification} that the solution remains in the regions $N + \zeta \geq 0$ and $|w|\leq1$.

\begin{lemma}\label{lemma-asymptotic-good}
Consider a solution of the initial value problem {\upshape (\ref{eq-r}--\ref{eq-zeta}, \ref{eq-initial-vals})} with respect to fixed initial data in $\mathscr I_0$.
Assume the solution satisfies $N + \zeta \geq 0$ and $|w|\leq1$ for all $\rho\geq 0$ for which it exists.
Then the solution is well-defined for all $\rho \geq 0$, and furthermore:

\begin{enumerate}
\item[(i)] the integral $\int_0^\infty N$ exists and has finite negative part,
\item[(ii)] $r$ has a non-zero limit at infinity, which is finite if and only if $N \to 0$,
\item[(iii)] all other dependent variables remain bounded as $\rho \to \infty$.
\end{enumerate}
\end{lemma}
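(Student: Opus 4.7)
The plan is to first establish global existence, then bootstrap the remaining claims using Lemma \ref{lemma-ineq} (for basic sign and magnitude bounds), the constraint (\ref{eq-kappa-constraint}) (for bounding $U$), and Barbălat's lemma combined with a limit of that same constraint (for the \emph{if and only if} clause in (ii)).

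First I would combine the hypothesis $N+\zeta \geq 0$ with the inequalities of Lemma \ref{lemma-ineq} to obtain
\[
-\sqrt{E_0}\sech(\rho) \;\leq\; -\zeta \;\leq\; N \;\leq\; \tanh(\rho) \;\leq\; 1,
\]
so $N$ is bounded, and Lemma \ref{lemma-N-bounded-finite} then guarantees that the solution extends to all $\rho \geq 0$. Since the negative part of $N$ is dominated by the integrable function $\sqrt{E_0}\sech(\rho)$, the integral $\int_0^\infty N$ exists in $(-\infty,+\infty]$ with finite negative part, which is (i). Integrating $\dot r = rN$ then shows that $r$ has a non-zero limit $r_\infty \in (0,+\infty]$, and is in particular bounded away from $0$ on $[0,\infty)$.

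For (iii), $|w|\leq 1$ is the hypothesis, $\zeta$ and $N$ are already bounded, and Lemma \ref{lemma-ineq} gives $0 \leq \kappa \leq 2 + 2\sqrt{E_0}$ (combining the upper bound on $\kappa + N$ with the lower bound on $N$). Only $U$ requires comment: rearranging the constraint (\ref{eq-kappa-constraint}) as
\[
2U^2 \;=\; 2\kappa N - N^2 + \zeta^2 - 1 + \frac{(1-w^2)^2}{r^2},
\]
every term on the right is bounded, where the boundedness of $(1-w^2)^2/r^2$ uses crucially that $r$ is bounded below.

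It remains to prove the equivalence in (ii). For the forward direction, if $r_\infty < \infty$ then $r$ is bounded \emph{above} as well, so the right-hand side of $\dot N = 1 - (1-w^2)^2/r^2 - \kappa N$ is bounded, making $N$ uniformly continuous; since $\int_0^\infty N = \log(r_\infty/r_0)$ is then finite, Barbălat's lemma (Lemma \ref{lemma-barbalat}) forces $N \to 0$. The hard part will be the converse, for which I would argue by contradiction: if $N\to 0$ but $r\to\infty$, then $(1-w^2)^2/r^2 \to 0$ (because $|w|\leq 1$), $\zeta \to 0$ by Lemma \ref{lemma-ineq}, and $\kappa N \to 0$ because $\kappa$ is bounded by (iii), so passing to the limit in the rearranged constraint above would force $2U^2 \to -1$, which is impossible. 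Without such a constraint this direction would be genuinely delicate, since a priori $N$ could oscillate around $0$ while $\int_0^\rho N$ still diverges to $+\infty$; the phantom-modified constraint (\ref{eq-kappa-constraint}) is precisely what rules this behaviour out immediately.
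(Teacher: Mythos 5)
Your proposal is correct and follows essentially the same route as the paper: global existence via the Lemma \ref{lemma-ineq} bounds plus Lemma \ref{lemma-N-bounded-finite}, existence of $\int_0^\infty N$ from the integrable bound $\zeta \leq \sqrt{E_0}\sech(\rho)$ dominating the negative part of $N$, the constraint (\ref{eq-kappa-constraint}) to bound $U$ and to rule out $r\to\infty$ when $N\to 0$ (the paper phrases this as $\liminf (1-w^2)^2/r^2 \geq 1$ rather than your $2U^2\to -1$ contradiction, but it is the same computation), and Barbălat's lemma for the converse direction of (ii).
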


\begin{remark*}
We would also like to point out that, a priori, $N$ could oscillate, i.e.\ the integral could be of the form $\infty-\infty$, so the existence in (i) is a non-trivial matter.
In part (iii), we make no claims about the existence of limits at infinity - this will be studied in the subsequent lemmata.
\end{remark*}

\begin{proof}
By the bounds in Lemma \ref{lemma-ineq}, we see that $|N|$ stays bounded as long as $N+\zeta \geq 0$, and therefore the solution is well-defined for all $\rho\geq 0$ by Lemma \ref{lemma-N-bounded-finite}.

By the monotone convergence theorem and Lemma \ref{lemma-ineq}, we have
\begin{equation*}
\lim_{\rho\to\infty} \int_0^\rho \zeta 
= 
\int_0^\infty \zeta 
\leq
\sqrt{E_0} \int_0^\infty \sech(\rho) = \frac{\pi\sqrt{E_0}}{2} < \infty.
\end{equation*}
On the other hand, since $N+\zeta\geq0$ by assumption, we also see by the monotone convergence theorem that
\begin{equation*}
\lim_{\rho\to\infty} \int_0^\rho (N+\zeta)
=
\int_0^\infty (N+\zeta)
\end{equation*}
where the integral on the right-hand side could be infinite.
Thus, 
\begin{align*}
\lim_{\rho\to\infty}\int_0^\rho N =
\lim_{\rho\to\infty}\left[\int_0^\rho (N+\zeta) - \int_0^\rho \zeta \right] =
\int_0^\infty (N+\zeta)  - \int_0^\infty \zeta = \int_0^\infty N,
\end{align*}
where we may take the limit on each term separately since they both have definite sign, and the negative part, i.e.\ the integral of $\zeta$, has finite limit.
Note that this also implies that the negative part of the integral of $N$ is finite and in particular the (Lebesgue) integral of $N$ over $[0,\infty)$ exist (but could be infinite).

Now equation (\ref{eq-r}) implies
\begin{equation*}
r(\rho) = r_0 \exp \int_0^\rho N  \to r_0 \exp \int_0^\infty N ,
\end{equation*}
so that $r$ has a limit at infinity, which is non-zero because the integral of $N$ cannot be negatively infinite.
For the second claim in (ii), note that if $N \to 0$, then the constraint (\ref{eq-kappa-constraint}) 
shows that
\begin{equation*}
\liminf_{\rho\to\infty} \frac{(1-w^2)^2}{r^2} \geq 1+\liminf_{\rho\to\infty} 2U^2 \geq 1,
\end{equation*}
since also $\zeta \to 0$ and $\kappa$ is bounded due to the inequality $\kappa + N \leq 2 + \sqrt{E_0}$.
This implies that $r$ cannot be unbounded (note that $|w|\leq1$ for $N+\zeta \geq 0$ by Lemma \ref{lemma-N-bounded-finite}) and hence has a finite limit by the preceding part of the lemma.
On the other hand, if $r$ has a finite limit, then so does $\log \frac{r}{r_0} = \int_0^\rho N$,
and since $\dot N$ is bounded by (\ref{eq-N}), it follows that $N\to 0$ by Barbălat's lemma (Lemma \ref{lemma-barbalat}).

For (iii), we first note that that $\kappa \leq 2 + \sqrt{E_0} - N$ is bounded since $-\zeta \leq N \leq 1$, where the first inequality follows by assumption and the second inequality follows from Lemma \ref{lemma-ineq}. 
It follows that the energy (\ref{eq-energy}) is bounded.
Furthermore, $|w| \leq 1$ because the assumption $N+\zeta \geq 0$ implies that $w$ cannot exit this region, cf.\ Lemma \ref{lemma-sing} (i).
Since $r$ stays away from zero at infinity by the already proven part of the lemma, we see that $\tfrac{(1-w^2)^2}{r^2}$ is bounded, and thus so is $2U^2 = \tfrac{(1-w^2)^2}{r^2}-1+E$.
\end{proof}

Now we wish to consider three cases separately:
\begin{enumerate}
\item $N$ has infinitely many zeros,
\item $N$ has finitely many zeros and $N<0$ after the last zero,
\item $N$ has finitely many zeros and $N>0$ after the last zero. 
\end{enumerate}
Before dealing with each of these cases, we would like to prove some preparatory results.
First, we provide a lemma which would thematically be a better fit for the next section, but we will use it in this section as well, so we state and prove it somewhat ahead of time.

\begin{lemma}\label{lemma-N-positive-r0-geq1}
Consider a solution of the initial value problem {\upshape (\ref{eq-r}--\ref{eq-zeta}, \ref{eq-initial-vals})} with respect to fixed initial data in $\mathscr I_0$.
If $r_0 \geq 1$, then either
\begin{itemize}
    \item $r\equiv 1$ and $w\equiv 0$, or
    \item $N>0$ for all $\rho > 0$ such that $|w|\leq 1$.
\end{itemize}
\end{lemma}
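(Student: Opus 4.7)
The strategy is a first-zero analysis for $N$, leveraging the observation that (\ref{eq-N}) gives $\dot N\big|_{N = 0} = 1 - (1-w^2)^2/r^2 \geq 0$ whenever $r \geq 1$ and $|w| \leq 1$, combined with $\dot r = rN$ (so $N \geq 0$ keeps $r$ nondecreasing). The plan is to assume the solution is not the trivial one and show by contradiction that $N$ cannot vanish on $(0, \rho_*)$, where $\rho_*$ denotes the supremum of times such that $|w| \leq 1$ holds on $[0, \rho_*]$.

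First I would establish that in the non-trivial case, $N$ is strictly positive immediately after $\rho = 0$. The initial derivative $\dot N(0) = 1 - (1-w_0^2)^2/r_0^2 \geq 0$ is strict unless $r_0 = 1$ and $w_0 = 0$, so only this degenerate initial configuration requires closer inspection. There, a short Taylor analysis using $\dot w(0) = U_0$ yields $N(\rho) \sim \tfrac{2}{3} U_0^2 \rho^3 > 0$ for small $\rho > 0$ whenever $U_0 \neq 0$, and the remaining case $U_0 = 0$ is precisely the trivial initial data.

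With this in hand, suppose for contradiction there exists a smallest point $\rho_1 \in (0, \rho_*)$ with $N(\rho_1) = 0$. Then $N \geq 0$ on $[0, \rho_1]$, so $r$ is nondecreasing and $r(\rho_1) \geq r_0 \geq 1$. Since $\rho_1$ is a (left-sided) local minimum of $N$, we have $\dot N(\rho_1) \leq 0$, whereas (\ref{eq-N}) forces $\dot N(\rho_1) \geq 0$. Hence equality: $(1 - w(\rho_1)^2)^2 = r(\rho_1)^2$. The tight inequality chain $(1-w(\rho_1)^2)^2 \leq 1 \leq r(\rho_1)^2$ saturates only at $w(\rho_1) = 0$ and $r(\rho_1) = 1$, so we conclude $r_0 = 1$ and, by monotonicity with matching endpoints, $r \equiv 1$ (hence $N \equiv 0$) on $[0, \rho_1]$. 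Reinserting into (\ref{eq-N}) yields $(1-w^2)^2 \equiv 1$, so $w \equiv 0$ and $U = \dot w/r \equiv 0$ on $[0, \rho_1]$. The constraint (\ref{eq-kappa-constraint}) at $\rho = 0$ then collapses to $E_0 = 0$, placing us on the initial data $(r_0, w_0, U_0) = (1, 0, 0)$ with $E_0 = 0$, and ODE uniqueness propagates the trivial solution $r \equiv 1$, $w \equiv 0$ globally, contradicting our assumption.

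The main obstacle is the degenerate case $r_0 = 1$, $w_0 = 0$, in which $\dot N(0) = 0$ and higher-order information is needed to secure the initial positivity of $N$. Beyond that, the argument is essentially algebraic, with all the rigidity funneled through the fact that $(1-w^2)^2 \leq 1 \leq r^2$ saturates only at the single point $(w, r) = (0, 1)$.
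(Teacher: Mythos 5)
Your proof is correct and follows essentially the same route as the paper: establish $N>0$ immediately after $\rho=0$ (with the higher-order Taylor analysis reserved for the degenerate case $r_0=1$, $w_0=0$, where indeed $\dddot N(0)=4U_0^2$), then derive a contradiction at a hypothetical first zero of $N$ using $\dot N|_{N=0} = 1-(1-w^2)^2/r^2 \geq 0$. The only cosmetic difference is at the final step, where the paper uses the strict increase of $r$ on $(0,\bar\rho)$ to get $\dot N(\bar\rho)>0$ outright, while you handle the equality case via a rigidity argument that funnels back to the trivial data; both close the argument.
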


\begin{proof}
If $(r_0,w_0,U_0) = (1,0,0)$, then the solution is trivial with $r\equiv 1$ and $w\equiv 0$, cf.\ \S \ref{subsec-trivial-solutions}. 
Assume therefore that $r_0 \geq 1$ and $(r_0,w_0,U_0) \not= (1,0,0)$. 
By (\ref{eq-N}), we have
\begin{equation*}
\dot N(0) = 1-\frac{(1-w_0^2)^2}{r_0^2} \geq 1 - (1-w_0^2)^2 \geq 0.
\end{equation*}
If $r_0 > 1$, then the first inequality is strict and $N>0$ for small $\rho > 0$ since $N(0)=0$ by our choice of initial conditions.
If $r_0 = 1$ and $|w_0|\not=0$, then the second inequality is strict and again $N > 0$ for small $\rho > 0$.
If $r_0 = 1$ and $w_0=0$, then we can continue differentiating (\ref{eq-N}) to see that that $\dot N(0) = \ddot N(0) = 0$ but $\dddot N(0) = 4U_0^2$, and the latter is positive since otherwise $(r_0,w_0,U_0) = (1,0,0)$.
Thus, $N > 0$ for small $\rho > 0$ in all cases.

Suppose that $N$ ever reaches zero again in the region $|w|\leq 1$, so that there exists $\bar\rho > 0$ with $N > 0$ and $|w|\leq 1$ for $0 < \rho < \bar \rho$, as well as $N(\bar\rho) = 0$.
Then $r$ increases on this range and $r(\bar\rho) > r_0 \geq 1$, giving, by the same estimate as above, that $\dot N(\bar\rho) > 0$. 
This is a contradiction, so that we must have $N > 0$ for all $\rho >0$.
\end{proof}

The next result tells us that the condition $N \to 0$ as $\rho \to \infty$ is in fact a characterizing property of asymptotically cylindrical orbits.


\begin{lemma}\label{lemma-N-tends-to-0}
Consider a solution of the initial value problem {\upshape (\ref{eq-r}--\ref{eq-zeta}, \ref{eq-initial-vals})} with respect to fixed initial data in $\mathscr I_0$.
If $N \to 0$ as $\rho \to \infty$, then the solution is asymptotically cylindrical, cf.\ Theorem \ref{thm-classification} {\normalfont (ii)}.
\end{lemma}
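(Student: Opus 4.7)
The plan is to successively pin down the limits of all dependent variables from $N\to 0$, and then settle the dichotomy on $r_0$ and the oscillation count for $w$.

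First, I observe that by Lemma \ref{lemma-sing} the hypothesis $N\to 0$ forces the orbit to stay in $\{N+\zeta\geq 0,\,|w|\leq 1\}$ (otherwise $N\to-\infty$), so Lemma \ref{lemma-asymptotic-good} applies: $r$ has a finite positive limit $r_\infty$, all remaining dependent variables stay bounded, and Lemma \ref{lemma-ineq} already yields $\zeta\to 0$. To extract $U\to 0$, I would use the autonomous energy $F=2\dot w^2-(1-w^2)^2$ from Definition \ref{def-energy}, which satisfies $\dot F=-4(\kappa-2N)\dot w^2$. Since $\liminf\kappa\geq 1$ by Lemma \ref{lemma-kappa-to-1} and $N\to 0$, the factor $\kappa-2N$ is eventually bounded below by $\tfrac12$, so $F$ is eventually non-increasing and also bounded (from $|w|\leq 1$ and $\dot w=rU$ bounded), hence convergent. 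This forces $\int^\infty\dot w^2<\infty$, and since $\ddot w$ is bounded by (\ref{eq-ym}), Barbălat's lemma (Lemma \ref{lemma-barbalat}) applied to $\dot w^2$ gives $\dot w\to 0$, hence $U\to 0$ and $\kappa\to 1$ by Lemma \ref{lemma-kappa-to-1}.

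Passing to the limit in the constraint (\ref{eq-kappa-constraint}) one then gets $(1-w^2)^2/r^2\to 1$, so $1-w^2\to r_\infty$ (non-negative square root) and $w^2\to 1-r_\infty$, which already forces $r_\infty\leq 1$. To exclude $r_\infty<1$, I would note that in this case $w^2$ would be bounded away from $0$, so $w$ must have a definite sign for large $\rho$ and converge to some $w_\infty=\pm\sqrt{1-r_\infty}\neq 0$. But then (\ref{eq-U}) yields $\dot U\to-w_\infty\neq 0$, which is incompatible with $U\to 0$. Hence $r_\infty=1$ and $w\to 0$. The dichotomy on $r_0$ now follows quickly: if $r_0\geq 1$, Lemma \ref{lemma-N-positive-r0-geq1} leaves only the trivial option $r\equiv 1$, $w\equiv 0$, since otherwise $N>0$ throughout would give $r(\rho)>r_0\geq 1$ for $\rho>0$, contradicting $r\to 1$. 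Thus in the non-trivial regime we must have $r_0<1$.

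Finally, I would argue by contradiction that in the non-trivial regime $w$ has infinitely many zeros. Suppose it has only finitely many, so WLOG $w>0$ for large $\rho$. Applying the standard substitution $w=e^{-\psi}y$ with $\dot\psi=\tfrac12(\kappa-2N)$ removes the first-order term in (\ref{eq-ym}) and produces $\ddot y+Q y=0$ with
\[
Q=(1-w^2)-\tfrac{1}{4}(\kappa-2N)^2-\tfrac{1}{2}(\dot\kappa-2\dot N).
\]
Using (\ref{eq-kappa}) and (\ref{eq-N-2}) together with the limits just established, $\dot\kappa\to 0$ and $\dot N\to 0$, so $Q\to 1-\tfrac14=\tfrac34$. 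A Sturm comparison with the oscillatory equation $\ddot z+\tfrac12 z=0$ then forces $y$, and hence $w=e^{-\psi}y$, to have infinitely many zeros, yielding the contradiction. The main obstacle is precisely this last step: because the coefficients of (\ref{eq-ym}) are only asymptotically autonomous and we are deliberately avoiding the structural-stability machinery flagged in Remark \ref{remark-bfm-gap}, the oscillation of $w$ must be extracted directly from the self-adjoint form of the transformed equation via Sturm's theorem.
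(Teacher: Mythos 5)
Your proposal is correct. The first two thirds — invoking Lemma \ref{lemma-sing} and Lemma \ref{lemma-asymptotic-good} to confine the orbit and get a finite limit $r_\infty>0$, then using the monotonicity of the autonomous energy $F$ together with Barb\u{a}lat's lemma to obtain $\dot w\to 0$, $U\to 0$, $\kappa\to 1$, and finally reading off $(1-w^2)^2/r^2\to 1$ from the constraint — is essentially the paper's own argument. You pin down the limit of $w$ slightly differently: the paper deduces $w_\infty\in\{0,\pm1\}$ from $\ddot w\to -w_\infty(1-w_\infty^2)$ and excludes $\pm1$ via $F\to-r_\infty^2\neq 0$, whereas you exclude $r_\infty<1$ by noting that $\dot U\to -w_\infty\neq 0$ would contradict $U\to 0$; both are sound and of comparable weight. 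The genuine divergence is in the oscillation count. The paper introduces the polar angle $\theta$ of $(w,\dot w)$ and shows $\limsup\dot\theta\leq-\tfrac12$ directly from the estimate $\dot\theta+\tfrac12\leq\tfrac12|\kappa-2N-1|+w^2$, so $\theta\to-\infty$. You instead perform the Liouville substitution $w=e^{-\psi}y$ with $\dot\psi=\tfrac12(\kappa-2N)$, compute that the potential $Q$ of the resulting self-adjoint equation tends to $\tfrac34$, and apply Sturm comparison against $\ddot z+\tfrac12 z=0$. Your computation of $Q$ and of the limits $\dot\kappa\to0$, $\dot N\to 0$ is correct, and since $e^{-\psi}>0$ the zeros of $y$ and $w$ coincide, so the argument closes. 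The two methods are close cousins (a Pr\"ufer-angle estimate versus Sturm comparison after normal form), both elementary and both avoiding the structural-stability machinery criticized in Remark \ref{remark-bfm-gap}; the paper's polar-angle version has the minor advantage of not requiring differentiation of $\kappa$ and $N$, while yours plugs into a textbook theorem. One cosmetic remark: your contradiction framing for the last step is superfluous, since the Sturm argument proves infinitely many zeros outright without assuming there are finitely many; and for completeness you should note, as the paper does, that for $r_0<1$ the trivial orbit $w\equiv 0$ is excluded because it has $E_0<0$.
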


\begin{proof}
We first note that $r$ has a finite limit $0 < r_\infty \leq 1$ by Lemma \ref{lemma-asymptotic-good} (ii).
Furthermore, the energy (\ref{eq-energy}) tends to $0$ (since $\kappa$ is bounded and $\zeta \to 0$).
Now the autonomous energy (\ref{eq-aut-energy}) satisfies $F = r^2(E-1) \to -r_\infty^2$.
For any $\rho_0 \geq 0$, we also have
\begin{equation*}
F(\rho_0) + F(\rho) = -\int_{\rho_0}^\rho \dot F = 4\int_{\rho_0}^\rho (\kappa-2N)\dot w^2 .
\end{equation*}
If $\rho_0$ is selected so large that $\kappa - 2N \geq c$ for $\rho \geq \rho_0$ and some constant $c >0$ (this is possible since $\liminf \kappa \geq 1$ and $N\to 0$), then we see by the monotone convergence theorem that
\begin{equation*}
\int_{\rho_0}^\infty \dot w^2 = \lim_{\rho\to\infty} \int_{\rho_0}^\rho \dot w^2
\leq \frac{1}{4c}\lim_{\rho\to\infty} (F(\rho_0)+F(\rho)) = \frac{F(\rho_0)+r_\infty^2}{4c},
\end{equation*}
and hence $\dot w$ is square-integrable near infinity.
Since $$\ddot w = -(\kappa-2N)\dot w - w(1-w^2)$$ is bounded (note that $\dot w$ is bounded because $F$ and $w$ are), we see that $\dot w^2$ is uniformly continuous, and hence $\dot w \to 0$ by Barbălat's lemma (Lemma \ref{lemma-barbalat}). 
Thus $U\to 0$, as well as $\kappa \to 1$ by Lemma \ref{lemma-kappa-to-1}.

Now $(1-w^2)^2 = 2\dot w^2 - F$ has a limit at infinity, and hence $w$ also tends to some limit $|w_\infty| \leq 1$ by continuity. 
By equation (\ref{eq-w}), we must have $w_\infty \in \{0, \pm 1\}$ since $\ddot w \to -w_\infty(1-w_\infty^2)$, and any other choice of $w_\infty$ would contradict $\dot w \to 0$.
But $w_\infty = \pm 1$ is impossible since that would imply the absurdity $F \to 0 = -r_\infty^2$. 
It follows that $w \to 0$, and from $E\to 0$ we also get $r \to 1$.

Next, we observe that the only asymptotically cylindrical solution with $r_0 \geq 1$ is the trivial orbit with $r\equiv 1$ and $w\equiv 0$ by Lemma \ref{lemma-N-positive-r0-geq1} (since otherwise $N>0$ and $r>1$ for $\rho > 0$ with $|w|\leq 1$, and we cannot have $r\to 1$).
On the other hand, for $r_0<1$ the trivial solution $w\equiv 0$ has $E_0 < 0$ and thus is not admissible.
Therefore it only remains to show that $w$ truly has infinitely many zeros in the case $r_0 < 1$.
To this end, we consider the polar angle defined by
\begin{equation*}
-2\pi < \theta(0) \leq 0, \qquad \tan\theta = \frac{\dot w}{w} \quad\text{if}\quad w\not=0,
\end{equation*}
and extended smoothly across zeros of $w$.
Note that this is well-defined because $(w,\dot w)$ stays away from the origin, since $w\not\equiv 0$.
A simple calculation yields
\begin{equation}\label{eq-theta}
\dot\theta + 1 = -(\kappa - 2N)\frac{w\dot w}{w^2+\dot w^2} + \frac{w^4}{w^2+\dot w^2}.
\end{equation}
Now since $2|w\dot w| \leq w^2+\dot w^2$,
\begin{equation*}
\frac{|w\dot w|}{w^2+\dot w^2} \leq \frac{1}{2} \quad\text{and}\quad
\frac{w^4}{w^2+\dot w^2} = \frac{w^2}{1 + (\dot w/w)^2} \leq w^2
\end{equation*}
we get
\begin{equation*}
\dot\theta + \frac12 \leq \frac12|\kappa-2N-1| + w^2.
\end{equation*}
But the right hand side of the latter inequality tends to $0$, so $\limsup\dot\theta\leq -\frac12$, implying that $\theta\to-\infty$, and thus $w$ necessarily crosses zero infinitely many times.
\end{proof}


With Lemma \ref{lemma-N-tends-to-0} in hand, we can now continue the proof of Theorem \ref{thm-classification}, by tackling the different cases depending on the number of zeros of $N$, as already hinted.

\begin{lemma}\label{lemma-N-infinite-zeros}
Consider a solution of the initial value problem {\upshape (\ref{eq-r}--\ref{eq-zeta}, \ref{eq-initial-vals})} with respect to fixed initial data in $\mathscr I_0$.
If $N$ has infinitely many zeros, then the solution is asymptotically cylindrical, cf.\ Theorem \ref{thm-classification} {\normalfont (ii)}.
\end{lemma}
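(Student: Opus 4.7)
The plan is to reduce the lemma, via Lemma \ref{lemma-N-tends-to-0}, to showing that $N \to 0$ as $\rho \to \infty$, from which asymptotic cylindricity follows directly. Lemma \ref{lemma-asymptotic-good} already delivers global existence on $[0,\infty)$, boundedness of all dependent variables, and a non-zero (possibly infinite) limit $r_\infty \in (0,\infty]$ for $r$. Part (ii) of that same lemma provides the equivalence: $r_\infty$ is finite if and only if $N \to 0$. So the entire task reduces to precluding $r_\infty = \infty$, after which Lemma \ref{lemma-N-tends-to-0} finishes the argument.

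To rule out $r \to \infty$, I would evaluate the constraint (\ref{eq-kappa-constraint}) at the zeros of $N$. Since the solution is real-analytic and $N$ is not identically zero (otherwise $r \equiv r_0$ is bounded and we are done immediately), the hypothesized infinitely many zeros of $N$ are isolated and must accumulate only at infinity, so they form a sequence $\rho_k \to \infty$ with $N(\rho_k) = 0$. At each such point, the constraint collapses to
\begin{equation*}
2U(\rho_k)^2 \;=\; \zeta(\rho_k)^2 \;+\; \frac{(1-w(\rho_k)^2)^2}{r(\rho_k)^2} \;-\; 1.
\end{equation*}
If $r(\rho_k) \to \infty$, then combining the decay $\zeta \to 0$ from Lemma \ref{lemma-ineq} with the standing bound $|w| \leq 1$ forces the right-hand side to tend to $-1$, which contradicts $U^2 \geq 0$.

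There is no serious obstacle here beyond recognizing that the Hamiltonian-type constraint, evaluated precisely at zeros of $N$, has enough rigidity to obstruct the radius from escaping to infinity: with $N$, $\zeta$, and $(1-w^2)^2/r^2$ all simultaneously small, there is simply no room left to balance the $+1$ in the constraint. Having ruled out $r_\infty = \infty$, Lemma \ref{lemma-asymptotic-good}(ii) yields $N \to 0$, and Lemma \ref{lemma-N-tends-to-0} then places the solution in class (ii) of Theorem \ref{thm-classification}.
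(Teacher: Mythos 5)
Your proposal is correct and follows essentially the same route as the paper: both evaluate the constraint (\ref{eq-kappa-constraint}) at the zeros $\rho_k\to\infty$ of $N$ and combine $\zeta\to 0$ with $|w|\leq 1$ to force $r_\infty<\infty$, then invoke Lemma \ref{lemma-asymptotic-good}(ii) and Lemma \ref{lemma-N-tends-to-0}. The only cosmetic difference is that the paper solves the constraint for $r(\rho_k)^2$ and bounds it directly by $1$, whereas you argue by contradiction with $U^2\geq 0$; these are the same computation.
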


\begin{proof}
Let $\rho_n$ be the increasing sequence of zeros of $N$.
Note that $\rho_n$ must be unbounded by Lemma \ref{lemma-N-bounded-finite} and the fact that the region $N < -\zeta$ is invariant, so that $N$ can only have zeros while staying in the region $-\zeta \leq N \leq 1$.
Since $r$ tends to some limit at infinity by Lemma \ref{lemma-asymptotic-good} (ii), we may use the constraint (\ref{eq-kappa-constraint}) to calculate
\begin{equation*}
\lim_{\rho\to\infty} r(\rho)^2 = \lim_{n\to\infty} r(\rho_n)^2
= \lim_{n\to\infty}\frac{(1-w(\rho_n)^2)^2}{1+2U(\rho_n)^2 - \zeta(\rho_n)^2}
\leq \limsup_{n\to\infty} \,(1-w(\rho_n)^2)^2 \leq 1
\end{equation*}
which is finite since $w$ is bounded by Lemma \ref{lemma-asymptotic-good} (iii).
This implies that the limit of $r$ is finite, and thus $N \to 0$ by Lemma \ref{lemma-asymptotic-good} (ii).
Hence, we may apply Lemma \ref{lemma-N-tends-to-0} to conclude.
\end{proof}


\begin{lemma}\label{lemma-N-neg}
Consider a solution of the initial value problem {\upshape (\ref{eq-r}--\ref{eq-zeta}, \ref{eq-initial-vals})} with respect to fixed initial data in $\mathscr I_0$.
Assume there is a point $\rho_0 \geq 0$ such that $N(\rho_0) = 0$ and $N(\rho) < 0$ for $\rho > \rho_0$. Then the solution is singular, cf.\ Theorem \ref{thm-classification} {\normalfont (i)}.
\end{lemma}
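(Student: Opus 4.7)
The plan is to argue by contradiction: assume the solution is non-singular, extract the restrictions that non-singularity imposes on the orbit, and then derive an incompatibility with the hypothesis that $N(\rho_0)=0$ is followed by $N<0$.

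First, I would invoke Lemma \ref{lemma-sing} and the subsequent convention to restrict attention to the region $\{N+\zeta \geq 0, \, |w|\leq 1\}$, which any non-singular orbit must inhabit for all $\rho\geq 0$. By Lemma \ref{lemma-asymptotic-good}, the solution is then defined on all of $[0,\infty)$, and all dependent variables are bounded. The point of this step is to license the use of Lemma \ref{lemma-N-tends-to-0} in what follows.

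Next, I would show that $N\to 0$ as $\rho\to\infty$. Indeed, $\zeta\to 0$ by Lemma \ref{lemma-kappa-to-1}, and combining this with the invariance constraint $N\geq -\zeta$ gives $\liminf N\geq 0$, while the hypothesis forces $N<0$ on $(\rho_0,\infty)$. So $N\to 0$, and Lemma \ref{lemma-N-tends-to-0} classifies the orbit as asymptotically cylindrical. In particular, $r(\rho)\to 1$ at infinity.

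The final step is to turn strict monotonicity of $r$ into a contradiction. Since $N<0$ strictly on $(\rho_0,\infty)$, $r$ is strictly decreasing there, so for any $\rho_1>\rho_0$ one has
\begin{equation*}
1=\lim_{\rho\to\infty} r(\rho)\leq r(\rho_1) < r(\rho_0),
\end{equation*}
giving $r(\rho_0)>1$. Plugging $N(\rho_0)=0$ and $|w(\rho_0)|\leq 1$ into (\ref{eq-N}) yields
\begin{equation*}
\dot N(\rho_0)=1-\frac{(1-w(\rho_0)^2)^2}{r(\rho_0)^2}\geq 1-\frac{1}{r(\rho_0)^2}>0,
\end{equation*}
so $N(\rho)>0$ for $\rho$ just above $\rho_0$, contradicting the hypothesis.

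The key conceptual point — and the only place one needs to think rather than compute — is recognizing that one should \emph{use} $r\to 1$ from the asymptotically cylindrical classification to force $r(\rho_0)>1$, rather than attempting any finer analysis of $N$. In particular, one does not need to worry about whether $N$ has oscillations or limits before $\rho_0$; the monotonicity of $r$ on $(\rho_0,\infty)$ together with the universal limit $r_\infty=1$ is enough to pin down $r(\rho_0)>1$ and close the argument via the sign of $\dot N(\rho_0)$.
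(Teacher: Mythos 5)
Your proof is correct and follows essentially the same route as the paper's: reduce to the region $\{N+\zeta\geq 0,\,|w|\leq 1\}$ via Lemma \ref{lemma-sing}, deduce $N\to 0$ from $\zeta\to 0$ and the hypothesis, invoke Lemma \ref{lemma-N-tends-to-0} to get $r\to 1$, and conclude $r(\rho_0)>1$ forces $\dot N(\rho_0)>0$, a contradiction. The only difference is cosmetic: you spell out the monotonicity argument for $r(\rho_0)>1$ slightly more explicitly than the paper does.
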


\begin{proof}
In view of Lemma \ref{lemma-sing}, we only need to show that $N$ exits the region $-\zeta \leq N < 0$. 
Aiming to reach a contradiction, assume that $N$ remains in this region for all $\rho > \rho_0$. 
Then we also have $|w| \leq 1$ in view of Lemma \ref{lemma-sing} (i).
Furthermore, we see that $N \to 0$ as $\rho\to\infty$ (since $\zeta \to 0)$, and hence $r\to 1$ by Lemma \ref{lemma-N-tends-to-0}. 
But $N$ is negative for $\rho > \rho_0$, so we must have $r > 1$ on this range, in particular also at $\rho=\rho_0$ where $N(\rho_0)=0$ and so
\begin{equation*}
\dot N(\rho_0) = 1 - \frac{(1-w(\rho_0)^2)^2}{r(\rho_0)^2} > 0,
\end{equation*}
implying that $N>0$ for sufficiently close $\rho \geq \rho_0$, which is a contradiction.  
\end{proof}

\begin{lemma}\label{lemma-N>0}
Consider a solution of the initial value problem {\upshape (\ref{eq-r}--\ref{eq-zeta}, \ref{eq-initial-vals})} with respect to fixed initial data in $\mathscr I_0$.
Assume there is a point $\rho_0 \geq 0$ such that $N(\rho) > 0$ for all $\rho > \rho_0$.
\begin{enumerate}
\item[(i)] If $r$ is bounded, then the solution is asymptotically cylindrical, cf.\ \mbox{Theorem \ref{thm-classification} {\normalfont (ii)}.}
\item[(ii)] If $r$ is unbounded, then the solution is asymptotically flat, cf.\ Theorem \ref{thm-classification} {\normalfont (iii)}.
\end{enumerate}
\end{lemma}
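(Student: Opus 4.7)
This follows immediately from prior results. Since $N>0$ on $(\rho_0,\infty)$, $r$ is strictly increasing there, so together with the boundedness assumption it has a finite limit. By Lemma \ref{lemma-asymptotic-good}(ii), a finite $r$-limit forces $N\to 0$, and then Lemma \ref{lemma-N-tends-to-0} gives that the solution is asymptotically cylindrical.

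\textbf{Part (ii).} Lemma \ref{lemma-asymptotic-good}(ii) again provides a limit for $r$, and unboundedness forces $r\to\infty$. The crux of the proof is to establish $N\to 1$, and I would proceed as follows. Using Lemma \ref{lemma-ineq} (in particular $\kappa\leq 2+\sqrt{E_0}\sech(\rho)-N$) together with the constraint (\ref{eq-kappa-constraint}), all dependent variables are bounded, so subsequential limits exist along any $\rho_k\to\infty$. For any such convergent subsequence $(N,\kappa,U,w)(\rho_k)\to(N_*,\kappa_*,U_*,w_*)$, passing to the limit in (\ref{eq-kappa-constraint}) (using $r\to\infty$, $\zeta\to 0$, $|w|\leq 1$) yields $2\kappa_*N_*-N_*^2=1+2U_*^2$. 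If in addition $\dot N(\rho_k)\to 0$, equation (\ref{eq-N}) forces $\kappa_*N_*=1$, and combining with $\kappa_*+N_*\leq 2$ from Lemma \ref{lemma-ineq} then forces $(N_*,\kappa_*,U_*)=(1,1,0)$. I would then split into two cases: if $N$ is eventually monotone, it converges and Barbălat's lemma (Lemma \ref{lemma-barbalat}) applied to $\dot N$ (uniformly continuous since $\ddot N$ is bounded) gives $\dot N\to 0$, so the limit equals $1$; if $N$ is not eventually monotone, it has infinitely many local extrema $\rho_k\to\infty$ at which $\dot N=0$, the argument above shows $N(\rho_k)\to 1$ along these, and the monotonicity of $N$ between consecutive extrema yields $N\to 1$ everywhere by squeezing.

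Once $N\to 1$ is in hand, the same subsequential argument applied to $(\kappa,U)$ (now forcing $\kappa_*=1+U_*^2$ together with $\kappa_*\leq 1$) gives $\kappa\to 1$ and $U\to 0$. To control $w$, I would use the autonomous energy $F=2\dot w^2-(1-w^2)^2$ with $\dot F=-4(\kappa-2N)\dot w^2$: since $\kappa-2N\to -1$, $\dot F\geq 0$ eventually. Boundedness of $\dot w$ follows by contradiction via the Yang-Mills equation $\ddot w=(2N-\kappa)\dot w-w(1-w^2)$: if $\dot w(\rho_*)>M$ for $M$ large at a late point, then $|w(1-w^2)|\leq 1$ and $2N-\kappa\geq 1/2$ eventually give $\ddot w\geq M/2-1>0$, so $\dot w$ keeps increasing and $w$ would escape $[-1,1]$ linearly, contradicting invariance of the strip (Lemma \ref{lemma-sing}(i) together with $N+\zeta>0$). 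Hence $F$ is bounded and, being eventually monotone, converges; from $\dot F\geq c\dot w^2$ one deduces $\dot w\in L^2$, and Barbălat yields $\dot w\to 0$. An intermediate-value argument on subsequential limits of $w$ (each forced into $\{0,\pm 1\}$ by $\ddot w\to 0$) then shows $w$ has a limit $w_\infty\in\{0,\pm 1\}$. The case $w_\infty=0$ would force $F\to -1$ from above, but $F\geq -1$ always, so monotonicity gives $F\equiv -1$ eventually, and ODE uniqueness yields $w\equiv 0$: the trivial abelian solution with $r_0>1$ from \S\ref{subsec-trivial-solutions}.

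The asymptotic mass and phantom limits then follow. For $2m:=r(1-N^2)$, reparametrizing by $r$ (valid since $\dot r>0$) gives $\tfrac{d(2m)}{dr}=2U^2+(1-w^2)^2/r^2-\zeta^2$, each term integrable at infinity using $U^2=O(1/r^2)$ (from boundedness of $\dot w=rU$) and $\zeta^2\leq\pi_0^2 e^{-2\tau_0}/r^2$ (from monotonicity of $\tau$), so $2m\to\beta$ finite. For the phantom limit, $(r\zeta)^{\cdot}=-r\zeta(\kappa-N)$ yields $r\zeta=\pi_0 e^{-\tau}$; the ODE $\dot{(\kappa-N)}+\kappa(\kappa-N)=2U^2+(1-w^2)^2/r^2$, combined with the integrability of the right-hand side and $\kappa\geq\tanh(\rho)$, gives $\int_0^\infty(\kappa-N)<\infty$ by a variation-of-constants estimate, so $\tau\to\tau_\infty$ finite and $r\zeta\to\pi_0 e^{-\tau_\infty}>0$ whenever $\pi_0\ne 0$. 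The main obstacle throughout is the $N\to 1$ step, which is what motivates the careful case-split and the use of Barbălat in place of the dynamical-systems machinery flagged in Remark \ref{remark-bfm-gap}; bounding $\dot w$ to unlock the convergence of $F$ is the secondary but essential technical point.
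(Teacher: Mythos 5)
Your proof is correct, and in the central step it takes a genuinely different route from the paper. The paper first proves $U\to 0$ by splitting on the number of zeros of $w$ (finitely many: integrate $U$ by parts against $1/r$ and apply Barb\u{a}lat to $v=\int U$; infinitely many: the energy estimate $\dot E\leq N(1-w^2)^2/r^2$ evaluated at the zeros of $U$ gives $E\to 1$), then gets $\kappa\to 1$ from Lemma \ref{lemma-kappa-to-1} and only afterwards extracts $N\to 1$ from the constraint. You invert the order: you obtain $N\to 1$ first by a subsequential argument at the critical points of $N$ — where (\ref{eq-N}) forces $\kappa_*N_*=1$, which together with $\kappa_*+N_*\leq 2$ pins $(\kappa_*,N_*)=(1,1)$ — combined with Barb\u{a}lat in the eventually-monotone case and a squeeze between extrema otherwise; the limits $\kappa\to 1$ and $U\to 0$ then fall out of the constraint and $\kappa+N\leq 2+o(1)$ with no case analysis on $w$ at all. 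This is arguably cleaner, at the modest cost of having to check that $\ddot N$ is bounded and of a little care with the squeeze when critical points of $N$ are not isolated (working on the components of $\{\dot N\neq 0\}$ settles this). Your remaining steps parallel the paper's with minor variations: you bound $\dot w$ by a direct comparison on $\ddot w=(2N-\kappa)\dot w-w(1-w^2)$ rather than via the invariance of $\{F\geq\varepsilon\}$ (both hinge on the same contradiction with $|w|\leq 1$); your identification of $w_\infty\in\{0,\pm 1\}$ goes through $\ddot w\to 0$ and connectedness of the set of subsequential limits, which implicitly uses boundedness of $\dddot w$ — fine, but worth stating; and your convergence of $\tau$ uses variation of constants for $\dot u+\kappa u=2U^2+(1-w^2)^2/r^2$ with $u=\kappa-N$ and $\kappa\geq\tanh\rho$, where the paper instead integrates the identity $N-\kappa=-\tfrac{1}{N}(\dot N-2U^2+\zeta^2)$. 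Both routes rely on the same integrability inputs ($\dot w\in L^2$, $\zeta^2\lesssim\sech^2\rho$), so the conclusions $r\zeta\to\alpha>0$ and $r(1-N^2)\to\beta\geq 0$ come out identically; only note that non-negativity of $\beta$ (from $N\leq 1$) should be recorded to match the statement of Theorem \ref{thm-classification} (iii).
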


\begin{proof}
Part (i) follows trivially from Lemma \ref{lemma-N-tends-to-0} since $N \to 0$ by Lemma \ref{lemma-asymptotic-good} (ii).
For part (ii), we first show that $U \to 0$.
We consider separately the cases where $w$ has finitely or infinitely many zeros.

If $w$ has finitely many zeros, then there is a $\bar\rho \geq 0$ such that, 
without loss of generality, $0 < w(\rho) \leq 1$ for $\rho \geq \bar\rho$.
Define the function
\begin{equation*}
v(\rho) = \int_{\bar\rho}^\rho U = \frac{w(\rho)}{r(\rho)} - \frac{w(\bar\rho)}{r(\bar\rho)} + \int_{\bar\rho}^\rho \frac{w\dot r}{r^2}, 
\end{equation*}
where we integrate by parts in the last equality.
Note that $w$ is bounded and $r \to \infty$, so that the first term on the right-hand side tends to 0, while the final integral also has a finite limit since it is a bounded increasing function of $\rho$.
Thus $v$ has a finite limit as infinity, and since $\ddot v = \dot U$ is bounded by Lemma \ref{lemma-asymptotic-good} (iii) and (\ref{eq-U}), it follows from Barbălat's lemma \cite{barbalat} that $\dot v = U \to 0$ in this case.

If $w$ has infinitely many zeros, then so does $U$ and we can find a sequence $\rho_k \to \infty$ with $U(\rho_k) = 0$.
Since $\kappa-N \geq 0$, we see that the energy satisfies
\begin{equation*}
\dot E = -4(\kappa-N)U^2 + \frac{N(1-w^2)^2}{r^2}  \leq \frac{N(1-w^2)^2}{r^2},
\end{equation*}
and hence
\begin{equation*}
E(\rho) - E(\rho_k) \leq \int_{\rho_k}^\rho \frac{N(1-w^2)^2}{r^2} \leq \int_{\rho_k}^\rho \frac{\dot r}{r^3} = \frac{2}{r(\rho_k)^2} - \frac{2}{r(\rho)^2},
\end{equation*}
since $|w|\leq 1$ and $N>0$ for $\rho \geq \rho_k$ if $k$ is large enough.
Thus, since $U(\rho_k) = 0$,
\begin{equation*}
\limsup_{\rho\to\infty} E(\rho) \leq  E(\rho_k)+\frac{2}{r(\rho_k)^2} 
= 1 - \frac{(1-w(\rho_k)^2)^2}{r(\rho_k)^2} +\frac{2}{r(\rho_k)^2} \to 1,
\end{equation*}
where in the end we let $k \to \infty$.
Since trivially $\liminf E \geq 1$ (because $r$ is unbounded and $w$ is bounded), we see that $E \to 1$. 
This implies that
$2U^2 = E - 1 + \frac{(1-w^2)^2}{r^2} \to 0$
in this case as well.

By Lemma \ref{lemma-kappa-to-1}, we now get $\kappa \to 1$.
From the constraint (\ref{eq-kappa-constraint}), we see that
\begin{equation*}
(1-N)^2 = 1 - E + 2N(\kappa-1) + \zeta^2 \to 0,
\end{equation*}
and hence $N \to 1$. 
Next, we wish to show that $\dot w \to 0$. 
Note that this does not follow directly from the fact that $U = \dot w/r \to 0$ since $r$ is unbounded.
It is not even clear, prima facie, whether $\dot w$ is bounded - this does not follow from Lemma \ref{lemma-asymptotic-good} (iii), since we do not consider $\dot w$ as one of the dependent variables.

Since $\kappa-2N \to -1$ by the proof above, we may choose $\bar\rho \geq \rho_0$ so that $\kappa - 2N \leq -\tfrac12$ for $\rho\geq \bar\rho$.
The autonomous energy $F=2\dot w^2-(1-w^2)^2$ satisfies
\begin{equation*}
\dot F = -4(\kappa-2N)\dot w^2 \geq 2\dot w^2 = F+(1-w^2)^2 \geq F.
\end{equation*}
This implies, in particular, that the region $F \geq \varepsilon$ is invariant for any $\varepsilon > 0$.
In this region, we have
$2\dot w^2 = F + (1-w^2)^2 \geq \varepsilon>0$.
Thus, if the orbit enters the region $F\geq \varepsilon$, then $w$ exits the strip $|w|\leq 1$ at some finite $\rho$, which is a contradiction.
Consequently, $F \leq 0$ for all $\rho \geq \bar\rho$, which also implies that $\dot w$ is bounded.%
\footnote{This argument is similar in spirit to \cite[Lemma 3--4]{bizon-popp}.}
Now
\begin{equation*}
\int_{\bar\rho}^\rho \dot w^2 \leq -2\int_{\bar\rho}^\rho (\kappa-2N)\dot w^2 = \frac12 \int_{\bar\rho}^\rho \dot F \leq -\frac12 F(\bar\rho)
\end{equation*}
and applying the monotone convergence theorem with $\rho \to \infty$ shows that $\dot w$ is square-integrable.
Since $\ddot w$ is bounded, $\dot w^2$ is also uniformly continuous, and we get $\dot w \to 0$ by Barbălat's lemma (Lemma \ref{lemma-barbalat}).

Now since $F$ is non-decreasing for $\rho \geq \bar\rho$ and $F \leq 0$, we see that $F$ has a limit at infinity.
Hence, $(1-w^2)^2 = 2\dot w^2 - F$ also has a limit at infinity, and consequently $w$ also tends to some limit $|w_\infty| \leq 1$ by continuity.
By (\ref{eq-w}) we see that $\ddot w \to -w_\infty(1-w_\infty^2)$ and consequently $w_\infty \in \{0, \pm 1\}$ (otherwise we could not have $\dot w \to 0$).

If $w_\infty = 0$, we need to show that the solution is the trivial $w\equiv 0$.
In fact, we see that $F \to -1$, and since $F$ is non-decreasing for large $\rho$ and $F \geq -1$, we necessarily have $F\equiv -1$ which implies $w\equiv 0$.
The discussion in \S \ref{subsec-trivial-solutions} also implies that in this case $r_0 > 1$.

Finally, we need to study the limits of $r\zeta$ and $r(1-N^2)$ at infinity.
Note that
\begin{equation*}
\frac{\mathrm d}{\mathrm d\rho} \log (r\zeta) = N-\kappa \leq 0
\end{equation*}
by Lemma \ref{lemma-ineq}, so $r\zeta$ is non-increasing and
\begin{equation*}
\lim_{\rho\to\infty} r\zeta = r_0\sqrt{E_0} \, \exp \int_0^\infty (N-\kappa)
\end{equation*}
by the monotone convergence theorem.
Therefore, to show that $r\zeta$ has a finite positive limit, it suffices to show that the integral on the right-hand side converges.
By (\ref{eq-N-2}),
\begin{equation*}
N-\kappa = -\frac{1}{N} (\dot N - 2U^2 + \zeta^2).
\end{equation*}
Hence, the claim will follow if we can show that of $U$ and $\zeta$ are square integrable near infinity, since $N\to1$.
But the integrability of $U^2$ follows trivially from the integrability of $\dot w^2 = (rU)^2$ (since $r\to\infty$), whereas the integrability of $\zeta^2$ follows directly from the bound $\zeta^2 \leq E_0 \sech^2(\rho)$, cf.\ Lemma \ref{lemma-ineq}.
On the other hand, we have
\begin{equation}\label{eq-mass-limit}
r(1-N^2) = 2m = 2\int_0^\rho \dot m = r_0 + \int_0^\rho \frac{\dot r}{r^2}\left[2\dot w^2 + (1-w^2)^2 - (r\zeta)^2\right],
\end{equation}
and the integral converges as $\rho \to 0$ since the quantity in the parentheses converges the finite quantity $(1-w_\infty^2)^2 - \alpha^2$.
Therefore $r(1-N^2)$ has a finite limit at infinity by dominated convergence, and this limit must be non-negative since $N \to 1$ and $N \leq 1$ (cf.\ Lemma \ref{lemma-ineq}).
\end{proof}

We now have essentially all the ingredients necessary to complete the proof of Theorem \ref{thm-classification}.
To summarize, the pieces of the puzzle can be put together as follows.
In Lemma \ref{lemma-N-bounded-finite} we showed that the solution can be continued as long as $N$ remains finite, and in particular the only way that a solution stops existing at a finite point $\rho=\rho_\infty$ is if $N \to \infty$ (and $r\to 0$), which is precisely the behaviour describing the singular orbits from Theorem \ref{thm-classification} (i).
In Lemma \ref{lemma-sing} we showed that such singular behaviour happens precisely when the solution enters the region $N+\zeta < 0$, and we also showed that the solution enters this region if it exits the strip $|w|\leq 1$.
This allowed us to assume that the solution satisfies $N+\zeta \geq 0$ and $|w|\leq 1$, in which case it must be globally defined by Lemma \ref{lemma-N-bounded-finite} and the bounds from Lemma \ref{lemma-ineq}.
In Lemma \ref{lemma-asymptotic-good} we showed that the remaining dependent variables remain globally bounded and, most importantly, that $r$ has a limit (which may be infinite) at infinity.
The remainder of the proof then revolved around studying the different possible behaviours of $N$, depending on its number of zeros: 
\begin{itemize}
\item In Lemma \ref{lemma-N-infinite-zeros} we treat the case when $N$ has infinitely many zeros, and show that the solution is then asymptotically cylindrical as in Theorem \ref{thm-classification} (ii);
\item In Lemma \ref{lemma-N-neg} we treat the case when $N$ has finitely many zeros and is negative after the last zero, in which case the solution is singular as in Theorem \ref{thm-classification} (i);
\item In Lemma \ref{lemma-N>0} we treat the final case when $N$ has finitely many zeros and is positive after the last zero, in which case the solution asymptotically cylindrical Theorem \ref{thm-classification} (ii) if $r$ is bounded, and asymptotically flat as in Theorem \ref{thm-classification} (iii) if $r$ is unbounded.
\end{itemize}
Since this exhausts all the possible cases, the proof of Theorem \ref{thm-classification} is complete.

\section{Neighbourhood theorem}
\label{sec-neighbourhood-theorem}

In this section, we will study the set of admissible initial data
\begin{equation*}
\mathscr I_0 = \left\{ (r_0, w_0, U_0) \in \mathbb R^3 \;\mid\; r_0 > 0, \; |w_0| \leq 1 , \; E_0 \geq 0\right\},
\end{equation*}
and particularly its subsets generated by the classification in Theorem \ref{thm-classification}.
We will hereafter fix $r_0>0$, and a given subset $\mathscr Y \subset \mathscr I_0$ we will write
\begin{equation*}
\mathscr Y(r_0) = \{ (w_0,U_0) \in \mathbb R^2 \,\mid\, (r_0,w_0,U_0) \in \mathscr Y \}.
\end{equation*}
We also introduce the following notation, which slightly differs from the categorization in Theorem \ref{thm-classification}, but it will turn out to be more convenient throughout this section.

\begin{definition}\label{def-solution-classes}
We define the following subsets of $\mathscr I_0$:
\begin{itemize}
\item For the \textit{singular orbits}, i.e.\ those with $r\to 0$ and $N\to-\infty$ at some finite $\rho$, we define the subsets (denoting by $n$ the number of zeros of $w$ for $\rho > 0$):
\vskip0.1cm
\begin{itemize}
\item $\mathscr E_n$ as the set of \textit{escaping singular orbits}, for which $w$ escapes the strip $|w|\leq 1$.
\item $\mathscr C_n$ as the set of \textit{crashing singular orbits}, for which $w$ stays in $|w|\leq 1$.
We also put $\mathscr C = \bigcup_{n=0}^\infty \mathscr C_n$.
\end{itemize}
\item $\mathscr O$ as the set of \textit{oscillatory orbits}, defined for all $\rho>0$ with $(w,\dot w) \to (0,0)$.
\item $\mathscr R_n$ as the set of \textit{regular orbits}, defined for all $\rho > 0$ with $(|w|,\dot w) \to (1,0)$, and such that $w$ has $n$ zeros for $\rho > 0$.
\end{itemize}
\end{definition}

By Theorem \ref{thm-classification}, we see that the sets $\mathscr R_n, \mathscr E_n, \mathscr C, \mathscr O$ form a disjoint partition of $\mathscr I_0$.
We also observe that all orbits in $\mathscr R_n$ are asymptotically flat.
For orbits in $\mathscr O(r_0)$, we have that:
\begin{itemize}
\item if $r_0 > 1$, then $w\equiv 0$ and the orbit is asymptotically flat, 
\item if $r_0 = 1$, then $w\equiv 0$ and the orbit is asymptotically cylindrical,
\item if $r_0 < 1$, then $w \not\equiv 0$ and the orbit is asymptotically cylindrical. 
\end{itemize}
Finally, we note that $\mathscr C(r_0) = \varnothing$ for $r_0 \geq 1$, cf.\ Lemma \ref{lemma-N-positive-r0-geq1}.

Our current goal is to show the following result, which tells us how the orbits near a given orbit type behave.

\begin{theorem}[Neighbourhood theorem]\label{thm-ngbh-initial-sets}
Fix $r_0 > 0$ and let $x_0 = (w_0,U_0) \in \mathscr I_0(r_0)$.
Define the slices
\begin{align*}
\mathscr W_\delta(r_0, x_0) &= \{ (\lambda, U_0) \in \mathscr I_0(r_0) \,\mid\, |\lambda-w_0|\leq \delta \},\\[0.1cm]
\mathscr U_\delta(r_0, x_0) &= \{ (w_0, \mu) \in \mathscr I_0(r_0) \,\mid\, |\mu-U_0|\leq \delta \}.
\end{align*}
\begin{enumerate}
\item Let either
\begin{equation*}
\mathscr Y(r_0) = \mathscr E_n(r_0) \quad \text{or}\quad \mathscr Y = \mathscr C(r_0) \setminus \{0\}.
\end{equation*}
If $x_0 \in \mathscr Y(r_0)$ and $\delta>0$ is sufficiently small, we have
\begin{equation*}
\mathscr U_\delta(r_0, x_0) \subset \mathscr Y(r_0).
\end{equation*}
If we also have $w_0\not=0$, then
\begin{equation*}
\mathscr W_\delta(r_0, x_0) \subset \mathscr Y(r_0).
\end{equation*}
\item If $x_0 \in \mathscr O(r_0)$, then for any $n \geq 0$ we can choose $\delta>0$ so small that
the $y_0$-orbit has at least $n$ zeros of $w$ for $\rho>0$,
whenever $0 < |y_0 - x_0| \leq \delta$.
\item If $x_0 \in \mathscr R_n(r_0) \setminus \{0\}$ and $\delta>0$ is sufficiently small, we have
\begin{equation*}
\mathscr U_\delta(r_0, x_0) \subset  (\mathscr R_n \cup \mathscr E_n \cup \mathscr E_{n+1})(r_0).
\end{equation*}
If we also have $w_0\not=0$, then
\begin{equation*}
\mathscr W_\delta(r_0, x_0) \subset  (\mathscr R_n \cup \mathscr E_n \cup \mathscr E_{n+1})(r_0).
\end{equation*}
\end{enumerate}
\end{theorem}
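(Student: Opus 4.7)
The proof strategy rests on continuous dependence of the solution on the initial data (Remark \ref{remark-continuity-initial-data}), applied on a compact interval $[0,\rho_1]$ chosen large enough that the $x_0$-orbit has already entered an open subset of phase space which pins down its asymptotic class according to \S\ref{sec-classification}. Continuity then transfers this open configuration to every nearby orbit and forces it into the claimed class. Transversality of the zeros of $w$, needed to keep the zero count stable under perturbation, is automatic whenever $w\not\equiv 0$: simultaneous vanishing of $w$ and $\dot w$ combined with uniqueness for (\ref{eq-w}) viewed as a linear second-order ODE in $(w,\dot w)$ would force $w\equiv 0$.

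For part (1), singular orbits terminate at a finite $\rho_\infty$ with $N\to-\infty$, and by Lemma \ref{lemma-sing}(ii) they first enter the forward-invariant region $\{N+\zeta<0\}$. For $x_0\in\mathscr E_n$ the orbit additionally enters the invariant set $\{|w|>1,\,w\dot w>0\}$; picking $\rho_1<\rho_\infty$ at which both strict inequalities and the $n$ transverse zeros of $w$ are already visible produces an open configuration preserved under perturbation in $U_0$, and also in $w_0$ provided $w_0\neq 0$ (the exclusion avoiding the creation of a spurious zero at $\rho=0$ after perturbation). For $x_0\in\mathscr C\setminus\{0\}$ the orbit enters $\{N+\zeta<0\}$ while remaining inside $|w|\leq 1$; picking $\rho_1$ with the strict bounds $|w|\leq 1-c$ and $N+\zeta\leq -c$ for some $c>0$ transfers both by continuity, and the boundedness of $w$ in the singular region established in the proof of Lemma \ref{lemma-sing}(ii) controls $w$ on the short remaining interval before the singular time.

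For part (2), if $x_0$ is non-trivial then $w$ has infinitely many transverse zeros by Theorem \ref{thm-classification}(ii), and choosing $\rho_1$ past the $n$-th zero suffices by continuity. The delicate subcase is the trivial orbit $x_0=(0,0)$, only possible when $r_0\geq 1$ in view of \S\ref{subsec-trivial-solutions}; there $w\equiv 0$ and we must track how arbitrarily small perturbations respond. I plan to linearise (\ref{eq-w}) around $w\equiv 0$: the limiting coefficient $\kappa-2N$ equals $+1$ when $r_0=1$ and $-1$ when $r_0>1$, in both cases yielding an oscillator with bounded period whose non-zero perturbations pick up at least $n$ transverse zeros on a sufficiently long interval $[0,\rho_1]$; a Gr\"onwall estimate on the variation of solutions with respect to the initial data keeps the perturbed $w$ uniformly small over $[0,\rho_1]$ and makes the linearised rotation count rigorous.

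For part (3), the $x_0$-orbit satisfies $(w,\dot w)\to(\pm 1,0)$ and, without loss of generality, $w\to 1$ after its $n$-th zero. The equilibrium $(1,0)$ is a saddle for the limiting autonomous Duffing equation $\ddot w+\dot w+w(1-w^2)=0$ coming from (\ref{eq-w}) as $\kappa-2N\to -1$. My plan is to choose $\rho_1$ so large that the $x_0$-orbit lies inside a small neighbourhood $\mathcal V$ of $(1,0)$ on which the autonomous energy $F$ is strictly increasing along the flow; by continuity the perturbed orbit reaches $\mathcal V$ at time $\rho_1$ with the same $n$ transverse zeros, and thereafter exactly three outcomes are possible: converging along the local stable manifold to $(1,0)$ (class $\mathscr R_n$), crossing the unstable separatrix into $\{w>1,\dot w>0\}$ (class $\mathscr E_n$), or swinging back, crossing zero exactly once more, and escaping through $\{w<-1,\dot w<0\}$ (class $\mathscr E_{n+1}$). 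The heart of the proof, and the main obstacle, is ruling out oscillatory and crashing behaviour for nearby orbits near the saddle, which I intend to achieve through monotonicity of $F$ in $\mathcal V$: this prevents accumulation of infinitely many zeros near $(1,0)$ and forces the at-most-one-extra-zero dichotomy.
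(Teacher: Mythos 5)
Your overall strategy (continuous dependence on a compact interval, transferring an open configuration that pins down the asymptotic class, transversality of zeros of $w$) is the same as the paper's, and your parts (1) for $\mathscr E_n$ and (2) are essentially the paper's arguments. However, there are two genuine gaps.

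First, in part (1) for the crashing case, showing that a nearby orbit lands in $\mathscr C$ rather than $\mathscr E_m$ requires showing that its $w$ \emph{never leaves} the strip $|w|\leq 1$ up to the singular time, and ``boundedness of $w$ near $\rho_\infty$'' from Lemma \ref{lemma-sing} does not give this: a bounded $w$ can still cross $|w|=1$. Nor does $|w(\rho_1)|\leq 1-c$ plus a ``short remaining interval'' suffice, since the length of the remaining interval and the size of $\dot w$ on it are not controlled against each other. The missing ingredient is the constraint: in the invariant region $N\leq-\zeta$ the energy satisfies $E=2\kappa N-N^2+\zeta^2\leq 0$, equivalently $(1-w^2)^2/r^2\geq 1+2U^2\geq 1$ (cf.\ (\ref{eq-w-cannot-cross-1})), so $w$ can never touch the lines $|w|=1$ once the orbit has entered that region. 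This is what actually forces nearby orbits of a crashing orbit to crash.

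Second, and more seriously, your part (3) leans on the saddle/stable-manifold/separatrix picture for the equilibrium $(\pm1,0)$ of the limiting Duffing equation. The actual equation is $\ddot w+(\kappa-2N)\dot w+w(1-w^2)=0$ with a $\rho$-dependent coefficient, i.e.\ a \emph{non-autonomous} perturbation, so there is no invariant stable manifold or separatrix to cross; this is precisely the reasoning the paper flags as unjustified in Remark \ref{remark-bfm-gap} and deliberately avoids. Moreover, monotonicity of $F$ \emph{in a small neighbourhood $\mathcal V$ of $(\pm1,0)$} cannot force the trichotomy: the problematic scenario is an orbit that turns around, leaves $\mathcal V$, traverses the strip, and then turns around again (or crashes) before escaping, and ruling this out needs (a) the preliminary uniform estimate that $N$ and $\kappa$ stay within $\varepsilon$ of $1$ for \emph{all} nearby orbits and all later $\rho$ while $|w|\leq 1$ (so that $\dot F=-4(\kappa-2N)\dot w^2\geq 4(1-3\varepsilon)\dot w^2$ holds globally, not just in $\mathcal V$), and (b) the quantitative computation that on the excursion from $w\approx\mp1$ back to $w=0$ the autonomous energy gains at least $2\sqrt2(1-3\varepsilon)\int_{-\sqrt{1-\sqrt\varepsilon}}^{0}\sqrt{(1-w^2)^2-\varepsilon}\,\mathrm dw\to\tfrac{4\sqrt2}{3}>0$, so that $F$ becomes strictly positive, $2\dot w^2\geq F$ stays bounded below, and $w$ is forced out through the opposite side of the strip with exactly one extra zero. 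Neither ingredient appears in your plan, and without them the ``at-most-one-extra-zero dichotomy'' is asserted rather than proved.
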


\begin{remark}\label{remark-uniqueness-regular-orbits}
This result is an analogue of \cite[Propositions 30 and 32]{breit-forg-mais} for the wormhole setting.
Part (i) essentially says that $\mathscr C(r_0)$ and each $\mathscr E_n(r_0)$ are open relative to the lines where either $w_0$ is fixed, or $U_0$ is fixed and $w_0\not=0$.
In parts (ii) and (iii), one can actually get a stronger result.
In fact, one can show that oscillatory and regular orbits are locally unique (cf.\ \cite[Propositions 31 and 33]{breit-forg-mais}).
This fact is, however, non-essential for the proofs in \S \ref{sec-construction}, nor does it improve the statements of the corresponding theorems, so we omit them.%
\footnote{The authors of the cited paper likely included this analysis as it seems like they were also attempting to get some global uniqueness result for regular orbits with a given number of zeros of $w$.}
Finally, we would like to point out that the proof of part (iii) given in \cite[Lemma 20]{breit-forg-mais} for their setting again invokes the theory of structurally stable vector fields, which we have already commented on in Remark \ref{remark-bfm-gap}.
The proof given here naturally avoids this.
\end{remark}

\begin{remark}\label{remark-close-orbits}
Throughout the proofs in this section, we will repeatedly use some basic facts about ordinary differential equations.
The solution flow depends continuously on the initial data $(w_0,U_0) \in \mathscr I_0(r_0)$, as we have already observed in Remark \ref{remark-continuity-initial-data}.
Furthermore, classical results \cite[Theorem 3.2]{hartman} ensure that the maximal forward point of existence $\bar \rho > 0$ of the solution is a lower-semicontinuous function of the initial data.
This is important to us because we would like to compare the values of an orbit with its nearby orbits near the end of its existence.

Consider now an orbit with initial data $x_0=(w_0, U_0) \in \mathscr I_0$.
The $x_0$-orbit is either singular and hence defined up to some finite $\rho_\infty$, or it exists for all $\rho \geq 0$.
As already noted, we can choose $\delta > 0$ so that the solutions with initial data $y_0$ such that $|y_0 - x_0| \leq \delta$ are defined up to any $\bar\rho < \rho_\infty$ in the former case, or up to $\bar\rho$ as large as we would like in the latter case.  

By further shrinking $\delta > 0$ if necessary, we can also ensure that the values of the dependent variables at $\bar\rho$ differ by no more than any given $\varepsilon > 0$, since the solutions depend continuously on the initial parameters.
Moreover, if the $x_0$-orbit is defined for all $\rho \geq 0$ and a continuous function of its dependent variables, call it $\xi$, tends to some limit $L$, then we can for any $\varepsilon, K > 0$ find $\bar\rho>0$ (depending only on $\varepsilon$) so large and $\delta>0$ so small that $\xi(\rho,y_0)$ differs from $L$ by no more than $\varepsilon$ on the interval $\bar\rho \leq \rho \leq \bar\rho + K$.
Indeed, we can first choose $\bar\rho$ so large that $|\xi(\rho,x_0)-L|\leq \varepsilon/2$ for $\rho\geq\bar\rho$, and $\delta$ small enough that orbits with initial data in $\{|y_0-x_0|\leq \delta\}$ are defined at least up to $\bar\rho+K$.
Then $\xi$ is continuous on the compact set $$[\bar\rho, \bar\rho+K] \times \{|y_0-x_0|\leq \delta\},$$ so it is also uniformly continuous there.
By shrinking $\delta$ further if necessary we then get $|\xi(\rho,y_0)-\xi(\rho,x_0)| \leq \varepsilon/2$ for all $(\rho,y_0) \in [\bar\rho, \bar\rho+K] \times \{|y_0-x_0|\leq \delta\}$, which gives the desired claim.

Finally, we note that if $w(\rho,x_0)$ has (exactly) $n$ zeros for $0 < \rho < \bar\rho$  and $w(\bar\rho,x_0)\not=0$ (in particular $w\not\equiv 0$), then for sufficiently small $\delta$ and initial data $y_0$ with either $y_0 \in \mathscr U_\delta(r_0,x_0)$, or $y_0 \in \mathscr W_\delta(r_0,x_0)$ with $w_0\not=0$,
the corresponding solution $w(\rho,y_0)$ has (exactly) $n$ zeros for $0 < \rho < \bar\rho$.
To see this, let $\{\rho_k \mid k=1,\ldots,n\}$ be the $n$ zeros of $w(\rho,x_0)$ on $0 < \rho < \bar\rho$.
If $w_0 = 0$ we also put $\rho_0 = 0$ and let $k$ run from $0$ to $n$ instead.
Then $\dot w(\rho_k, x_0) \not= 0$ since otherwise we would have $w\equiv 0$ (cf.\ Remark \ref{remark-symmetries}),
and we can by continuity find $\delta,\gamma > 0$ so small that 
\begin{equation*}
\dot w(\rho,y_0) \not= 0, \qquad \rho \in (\rho_k - \gamma, \rho_k + \gamma) =: I_k,
\end{equation*}
for each $k$ and all $y_0$-orbits (note that $\gamma$ can be chosen uniformly because $k$ runs over a finite set).
In particular, $w(\rho,y_0)$ is strictly monotone in each $I_k$. 
On the other hand, $\rho\mapsto|w(\rho,x_0)|$ is positive on the compact set
\begin{equation*}
K = [0,\bar\rho] \setminus \bigcup_k I_k,
\end{equation*}
so it also has a positive minimum on $K$ and we may select $\varepsilon>0$ strictly smaller than this minimum.
Then, finally, we shrink $\delta$ further if necessary to make
\begin{equation*}
|w(\rho,y_0) - w(\rho,x_0)| < \varepsilon,
\end{equation*}
which also ensures that, for each fixed $y_0$, the corresponding $\rho\mapsto w(\rho,y_0)$ has no zeros on $K$, and has exactly one zero in each $I_k$ by monotonicity and the intermediate value theorem (note that the zeros are no longer necessarily located at $\rho_k$ for the $y_0$-orbits).
This gives the claim if $w_0\not=0$.
On the other hand, if $w_0=0$, then we note that the zero at $\rho_0=0$ stays located at $\rho=0$ for the nearby $y_0$-orbits, since we only consider $y_0\in \mathscr U_\delta(r_0,x_0)$ in this case, and consequently the number of zeros for $\rho>0$ remains unaffected.%
\footnote{This is important in view of the way we count zeros of $w$ (cf.\ Definition \ref{def-solution-classes}).
It is also the main reason for not allowing the case $w_0=0$ when considering the slices $\mathscr W_\delta$ in the $w_0$-direction. Indeed, in this case the initial zero of $w$ could move around too much, which would also destroy the counting of the zeros near $\rho=0$.}
\end{remark}

\begin{proof}[Proof of Theorem \ref{thm-ngbh-initial-sets}]
Assume first that $x_0\in \mathscr E_n(r_0) \setminus\{0\}$ and choose $\delta > 0$ so small and $\bar\rho$ so close to the singular point that $(N+\zeta)(\bar\rho, y_0) < 0$ and $w(\bar\rho, y_0) > 1$, and $w$ has exactly $n$ zeros for the desired initial data $y_0 \in \mathscr U_\delta(r_0,x_0)$ or $y_0 \in \mathscr W_\delta(r_0,x_0)$ with $w_0\not=0$ (cf.\ Remark \ref{remark-close-orbits}).
All of the $y_0$-orbits are then singular by Lemma \ref{lemma-sing}, and $w(\rho,y_0)$ cannot gain any more zeros for $\rho>\bar\rho$ because it stays in the region $|w|>1$. 
It follows that $y_0\in \mathscr E_n(r_0)$.

An analogous argument can be applied when $x_0 \in \mathscr C(r_0) \setminus \{0\}$ to conclude that the nearby $y_0$-orbits are singular (but in this case $w(\rho,y_0)$ could gain zeros for $\rho>\bar\rho$, which is why we refrain from counting them in the first place).
Furthermore, the energy (\ref{eq-energy}) is non-increasing for $N \leq -\zeta$ and satisfies
\begin{equation}\label{eq-w-cannot-cross-1}
E = 2\kappa N - N^2 + \zeta^2 \leq 0, \quad\text{i.e.}\quad \frac{(1-w^2)^2}{r^2} \geq 1 + 2U^2 \geq 1,
\end{equation}
so $w$ cannot cross the lines $|w|=1$ in this region and consequently $y_0 \in \mathscr C(r_0) \setminus \{0\}$.


Now consider the oscillatory case (ii) of Theorem \ref{thm-ngbh-initial-sets}.
If $r_0<1$, the $x_0$-orbit has infinitely many zeros of $w$ while $w\not\equiv 0$, so the result follows trivially by continuity with respect to initial data, cf.\ Remark \ref{remark-close-orbits} (alternatively, one can use a similar argument as in the case $r_0=1$ below, if one prefers).
For $r_0 \geq 1$, we first note that the orbits with initial data $y_0$ in the punctured ball $0 < |y_0 - x_0| \leq \delta$ cannot be in $\mathscr O(r_0)$ or $\mathscr C(r_0)$, as follows from Theorem \ref{thm-classification} and the fact that no orbit can crash for $r_0\geq1$, cf.\ Lemma \ref{lemma-N-positive-r0-geq1}.
To show that these have arbitrarily many zeros, we consider the polar angle of $(w,\dot w)$ defined by
\begin{equation*}
-2\pi < \theta(0) \leq 0, \qquad \tan\theta = \frac{\dot w}{w} \quad\text{if}\quad w\not=0,
\end{equation*}
and extended smoothly across zeros of $w$.
Note that this is well-defined since the orbits starting in the punctured ball $0 < |y_0-x_0| \leq \delta$ cannot reach the fixed point $(w,\dot w)=(0,0)$.
A simple calculation yields (cf.\ (\ref{eq-theta})) 
\begin{equation*}
\dot\theta + \frac12 \leq \frac12\lvert\kappa - 2N \pm 1\rvert + w^2.
\end{equation*}
The idea is to show that the right-hand side is sufficiently small on an arbitrarily large interval, which will imply arbitrarily many zeros of $w$.
Note that:
\begin{itemize}
\item If $r_0=1$, then necessarily $x_0 = (1,0,0)$ and $w\equiv 0$, $N\equiv 0$, $\kappa = \tanh(\rho)\to1$.
\item If $r_0 > 1$, then necessarily $x_0 = (r_0,0,0)$ and $w \equiv 0$, while $\kappa\to1$ and $N\to 1$ since these orbits are asymptotically flat, cf.\ \S \ref{subsec-trivial-solutions}.
\end{itemize} 
Hence, in both cases we can for any $K>0$ find $\bar\rho>0$ sufficiently large (independent of $K$) and $\delta>0$ so small that (cf.\ Remark \ref{remark-close-orbits})
\begin{equation*}
|(\kappa-2N)(\rho,y_0) \pm 1|\leq \frac{1}{4} \quad\text{and}\quad w(\rho,y_0)^2 \leq \frac{1}{8},
\end{equation*}
for $\bar\rho \leq \rho \leq \bar\rho+K$ and all $y_0$-orbits with $|y_0-x_0|\leq \delta$, where the first inequality holds with the minus sign for $r_0=1$, and with the plus sign for $r_0 > 1$.
It therefore follows that $\dot \theta(\rho,y_0) \leq -\frac14$ for $\bar\rho \leq \rho \leq \bar\rho+K$ and $|y_0-x_0| \leq \delta$, implying that
\begin{equation*}
\theta(\bar\rho + K, y_0) = \theta(\bar\rho, y_0) + \int_{\bar\rho}^{\bar\rho+K}\dot\theta(\cdot, y_0) \leq \theta(\bar\rho, y_0) - \frac{K}{4}
\end{equation*}
But the right-hand side can be made arbitrarily (negatively) large, which implies arbitrarily many zeros of $w$.

Finally, we consider part (iii), so let $x_0\in \mathscr R_n(r_0) \setminus \{0\}$.
We first want to make sure that, for nearby orbits, $N$ and $\kappa$ stay near 1.
Since $E_0$ is bounded on $|y_0-x_0|\leq \delta$, we see that for any given $\bar\varepsilon \leq 1$ and sufficiently large $\bar\rho$ depending on it, all orbits with initial data in $|y_0-x_0|\leq \delta$ have 
\begin{equation*}
\sqrt{E_0(y_0)}\sech(\rho) \leq 2\bar\varepsilon \quad\text{and}\quad (\kappa+N)(\rho,y_0)\leq 2(1+\bar\varepsilon)
\end{equation*}
for all $\rho \geq \bar\rho$, where the latter inequality follows by the former and Lemma \ref{lemma-ineq}.
Since the $x_0$-orbit is asymptotically flat, it has $r\to\infty$ and $N\to 1$, so we may further increase $\bar\rho$ and shrink $\delta$ if necessary to ensure that 
$$N(\bar\rho, y_0) \geq 1 - \bar\varepsilon, \quad r(\bar\rho, y_0) \geq \frac{1}{\bar\varepsilon}$$
for all orbits with initial data in $|y_0-x_0|\leq \delta$.
As long as $N\geq0$ and $|w|\leq 1$, we then have $r \geq r(\bar\rho, y_0) \geq \frac{1}{\bar\varepsilon}$, and equation (\ref{eq-N}) gives
\begin{equation*}
\dot N \geq 1 - \bar\varepsilon^2 - [2(1+\bar\varepsilon)-N]N  \geq (1+\bar\varepsilon - N)^2 - 4\bar\varepsilon,
\end{equation*}
which shows that $N$ increases in the region $N<1+\bar\varepsilon -2\sqrt{\bar\varepsilon}$ and in particular stays positive if $\bar\varepsilon$ is sufficiently small.
There are two options, as long as $|w|\leq 1$:
\begin{itemize}
\item If $N(\bar\rho, y_0) < 1+\bar\varepsilon -2\sqrt{\bar\varepsilon}$, then $N(\rho, y_0) \geq N(\bar\rho, y_0) \geq 1-\bar\varepsilon$ for $\rho \geq \bar\rho$.
\item If $N(\bar\rho, y_0) \geq 1+\bar\varepsilon -2\sqrt{\bar\varepsilon}$, then $N(\rho, y_0) \geq 1+\bar\varepsilon -2\sqrt{\bar\varepsilon}$ for $\rho\geq\bar\rho$.
\end{itemize}
Hence, for any $\varepsilon > 0$, we can choose $\bar\varepsilon$ small enough that
\begin{align*}
N(\rho, y_0) &\geq 1 - \max\{ \bar\varepsilon, 2\sqrt{\bar\varepsilon} - \bar\varepsilon \} \geq 1-\varepsilon\\[0.1cm]
\kappa(\rho, y_0) &\leq 2 + \bar\varepsilon - N(\rho,y_0) \leq 1 + \bar\varepsilon + \max\{ \bar\varepsilon, 2\sqrt{\bar\varepsilon} - \bar\varepsilon \} \leq 1 + \varepsilon
\end{align*}
for all $\rho \geq \bar\rho$ as long as $|w|\leq 1$, and all orbits with initial data $y_0$ in $|y_0-x_0|\leq \delta$, as desired.

Since the $x_0$-orbit has exactly $n$ zeros of $w$ and $|w|\to1$, we can again shrink $\delta$ and increase $\bar\rho$ to ensure that the $y_0$-orbits such that either $y_0 \in \mathscr U_\delta(r_0,x_0)$, or $y_0 \in \mathscr W_\delta(r_0,x_0)$ with $w_0\not=0$,
satisfy $(1-w(\bar\rho,y_0)^2)^2 \leq \varepsilon$ and have exactly $n$ zeros of $w$ for $0 < \rho < \bar\rho$, cf.\ Remark \ref{remark-close-orbits}.
The autonomous energy (\ref{eq-aut-energy}) satisfies
$$\dot F = -4(\kappa-2N) \dot w^2 \geq 4(1-3\varepsilon)\dot w^2.$$
In particular, this shows that $F$ is non-decreasing for $\varepsilon < 1/3$, so
$F(\rho, y_0) \geq F(\bar\rho,y_0) \geq -\varepsilon$, or equivalently
\begin{equation}\label{eq-dw-lower-bound}
2\dot w^2 \geq (1-w^2)^2 - \varepsilon,
\end{equation}
for $\rho\geq\bar\rho$ with $|w|\leq 1$.
Now there are three options for the $y_0$-orbit (we omit $y_0$ as an argument henceforth), where we assume without loss of generality that $w$ and $\dot w$ are negative at $\bar\rho$, in view of the symmetry $(w,U)\mapsto -(w,U)$:
\begin{enumerate}
\item $w$ continues decreasing but stays in the region $|w|\leq1$, hence the orbit is in $\mathscr R_n(r_0)$,
\item $w$ continues decreasing and enters the region $w < -1$, hence the orbit is in $\mathscr E_n(r_0)$,
\item $w$ decreases until it reaches a minimum at some point $\rho_1 \geq \bar \rho$, so $\dot w(\rho_1) = 0$ and $-1 < w(\rho_1) < 0$, and then turns back towards the region $w > 0$. 
\end{enumerate}
To complete the proof of (iii), we thus need to show that $w$ enters the region $w > 1$ in case (iii), so the orbit is in $\mathscr E_{n+1}(r_0)$.
The argument below is, in essence, very similar to \cite[Proposition 4.8]{smol-wass-2}.
Define $b > a \geq \rho_1$ as the first points such that $w(a) = -\sqrt{1-\sqrt\varepsilon}$ and $w(b) = 0$.
These are well-defined because $\dot w$ increases and $\dot w > 0$ in the region $w < 0$ for $\rho > \rho_1$ by (\ref{eq-w}), so $w$ also increases uniformly in that region.
Then for $a \leq \rho \leq b$, the right-hand side of (\ref{eq-dw-lower-bound}) is non-negative, so we may take the square root and estimate
\begin{align*}
F(\rho) &\geq -\varepsilon + 4(1-3\varepsilon)\int_{\rho_1}^\rho \dot w^2 \\
&\geq -\varepsilon +  2\sqrt 2(1-3\varepsilon)\int_{a}^{b} \dot w \sqrt{(1-w^2)^2 - \varepsilon}\\
&= -\varepsilon + 2\sqrt 2(1-3\varepsilon) \int_{-\sqrt{1-\sqrt\varepsilon}}^{0} \sqrt{(1-w^2)^2 - \varepsilon} \,\mathrm dw,
\end{align*}
for $\rho \geq b$.
The right-hand side depends continuously on $\varepsilon$ and tends to $\frac{4\sqrt 2}{3}$ as $\varepsilon \to 0$.
This implies that $F$ is lower bounded by a positive constant for $\rho \geq b$ when $\varepsilon$ is sufficiently small.
Hence, the same is true for $\dot w$, implying thats $w$ keeps increasing uniformly and eventually enters the region $w > 1$, as desired.
\end{proof}

\begin{remark*}
We would like to point out that in \cite[Proposition 3.5]{smol-wass-3}, the authors prove the analogue of part (ii), $r_0 > 1$, of Theorem \ref{thm-ngbh-initial-sets} for their setting.
However, their proof is much more involved, because they treat $r$ as the independent variable, so they cannot use continuous dependence on initial data since the initial point for the $x_0$-orbit is a singular point of the system in those coordinates. However, this singularity only appears as a consequence of using $r$ as a coordinate, and in our system we do not face the same difficulties, so the proof is much simpler. 
\end{remark*}

\section{Energy theorem}
\label{sec-energy-theorem}

Having studied neighbourhoods of different types of orbits, we also wish to understand how the extremal parts of the set $\mathscr I_0$ look like.
This can viewed as a compactness result, as it will provide us with appropriate upper and lower bounds for the shooting method.

\begin{theorem}[Energy theorem]\label{thm-energy-large-smol}
Fix $r_0 > 0$, let $(w_0, U_0) \in \mathscr{I}_0(r_0)$ and
\begin{equation*}
E_0 = E_0(w_0,U_0) = 1+2U_0^2 - \frac{(1-w_0^2)^2}{r_0^2}.
\end{equation*}

\begin{enumerate}
\item[(i)] If $E_0 = 0$, then either $(r_0,w_0,U_0) = (1,0,0) \in \mathscr O$ or $(w_0,U_0) \in \mathscr C(r_0)$ with $r_0 < 1$.
\item[(ii)] If $E_0$ is sufficiently large, then $w$ is monotone and $(w_0,U_0) \in (\mathscr E_0 \cup \mathscr E_1)(r_0)$.
\end{enumerate}
\end{theorem}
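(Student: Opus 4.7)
For part (i): $E_0 = 0$ forces $\zeta(0) = 0$, hence $\zeta \equiv 0$ by \eqref{eq-zeta}. If $(w_0, U_0) = (0,0)$, the relation $E_0 = 0$ fixes $r_0 = 1$, identifying the solution as the trivial oscillatory orbit $(1,0,0) \in \mathscr O$ of \S\ref{subsec-trivial-solutions}. Otherwise $r_0 = (1-w_0^2)/\sqrt{1+2U_0^2} < 1$ and $|w_0| < 1$ strictly. To conclude $(w_0, U_0) \in \mathscr C(r_0)$, my plan is to show that $N(\rho) < 0$ for small $\rho > 0$: then $N + \zeta < 0$ immediately, so Lemma \ref{lemma-sing} makes the orbit singular, while \eqref{eq-w-cannot-cross-1} keeps $|w| < 1$. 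One computes $\dot N(0) = -2U_0^2$, which is strictly negative whenever $U_0 \neq 0$; in the remaining case $U_0 = 0$ and $w_0 \neq 0$ (forcing $r_0 = 1-w_0^2$), the first two derivatives of $N$ vanish at $0$, but a short Taylor computation using $\dot U(0) = -w_0$ gives $\dddot N(0) = -4 w_0^2 < 0$.

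For part (ii), the symmetry $(w, U) \mapsto -(w, U)$ of Remark \ref{remark-symmetries} lets me assume $U_0 > 0$, so $\dot w(0) = r_0 U_0 > 0$. My plan is to work with the pendulum-type energy
\begin{equation*}
\tilde E_w := \frac{\dot w^2}{2} + V(w),\qquad V(w) := \frac{w^2}{2} - \frac{w^4}{4},
\end{equation*}
which satisfies the identities $\tilde E_w = (F+1)/4$ and $\dot{\tilde E}_w = -(\kappa - 2N)\dot w^2$. Since $V \leq 1/4$ on $[-1,1]$, a strict inequality $\tilde E_w > 1/4$ forces $\dot w \neq 0$ whenever $|w| \leq 1$; combined with $\dot w(0) > 0$, this drives $w$ monotonically through $w = 1$ at some finite $\rho^*$ with $\dot w(\rho^*) > 0$ (transversal exit). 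Such a monotone $w$ on $[0, \rho^*]$ has zero zeros if $w_0 \geq 0$ or exactly one zero if $w_0 < 0$, placing $(w_0, U_0) \in \mathscr E_0 \cup \mathscr E_1$.

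The task thus reduces to securing $\tilde E_w(\rho) > 1/4$ on $[0, \rho^*]$. Using $\dot w^2 \leq 2\tilde E_w$ in the identity for $\dot{\tilde E}_w$ yields $\dot{\tilde E}_w \geq -2(\kappa-2N)_+ \tilde E_w$, whence Gronwall's lemma gives
\begin{equation*}
\tilde E_w(\rho) \geq \tilde E_w(0)\exp\!\Bigl(-2\int_0^\rho(\kappa-2N)_+\,d\rho'\Bigr).
\end{equation*}
Since $\tilde E_w(0) \sim r_0^2 U_0^2/2 \to \infty$ as $E_0 \to \infty$, the crux is a uniform bound $\int_0^{\rho^*}(\kappa-2N)_+\,d\rho \leq K_0(r_0, w_0)$ independent of $U_0$. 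The mechanism: $\dot w(0) = r_0 U_0$ forces a short traversal time $\rho^* = O(1/U_0)$, so although $\kappa$ grows to $O(U_0)$ via the Riccati equation $\dot\kappa = 1 + 2U^2 - \kappa^2$, the bound $\int_0^{\rho^*}\kappa \leq \log\cosh(\sqrt{1+2U_{\max}^2}\,\rho^*)$ stays uniformly bounded because $\sqrt{1+2U_{\max}^2}\,\rho^*$ tends to a finite limit depending only on $(r_0, w_0)$; simultaneously $|N| = o(1)$ on this short interval. Rigorously, I would run a coupled bootstrap controlling $\tilde E_w$, $|U|$, and $|N|$ together, using the integrating-factor identity $(Ue^\tau)' = -w(1-w^2)e^\tau/r$ (with $\tau = \int(\kappa-N)$) to bound $|U| \leq U_0(1 + o(1))$, and a Gronwall argument on $|\dot N| \leq 1 + O(1) + \kappa|N|$ to control $|N|$.

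The main obstacle will be closing this coupled bootstrap uniformly in $r_0$: the $\log\cosh$ bound on $\int \kappa$ itself depends on $1/r_0$, so $K_0(r_0, w_0)$ (and hence the admissible threshold $E_0 \gtrsim e^{2K_0(r_0, w_0)}$) blows up as $r_0 \to 0^+$. This intrinsic coupling of the $E_0$-threshold to $r_0$ appears to be the source of the new techniques alluded to in the introduction and distinguishes the wormhole setting from the particle-like and black hole analyses of \cite{breit-forg-mais}.
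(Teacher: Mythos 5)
Part (i) of your proposal is correct and follows the paper's argument essentially verbatim: $\zeta\equiv 0$, the sign analysis of $\dot N(0)=-2U_0^2$ and of $\dddot N(0)$ in the degenerate case, invariance of $N<0$ via Lemma \ref{lemma-sing}, and the observation (\ref{eq-w-cannot-cross-1}) that $w$ cannot cross $|w|=1$ there. (Your value $\dddot N(0)=-4w_0^2$ and the paper's $-4w_0^2(1-w_0^2)/r_0^2$ disagree by a harmless positive factor; both are strictly negative, which is all that is used.)

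Part (ii) contains a genuine gap, and it sits exactly at the step you flag as "the crux": the uniform bound $\int_0^{\rho^*}(\kappa-2N)_+\,d\rho\le K_0(r_0,w_0)$. Your proposed mechanism --- freeze $U$ at $U_{\max}$ in the Riccati equation to get $\kappa\le\sqrt{1+2U_{\max}^2}\tanh(\sqrt{1+2U_{\max}^2}\,\rho)$, hence $\int\kappa\le\log\cosh(\sqrt{1+2U_{\max}^2}\,\rho^*)$, and combine with $\rho^*=O(1/U_0)$ --- does not close as a bootstrap, even for $r_0=1$. The reason is quantitative: your Gronwall step degrades the lower bound on $\dot w$ by the factor $e^{-K_0}$, so the best traversal-time estimate available inside the bootstrap is $\rho^*\lesssim e^{K_0}/(r_0U_0)$, whence $\sqrt{1+2U_{\max}^2}\,\rho^*\gtrsim e^{K_0}/r_0$ and $\log\cosh(\cdot)\approx e^{K_0}/r_0$, which is never $\le K_0$. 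The "constant" in $\rho^*=O(1/U_0)$ is not independent of $K_0$, and the feedback is exponential. The missing idea in the paper is a self-limiting estimate that replaces the frozen-$U_{\max}$ comparison: from your integrating-factor identity $(Ue^\tau)'=-w(1-w^2)e^\tau/r$ (equivalently the paper's (\ref{eq-U-formula})) together with the monotonicity of $r\zeta$ and the uniform control of $r$ on a short interval (Lemma \ref{lemma-r-bound}, which your sketch also needs but does not invoke), one gets $2U^2\le 16(\zeta^2+1/r_0)$. This makes the source term in $\dot\kappa=1+2U^2-\kappa^2$ decay \emph{with} $\zeta$, yielding $\kappa\le b+a\zeta$ with $a,b$ depending only on $r_0$, and hence, via $\dot\zeta=-\kappa\zeta$, a lower bound (\ref{eq-zeta-bound}) on $\zeta$ whose exponential decay rate $b$ is independent of $E_0$. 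The paper then never integrates $\dot w$ all the way to $w=1$ (which, as your own $\log\cosh$ heuristic suggests, would only give $w-w_0\gtrsim$ a constant that need not exceed $2$); instead it derives a contradiction between $\int_0^{\bar\rho}\zeta\le 48/r_0$ (forced by $|w|\le1$ and $\dot w\ge\frac{r_0}{8}\zeta-4$) and the divergence of the lower bound on $\int_0^{\bar\rho}\zeta$ as $E_0\to\infty$, forcing the exit time $\bar\rho\to0$. Your pendulum-energy framework is compatible with this, but without the $U\lesssim\zeta+1/r_0$ feedback and the contradiction-via-$\int\zeta$ step, the argument as written cannot be completed.
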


\begin{remark*}
In part (ii), the required magnitude of $E_0$ depends on $r_0$.
The heuristic idea behind part (ii) is that if the initial velocity $\dot w_0$ of $w$ is chosen large enough (this is in fact equivalent to chosing the initial energy $E_0$ large enough since $r_0$ is fixed), then $w$ escapes the strip $|w|\leq 1$.
Even though this seeems intuitively obvious, the proof is surprisingly difficult.
The first issue is that, a priori, the orbit could crash arbitrarily fast for large $E_0$, before $w$ escapes the strip $|w|\leq 1$.
The second difficulty is that, even if the orbit does not crash, equation (\ref{eq-U}) implies that the larger $E_0$ is, the faster $U$ decreases, so it could become negative arbitrarily fast, in particular if $\kappa-N$ becomes large.
The crux of the issue here is that we do not have a bound for $\kappa$ that is uniform in $E_0$ (in particular, the bound from Lemma \ref{lemma-ineq} is not good enough).
\end{remark*}

\begin{proof}[Proof of Theorem \ref{thm-energy-large-smol} (i)]
Let $E_0 = 0$ and recall that $\zeta_0 = \sqrt{E_0} =  0$.
By (\ref{eq-zeta}), we then see that $\zeta \equiv 0$.
In view of this, Lemma \ref{lemma-sing} implies that the region $N < 0$ is invariant and the orbit is singular if it enters it.

In fact, we see from (\ref{eq-N-2}) that $\dot N_0 = -2U_0^2$, which is negative if $U_0 \not=0$, so that $N<0$ for $\rho > 0$ in this case.
On the other hand, if $U_0 = 0$, then $\dot N_0 = \ddot N_0 = 0$, but $$\dddot N_0 = -\frac{4w_0^2(1-w_0^2)}{r_0^2},$$
so that either $N < 0$ for $\rho > 0$, or else $(r_0,w_0,U_0)=(1,0,0) \in \mathscr O$, cf.\ \S \ref{subsec-trivial-solutions}.

This shows that the non-trivial solutions with $E_0=0$ are singular, so it only remains to show that they crash rather than escape the strip $|w|\leq 1$.
But this follows immediately since $|w_0|<1$ (note that $E_0 = 0$ prohibits $|w_0|=1$), and $w$ cannot cross the lines $|w|=1$ in the region $N\leq-\zeta \equiv 0$, cf.\ (\ref{eq-w-cannot-cross-1}).
\end{proof}


As already hinted, the proof of Theorem \ref{thm-energy-large-smol} (ii) is much more involved.
We will need the following lemma, which allows us to control $r$ uniformly for all choices of initial parameters with fixed $r_0$.

\begin{lemma}\label{lemma-r-bound}
Fix $r_0 > 0$.
For each $\varepsilon > 0$, there exists a $\delta > 0$ such that, for any solution of the initial value problem {\upshape (\ref{eq-r}--\ref{eq-zeta}, \ref{eq-initial-vals})} with respect to initial data $x_0 = (w_0,U_0) \in \mathscr I_0(r_0)$, we have
\begin{equation*}
|r(\rho,x_0)-r_0| \leq \varepsilon,
\end{equation*}
for all $0 \leq \rho \leq \delta$ for which $|w|\leq 1$.
\end{lemma}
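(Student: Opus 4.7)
The main obstacle is that the parameter $U_0$ is unbounded over $\mathscr I_0(r_0)$, so quantities appearing in the first-order system (\ref{eq-r}--\ref{eq-zeta}) can take arbitrarily large values at $\rho=0$; for instance $\dot\kappa(0) = 1 + 2U_0^2$ and $\zeta(0)^2 = E_0$, neither of which is uniformly bounded. Any Gr\"onwall-type argument based on the first-order equations will therefore fail to produce a $\delta$ uniform in the initial data. The plan is instead to work directly with the second-order equation (\ref{eq-efe-r}) for $r$ alone, namely
\begin{equation*}
\ddot r + \dot\tau\,\dot r = r - \frac{(1-w^2)^2}{r},
\end{equation*}
whose right-hand side is uniformly bounded over $\mathscr I_0(r_0)$ as long as $|w|\leq 1$ and $r$ stays in a neighbourhood of $r_0$.

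The key ingredient is that $\dot\tau = \kappa - N \geq 0$ by Lemma \ref{lemma-ineq}, so $\tau$ is non-decreasing. Multiplying the equation above by the integrating factor $e^\tau$, integrating from $0$ to $\rho$, and using $\dot r(0) = r_0 N(0) = 0$ gives
\begin{equation*}
\dot r(\rho) = e^{-\tau(\rho)} \int_0^\rho e^{\tau(s)} \left( r - \frac{(1-w^2)^2}{r} \right)(s)\,\mathrm ds.
\end{equation*}
Since $e^{\tau(s)-\tau(\rho)} \leq 1$ for $s\leq\rho$, this yields the pointwise bound
\begin{equation*}
|\dot r(\rho)| \leq \int_0^\rho \left| r(s) - \frac{(1-w(s)^2)^2}{r(s)} \right| \mathrm ds,
\end{equation*}
which is uniform over the initial data. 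Notably, the potentially large coefficient $\dot\tau$ enters only as a non-negative multiplier that the integrating factor absorbs.

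The conclusion will then follow by a standard bootstrap. Take $\varepsilon \leq r_0/2$ without loss of generality and set $M := 3r_0/2 + 2/r_0$; as long as $|w|\leq 1$ and $r \in [r_0/2, 3r_0/2]$ on $[0,\rho]$, the integrand above is bounded by $M$, so $|\dot r(\rho)| \leq M\rho$ and consequently $|r(\rho) - r_0| \leq M\rho^2/2$. Choosing $\delta := \sqrt{\varepsilon/M}$, if some first time $T^* \in [0,\delta]$ were to satisfy $|r(T^*) - r_0| = \varepsilon$ while $|w|\leq 1$ on $[0,T^*]$, the bound would force $\varepsilon \leq M\delta^2/2 = \varepsilon/2$, a contradiction. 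The real difficulty of the lemma lies in choosing the right formulation; the computation itself is then elementary.
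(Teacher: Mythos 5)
Your proof is correct, but it takes a genuinely different route from the paper's. The paper stays entirely within the first-order system and runs a comparison argument: it bounds $N$ from below by the solution $\bar N$ of the auxiliary autonomous problem (\ref{eq-rN-lower-bound-ivp}) (taking $\bar N \equiv 0$ when $r_0\geq 1$), which requires a somewhat delicate invariance argument — including checking $\dddot\eta(0)>0$ in the degenerate case $w_0=0$ — and then concludes because $\bar N$ is a fixed function independent of $(w_0,U_0)$. You instead exploit the damped-oscillator structure of the second-order equation (\ref{eq-efe-r}): since $\dot\tau=\kappa-N\geq 0$ by Lemma \ref{lemma-ineq}, the integrating factor $e^\tau$ absorbs the (non-uniformly bounded) damping term and yields the pointwise bound $|\dot r(\rho)|\leq\int_0^\rho |r-(1-w^2)^2/r|$, after which a first-exit-time bootstrap closes the argument. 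Both proofs ultimately rest on the same structural input, namely the sign of $\kappa-N$, but yours is more elementary and quantitative (it produces an explicit $\delta=\sqrt{\varepsilon/M}$ and the two-sided bound $|r-r_0|\lesssim\rho^2$ directly), whereas the paper's comparison additionally delivers the lower bound $N\geq\bar N$ on the whole interval where $|w|\leq 1$ — information that, as it happens, is not used elsewhere in the manuscript. One small point worth making explicit in a final write-up: the conclusion should be read on the maximal initial interval where $|w|\leq 1$ (as in the definition of $\bar\rho$ in the proof of Theorem \ref{thm-energy-large-smol}), and your first-exit-time argument is already phrased so that both constraints $|w|\leq 1$ and $r\in[r_0/2,3r_0/2]$ hold on all of $[0,T^*]$, so no gap arises; likewise, a possible crash of the solution before $\rho=\delta$ would force $r$ to leave $[r_0/2,3r_0/2]$ first and is therefore excluded by the same contradiction.
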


\begin{remark*}
The main point of the result is that $\delta$ can be chosen independently of $(w_0,U_0)$.
\end{remark*}

\begin{proof}
By Lemma \ref{lemma-ineq}, we know that $N \leq 1$, so we only need a lower bound that is uniform for $(w_0,U_0) \in \mathscr I_0(r_0)$.
To this end, we compare the equations (\ref{eq-r}, \ref{eq-N}) for $(r,N)$ with a simpler system.
For $r_0 < 1$, consider the solution of the initial value problem%
\footnote{
It appears that (\ref{eq-rN-lower-bound-ivp}) does not have an explicit solution, but we would like to note that the solution satisfies $\bar N < 0$ for $\rho>0$ until it stops existing at $\bar r = 0$.
}
\begin{equation}\label{eq-rN-lower-bound-ivp}
\left\{
\begin{array}{ll}
\dot{\bar{r}} = \bar r \bar N,          &\quad \bar r(0) = r_0,\\[0.2cm]
\dot{\bar{N}} = 1 - \frac{1}{\bar r^2}, &\quad \bar N(0) = 0,
\end{array}
\right.
\end{equation}
and define
\begin{equation*}
(\bar r, \bar N) = 
\begin{cases}
(r_0, 0) & \text{if } r_0 \geq 1,\\[0.1cm]
\text{solution of }\mathrm{(\ref{eq-rN-lower-bound-ivp})} & \text{if } r_0 < 1.
\end{cases}
\end{equation*}
We claim that for any $x_0=(w_0,U_0) \in \mathscr I_0(r_0)$, we have
\begin{equation*}
r(\rho, x_0) \geq \bar r(\rho) \quad\text{and}\quad N(\rho, x_0) \geq \bar N(\rho)
\end{equation*}
for all $\rho$ for which $(\bar r, \bar N)$ is well-defined and $|w(\rho,x_0)|\leq 1$.

If $r_0 \geq 1$, then the region $N\geq 0$ is invariant  (cf.\ Lemma \ref{lemma-N-positive-r0-geq1}) and the result follows trivially. 
So assume $r_0 < 1$ and take any $(w_0,U_0) \in \mathscr{I}_0(r_0)$.
Define
\begin{equation*}
\xi = r - \bar r \quad\text{and}\quad \eta = N - \bar N.
\end{equation*}
Then $\xi(0) = \eta(0) = 0$ and using (\ref{eq-r}, \ref{eq-rN-lower-bound-ivp}) we can also calculate
\begin{equation*}
\xi = r_0 \left( \exp\int_0^\rho N - \exp\int_0^\rho \bar N \right) = r_0 \bar r\left(\exp \int_0^\rho \eta - 1\right).
\end{equation*}
This shows that $\xi > 0$ (at least) as long as $\eta > 0$.
We first wish to show that $\eta > 0$ at least for sufficiently small $\rho > 0$.
To this end, we have by (\ref{eq-N}, \ref{eq-rN-lower-bound-ivp})
\begin{equation*}
\dot \eta(0) = \frac{1-(1-w_0^2)^2}{r_0^2},
\end{equation*} 
so that $\dot \eta(0) > 0$ if $|w_0|\not=0$.
If $w_0 = 0$, then $\dot \eta(0) = 0$ gives no information, so we continue differentiating (\ref{eq-N}, \ref{eq-rN-lower-bound-ivp}) to find $\ddot \eta(0) = 0$ but
\begin{equation*}
\dddot \eta(0) = 4U_0^2 - 2(1+2U_0^2)\left(1-\frac{1}{r_0^2}\right) \geq 2\left(\frac{1}{r_0^2}-1\right) > 0,
\end{equation*}
since $r_0 < 1$. Thus $\eta$ increases initially in both cases and thus $\eta > 0$ for sufficiently small $\rho > 0$.

Suppose now that $\eta$ ever reaches zero again, so that there exists a point $\bar \rho > 0$ with $\eta(\bar\rho)=0$ and $\eta > 0$ for $0 < \rho < \bar\rho$.
This implies that $N(\bar\rho)=\bar N(\bar\rho) < 0$, and by (\ref{eq-N}, \ref{eq-rN-lower-bound-ivp}), $\dot \eta$ satisfies
\begin{equation*}
\dot \eta(\bar\rho) 
= \left(\frac{1}{\bar r^2} - \frac{(1-w^2)^2}{r^2} - \kappa N\right)\bigg|_{\bar\rho}
\geq \frac{r + \bar r}{(r\bar r)^2}\bigg|_{\bar\rho} \, \xi(\bar\rho)
= \frac{r_0\bar r(r + \bar r)}{(r\bar r)^2}\bigg|_{\bar\rho} \left(\exp \int_0^{\bar\rho} \eta - 1\right),
\end{equation*}
where we also use the fact that $|w|\leq 1$ and $\kappa \geq 0$ (cf.\ Lemma \ref{lemma-ineq}).
But since $\eta>0$ for $0 < \rho < \bar\rho$, we see that the latter is positive, so such a point cannot exist and $\eta > 0$ for all $\rho > 0$. 
Thus also $\xi > 0$ and in particular $r > \bar r$ and $N > \bar N$ for $\rho > 0$.

We thus have $\bar N \leq N \leq 1$, and since $\bar N$ is a fixed function independent of the initial data $(w_0,U_0) \in \mathscr I_0(r_0)$ (but depending on $r_0$), Lemma \ref{lemma-r-bound} follows immediately.
\end{proof}

\begin{proof}[Proof of Theorem \ref{thm-energy-large-smol} (ii)]
Fix $r_0 > 0$ and use Lemma \ref{lemma-r-bound} to find a $\delta > 0$ so that
\begin{equation}\label{eq-r-bounds}
\frac{1}{2} \leq \frac{r}{r_0} \leq 2,
\end{equation}
as long as $|w|\leq 1$, for any orbit with initial data $(w_0,U_0) \in \mathscr I_0(r_0)$.
For later purposes we also shrink $\delta$ if necessary, to ensure that%
\footnote{Though these assumptions seem arbitrary at the moment, particularly the latter one, their significance will become apparent as we go along.}
\begin{equation}\label{eq-delta-assumption}
\delta < \min\left\{1, \, \frac{r_0}{2(r_0+16)} \log\left(1 + \frac{r_0+16}{2^{13} \, 3 r_0}\right) \right\}.
\end{equation}
Now define
\begin{equation*}
\bar\rho = \bar\rho(\delta,w_0,U_0) = \sup\{ 0 \leq s \leq \delta \,\mid\, |w(\rho)|\leq 1 \text{ for } 0 \leq \rho \leq s \},
\end{equation*}
so that the bounds (\ref{eq-r-bounds}) hold for $0 \leq \rho \leq \bar\rho$.
We will show that $w$ is strictly monotone for $0 \leq \rho\leq \bar\rho$ and that $\bar\rho < \delta$ for sufficiently large $E_0$, which implies the desired result.

In view of the symmetry $(w,U)\mapsto-(w,U)$ and the fact that large $E_0$ necessitates large $U_0^2$, we can without loss of generality assume that $U_0$ is positive (on the other hand, $w_0$ could be negative, but satisfies $|w_0|\leq1$).
We begin by observing that (\ref{eq-w}) can equivalently be written as
\begin{equation*}
\frac{\mathrm d}{\mathrm d\rho}\left(\frac{U}{r\zeta}\right) = -\frac{w(1-w^2)}{r^2\zeta},
\end{equation*}
and an integration gives
\begin{equation}\label{eq-U-formula}
U = \frac{U_0}{\zeta_0}\frac{r}{r_0}\,\zeta - r\zeta \int_0^\rho \frac{w(1-w^2)}{r^2\zeta}.
\end{equation}
Observe that $E_0 \to\infty$ if and only if $U_0 \to \infty$ (since $r_0$ is fixed and $|w_0|\leq 1$), and
\begin{equation*}
\frac{U_0}{\zeta_0} =  \frac{U_0}{\sqrt{E_0}} \to \frac{1}{\sqrt 2} \quad\text{as}\quad E_0\to\infty,
\end{equation*}
so for sufficiently large $E_0$, we have $\frac{1}{2} \leq U_0/\zeta_0 \leq 1$.
Thus, using also the bounds (\ref{eq-r-bounds}), we can estimate the first term in (\ref{eq-U-formula}) as
\begin{equation*}
\frac{1}{4}\,\zeta \leq \frac{U_0}{\zeta_0}\frac{r}{r_0}\,\zeta \leq 2\zeta,
\end{equation*}
for $0 \leq \rho \leq \bar\rho$.
For the second term in (\ref{eq-U-formula}), we note that $r\zeta$ does not increase, since
\begin{equation*}
\frac{\mathrm d}{\mathrm d\rho}\log(r\zeta)= N-\kappa \leq 0
\end{equation*}
by Lemma \ref{lemma-ineq}.
Since $|w|\leq 1$ for $0 \leq \rho\leq \bar\rho$, we get 
\begin{equation*}
\left|r\zeta \int_0^\rho \frac{w(1-w^2)}{r^2\zeta}\right| \leq r\zeta\int_0^\rho \frac{1}{r^2\zeta} \leq \int_0^\rho \frac{1}{r} \leq \frac{2}{r_0},
\end{equation*}
where we use the bounds (\ref{eq-r-bounds}) and also the assumption $\rho\leq \delta < 1$.
It follows that, for $0 \leq \rho \leq \bar\rho$,
\begin{equation*}
\frac{1}{4} \,\zeta - \frac{2}{r_0}  \leq U \leq 2\left(\zeta + \frac{1}{r_0}\right),
\end{equation*}
This then gives (since $\dot w =rU$)
\begin{equation}\label{eq-w-bounds-near-0-large-E0}
\dot w \geq \frac{r_0}{8} \, \zeta - 4
\quad\text{and}\quad 
2U^2 \leq 16\left(\zeta^2 + \frac{1}{r_0}\right),
\end{equation}
where we also use the estimate $(x+y)^2 \leq 2(x^2+y^2)$ to achieve the second inequality and again the bounds (\ref{eq-r-bounds}) for the first one.
If we define $\phi(\rho) = \int_0^\rho \zeta$, then we see by integrating the first inequality in (\ref{eq-w-bounds-near-0-large-E0}) that
\begin{equation}\label{eq-phi-bound}
\phi \leq \frac{8}{r_0} (w-w_0+4\rho) \leq \frac{48}{r_0} \quad \text{for} \quad 0 \leq \rho \leq \bar\rho,
\end{equation}
since $|w|\leq 1$ and $\rho\leq \delta < 1$.
On the other hand, since $\dot\zeta = -\kappa\zeta$ by (\ref{eq-zeta}), we get
\begin{equation*}
\frac{\mathrm d}{\mathrm d\rho}\left(\frac{\kappa}{\zeta}\right) = \frac{1 + 2U^2}{\zeta} \leq \left(1+\frac{16}{r_0}\right)\frac{1}{\zeta} + 16\zeta,
\end{equation*}
where we also use the second inequality from (\ref{eq-w-bounds-near-0-large-E0}).
Since $\kappa(0)=0$, an integration gives
\begin{equation*}
\kappa 
\leq \left(1+\frac{16}{r_0}\right)\zeta \int_0^\rho \frac{1}{\zeta} + 16\zeta\int_0^\rho \zeta
\leq \left(1+\frac{16}{r_0}\right)\rho + 16\zeta\phi
\leq b + a \zeta,  
\end{equation*}
where the second step follows since $\zeta$ decreases, whereas in the last step we use (\ref{eq-phi-bound}) to obtain the positive constants $a = \frac{2^8 3}{r_0},\, b = 1+\frac{16}{r_0}$, both depending only on the fixed $r_0$.
Using $\dot\zeta = -\kappa\zeta$ again, we may then integrate to get
\begin{equation*}
\rho \geq \int_0^\rho \frac{\kappa}{b+a\zeta} = -\int_{\sqrt{E_0}}^{\zeta(\rho)} \frac{\mathrm d\zeta}{\zeta(b+a\zeta)}
= \frac1b\log\left(\frac{b}{\zeta} + a\right) - \frac1b\log\left(\frac{b}{\sqrt{E_0}} + a\right),
\end{equation*}
and solving for $\zeta$ yields the lower bound
\begin{equation}\label{eq-zeta-bound}
\zeta \geq \frac{b}{a} \left[ \left(1 + \frac{b}{a\sqrt{E_0}} \right) e^{b\rho}  - 1\right]^{-1}.
\end{equation}
By (\ref{eq-w-bounds-near-0-large-E0}), $w$ is monotone (at least) as long as
$\frac{r_0}{8} \zeta - 4 > 0$ (and $0 \leq \rho \leq \bar\rho$).
By (\ref{eq-zeta-bound}), this is achieved (at least) for
\begin{equation*}
\rho 
< \frac{1}{b} \log\left( 1 + \frac{\frac{1}{32r_0} - \frac{1}{\sqrt{E_0}}}{\frac{a}{b} + \frac{1}{\sqrt{E_0}}} \right)
\to \frac{1}{b} \log\left( 1 + \frac{b}{32r_0a} \right)
=
\frac{r_0}{r_0+16} \, \log\left(1 + \frac{r_0+16}{2^{13} \, 3  r_0}\right),
\end{equation*}
where the limit is taken as $E_0\to\infty$.
In particular, we see that $w$ is monotone on $0 \leq \rho \leq \bar\rho$ for sufficiently large $E_0$, since $\bar\rho \leq \delta$, and $\delta$ is less than half of the limit above, cf.\ (\ref{eq-delta-assumption}).
Integrating now (\ref{eq-zeta-bound}) for the final time, we get
\begin{equation*}
\phi(\bar\rho) \geq \frac{1}{a} \log\left( 1 + \frac{a}{b}(1-e^{-b\bar\rho}) \sqrt{E_0} \right).
\end{equation*}
But $\phi(\bar\rho)$ is bounded as $E_0 \to \infty$ by (\ref{eq-phi-bound}), so this inequality generates a contradiction unless $\bar\rho \to 0$, which in turn implies that $\bar\rho < \delta$ for sufficiently large $E_0$ and completes the proof.
\end{proof}

\begin{remark*}
The proof actually shows that the orbit escapes the strip $|w|\leq 1$ arbitrarily fast as $E_0\to\infty$, just as one would intuitively expect. However, we do not need this fact so it is not a part of the statement of Theorem \ref{thm-energy-large-smol}. 
\end{remark*}

\section{Construction of symmetric wormholes}
\label{sec-construction}

Having performed the analysis of the previous two sections, we finally have all the ingredients to show the main existence result (Theorem \ref{thm-sym-wh}) of this manuscript.
The theorem will follow by reflection using the symmetries discussed in Remark \ref{remark-symmetries} and the following result.

\begin{theorem}\label{thm-forward-sequence}
Fix $r_0 > 0$.
For each $n \geq 0$, there exist initial parameters
$$0 < w_0^{(n)} = w_0^{(n)}(r_0) \leq 1 \quad\text{and}\quad U_0^{(n)} = U_0^{(n)}(r_0) > 0$$ 
such that a
$$(w_0^{(n)},0) \in \mathscr R_n(r_0)\quad\text{and}\quad(0, U_0^{(n)}) \in \mathscr R_n(r_0).$$
Furthermore, as $n \to \infty$, we have
$$(w_0^{(n)},0) \to (w_0^{(\infty)},0) \in \mathscr O(r_0)\quad\text{and}\quad  (0,U_0^{(n)}) \to (0,U_0^{(\infty)}) \in \mathscr O(r_0).$$
\end{theorem}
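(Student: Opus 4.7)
The plan is a shooting argument applied separately on the even slice $\{(w_0, 0) : w_0 \in (0, 1]\}$ and the odd slice $\{(0, U_0) : U_0 > 0\}$ of $\mathscr I_0(r_0)$. The two cases are closely parallel, so I describe the even one in detail and only indicate the modifications for the odd one. For each $n \geq 0$, define
\begin{equation*}
\mathscr Y_n = \{ w_0 \in (0, 1] : (w_0, 0) \in \mathscr E_m \text{ for some } m \leq n, \text{ or } (w_0, 0) \in \mathscr R_m \text{ for some } m < n \}.
\end{equation*}
By Theorem \ref{thm-ngbh-initial-sets} parts (i) and (iii), $\mathscr Y_n$ is a relatively open subset of $(0, 1]$. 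For $n = 0$ set $w_0^{(0)} = 1$, which is in $\mathscr R_0$ via Ellis-Bronnikov (\S\ref{subsec-trivial-solutions}). For $n \geq 1$, the Ellis-Bronnikov orbit lies in $\mathscr R_0 \subset \mathscr Y_n$, so $w_0 = 1$ is an interior point of $\mathscr Y_n$; I define $w_0^{(n)}$ to be the infimum of the connected component of $\mathscr Y_n$ containing $w_0 = 1$, which has the form $(w_0^{(n)}, 1]$. The analogue $\mathscr Y_n^{\mathrm{odd}}$ on the odd slice contains all sufficiently large $U_0$, since by Theorem \ref{thm-energy-large-smol} (ii) a monotone $w$ starting at $0$ with $U_0 > 0$ stays positive for $\rho > 0$ and lies in $\mathscr E_0$; then $U_0^{(n)}$ is the infimum of the component of $\mathscr Y_n^{\mathrm{odd}}$ unbounded above.

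The main step is to show $(w_0^{(n)}, 0) \in \mathscr R_n$ for $n \geq 1$. First, $w_0^{(n)}$ is strictly inside the admissible interval: for $r_0 \geq 1$, Theorem \ref{thm-ngbh-initial-sets} (ii) applied at the trivial oscillatory orbit at $w_0 = 0$ produces a neighborhood of $0$ whose orbits have more than $n$ zeros (so lie outside $\mathscr Y_n$), while for $r_0 < 1$, Theorem \ref{thm-ngbh-initial-sets} (i) applied at the $E_0 = 0$ crashing orbit at $w_0 = \sqrt{1 - r_0}$ gives a neighborhood inside $\mathscr C$, again outside $\mathscr Y_n$. By Theorem \ref{thm-classification} the orbit at $w_0^{(n)}$ lies in exactly one of $\mathscr R_m, \mathscr E_m, \mathscr C_m$, or $\mathscr O$. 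The classes $\mathscr E_m$ and $\mathscr C_m$ are excluded by Theorem \ref{thm-ngbh-initial-sets} (i), which would place an entire open interval around $w_0^{(n)}$ inside a single one of them, incompatible with $w_0^{(n)}$ being the boundary of a component of $\mathscr Y_n$. The class $\mathscr O$ is excluded by Theorem \ref{thm-ngbh-initial-sets} (ii): orbits just above $w_0^{(n)}$ would then have arbitrarily many zeros, contradicting their membership in $\mathscr Y_n$. Finally, if $(w_0^{(n)}, 0) \in \mathscr R_m$, Theorem \ref{thm-ngbh-initial-sets} (iii) confines nearby orbits to $\mathscr R_m \cup \mathscr E_m \cup \mathscr E_{m+1}$; the choice $m < n$ would force $w_0^{(n)}$ itself into the interior of $\mathscr Y_n$, while $m > n$ would prevent orbits just above from lying in $\mathscr Y_n$, leaving $m = n$ as the only possibility.

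For the limit as $n \to \infty$, the inclusion $\mathscr Y_n \subset \mathscr Y_{n+1}$ gives $w_0^{(n+1)} \leq w_0^{(n)}$. Strict inequality follows from Theorem \ref{thm-ngbh-initial-sets} (iii) applied at $(w_0^{(n)}, 0) \in \mathscr R_n$: a $\delta$-neighborhood of $w_0^{(n)}$ lies in $\mathscr R_n \cup \mathscr E_n \cup \mathscr E_{n+1} \subset \mathscr Y_{n+1}$, so the component of $\mathscr Y_{n+1}$ containing $w_0 = 1$ extends at least $\delta$ below $w_0^{(n)}$. The monotone bounded sequence $\{w_0^{(n)}\}$ therefore converges to some $w_0^{(\infty)}$. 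Continuous dependence on initial data together with the divergence of the zero-counts of the $w_0^{(n)}$-orbits precludes the limit orbit from lying in $\mathscr R_m, \mathscr E_m$, or $\mathscr C_m$ for any fixed $m$, and the $E_0 = 0$ admissible boundary is also ruled out (since orbits there lie in $\mathscr C$ or are trivial); hence $(w_0^{(\infty)}, 0) \in \mathscr O(r_0)$. The same argument applies to the odd slice with $U_0$ in place of $w_0$. The main technical obstacle throughout is the case analysis above, particularly the bookkeeping needed to handle the crashing class $\mathscr C$ when $r_0 < 1$ and to ensure that the limit lands in $\mathscr O$ rather than on the admissible boundary.
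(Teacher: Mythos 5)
Your proposal is correct and follows essentially the same route as the paper: a downward shooting method on the slices $\{U_0=0\}$ and $\{w_0=0\}$, using the energy theorem for the endpoints, the neighbourhood theorem to exclude every class except $\mathscr R_n$ at the critical parameter, and monotone convergence plus relative openness to land the limit in $\mathscr O$. The only difference is cosmetic bookkeeping — you take infima of connected components of the cumulative open sets $\mathscr Y_n$, whereas the paper uses nested infima over $\mathscr R_n \cup \mathscr E_n$ defined inductively — and both are equivalent.
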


The existence of the sequence $w_0^{(n)}$ is an analogue of the corresponding result for black hole initial conditions \cite[Theorem 34]{breit-forg-mais}.
The existence of the sequence $U_0^{(n)}$ has no analogue for the black hole nor particle-like setting.
The final part of the theorem is not really relevant for the construction of wormholes, but it comes at virtually no additional cost.
In any case, it is interesting to know that non-trivial oscillatory orbits exist for $r_0 < 1$, while $w_0^{(n)},U_0^{(n)} \to 0$ for $r_0 \geq 1$.

In Figure \ref{fig-data-sets}, we present a visualization of the set of admissible initial data, where either $U_0=0$ or $w_0=0$ are kept fixed.
The reader can use these figures as a visualization guide throughout the proof.

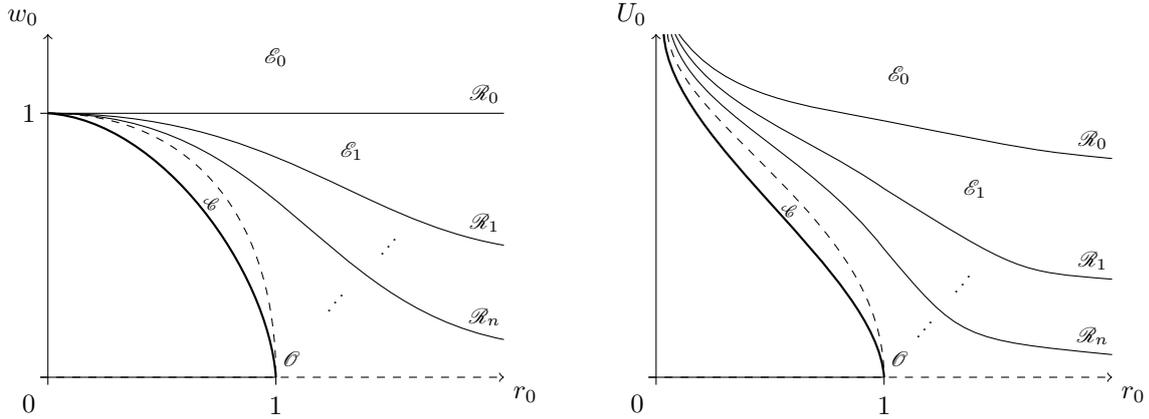
\begin{figure}
\makebox[\linewidth][c]{%
\centering
\begin{subfigure}{0.6\textwidth}
\centering
\begin{tikzpicture}
\draw[->] (0,-0.1) -- (0,4.55) node[above left] {$w_0$};
\draw (-0.1,0) -- (3,0);
\draw[->, dashed] (-0.1,0) -- (6,0) node[anchor = north west] {$r_0$};
\draw (-0.1, 3.5) -- (0.1, 3.5) node[left=4pt] {$1$};
\draw (3, -0.1) -- (3, 0) node[below=4pt] {$1$};
\node at (-0.25,-0.35) {$0$};

\node at (3,4.25) (E0) {\footnotesize $\mathscr E_0$};

\draw (0,3.5) -- (6,3.5);
\node at (5.75,3.75) (R0) {\footnotesize $\mathscr R_0$};

\node at (4,3) (E1) {\footnotesize $\mathscr E_1$};

\draw (0,3.5) .. controls (3,3.5) and (4,2.1) .. (6,1.75);
\node at (5.75,2.05) (R1) {\footnotesize $\mathscr R_1$};

\node[rotate=50] at (4.5, 1.75) {\tiny $\cdots$};
 
\draw (0,3.5) .. controls (3,3.5) and (3.5,1) .. (6,0.5);
\node at (5.75,0.8) (R2) {\footnotesize $\mathscr R_n$};

\node[rotate=50] at (3.8,1) {\tiny $\cdots$};

\draw[dashed] (0,3.5) .. controls (2,3.5) and (3,2) .. (3,0);
\node at (3.2, 0.25) (O) {\footnotesize $\mathscr O$};

\draw[thick] (0,3.5) .. controls (1.5,3.4) and (2.9,1.4) .. (3,0);
\node at (2.15, 2.3) (C) {\tiny $\mathscr C$};

\end{tikzpicture}
\end{subfigure}%
\begin{subfigure}{0.6\textwidth}
\begin{tikzpicture}
\draw[->] (0,-0.1) -- (0,4.55) node[above left] {$U_0$};
\draw (-0.1,0) -- (3,0);
\draw[->, dashed] (-0.1,0) -- (6,0) node[anchor = north west] {$r_0$};
\draw (3, -0.1) -- (3, 0) node[below=4pt] {$1$};
\node at (-0.25,-0.35) {$0$};

\node at (3.2,4) (E0) {\footnotesize $\mathscr E_0$};

\draw (0.3, 4.55) .. controls (0.9, 3.7) and (2, 3.6) .. (3, 3.4) .. controls (5, 3) .. (6,2.9);
\node at (5.75,3.15) (R0) {\footnotesize $\mathscr R_0$};

\node at (4.2,2.5) (E1) {\footnotesize $\mathscr E_1$};

\draw (0.25, 4.55) .. controls (0.7, 3.6) and (2, 3.2) .. (3, 2.5) .. controls (4.75,1.4) .. (6,1.3);
\node at (5.75,1.55) (R0) {\footnotesize $\mathscr R_1$};

\draw (0.2, 4.55) .. controls (0.5, 3.5) and (2, 3) .. (3, 1.7) .. controls (4,0.5) .. (6,0.3);
\node at (5.75,0.55) (R0) {\footnotesize $\mathscr R_n$};

\node[rotate=50] at (4.05, 1.25) {\tiny $\cdots$};

\draw[dashed] (0.15,4.55) .. controls (0.3,3.3) and (3,2.1) .. (3,0);
\node at (3.2, 0.25) (O) {\footnotesize $\mathscr O$};

\node[rotate=50] at (3.55, 0.65) {\tiny $\cdots$};

\draw[thick] (0.1,4.55) .. controls (0.2,3.1) and (2.9,1.4) .. (3,0);
\node at (1.75, 2.2) (C) {\tiny $\mathscr C$};
\end{tikzpicture}
\end{subfigure}
}
\caption{A (qualitative) visualization of the set $\mathscr I_0$ of admissible initial data with $U_0 = 0$ (left) and $w_0=0$ (right). The thick solid lines represent the set where $E_0=0$, the dashed lines represent the set of oscillatory orbits $\mathscr O$, and the thin solid lines represent the regular orbits $\mathscr R_n$.}
\label{fig-data-sets}
\end{figure}

\begin{proof}
We construct the sequences inductively. 
As in the statement, we will assume that for the even orbits (i.e.\ those with $U_0 = 0$), we have $w_0 \geq 0$, and for the odd orbits (i.e.\ those with $w_0=0$), we have $U_0 \geq 0$.
This is possible due to the symmetry $(w,U) \mapsto -(w,U)$.
Note that the boundary condition $E_0 = 0$ of the set $\mathscr I_0(r_0)$ of initial parameters can only be reached if $r_0 \leq 1$. 
We set
\begin{equation*}
w_0^{(\min)} = \begin{cases}
\sqrt{1-r_0}, & \text{if } r_0 < 1,\\[0.1cm]
0, & \text{if } r_0 \geq 1,
\end{cases}
\quad\text{and}\quad
U_0^{(\min)} = \begin{cases}
\sqrt{\frac{1}{2} ( r_0^{-2}-1 )}, & \text{if } r_0 < 1,\\[0.1cm]
0, & \text{if } r_0 \geq 1,
\end{cases}
\end{equation*}
so that $(w_0,0) \in \mathscr I_0(r_0)$ for $w_0^{(\min)} \leq w_0 \leq 1$ and $(0,U_0) \in \mathscr I_0(r_0)$ for $U_0^{(\min)} \leq U_0$.
Furthermore, by Theorem \ref{thm-energy-large-smol} (i),
\begin{equation*}
(w_0^{(\min)},0), (0,U_0^{(\min)}) \in \begin{cases}
\mathscr C(r_0), & \text{if } r_0 < 1,\\[0.1cm]
\mathscr O(r_0), & \text{if } r_0 \geq 1.
\end{cases}
\end{equation*}
The idea is now to perform a shooting method along the lines
\begin{align*}
\mathscr W(r_0) &= \mathscr I(r_0)\cap \{w_0\not=0, U_0=0\},\\[0.1cm]
\mathscr U(r_0) &= \mathscr I(r_0)\cap \{w_0=0, U_0 \not= 0\}.
\end{align*}

Let%
\footnote{With this choice we want to not only show that there are orbits in $\mathscr R_0(r_0)$ but we also want to choose the smallest such, as it is not clear whether they are unique. The same will be done when constructing the orbits in $\mathscr R_n(r_0)$ for $n \geq 1$. In contrast, in Figure \ref{fig-data-sets} we display these sets as if they were unique, because the numerical results suggest that this is indeed the case.}
\begin{align*}
w_0^{(0)} &= \inf\{ w_0 \,\mid\, w_0^{(\min)} \leq w_0, \; (w_0, 0) \in (\mathscr R_0 \cup \mathscr E_0)(r_0) \},\\[0.1cm]
U_0^{(0)} &= \inf\{ U_0 \,\mid\, U_0^{(\min)} \leq U_0, \; (0, U_0) \in (\mathscr R_0 \cup \mathscr E_0)(r_0) \}.
\end{align*}
Note that $w_0^{(0)}$ is well-defined because $(1,0) \in \mathscr R_0(r_0)$ is the trivial solution with $w\equiv 1$, while $U_0^{(0)}$ is well-defined because $(0,U_0) \in \mathscr E_0(r_0)$ for sufficiently large $U_0$ by Theorem \ref{thm-energy-large-smol} (ii).
By Theorem \ref{thm-ngbh-initial-sets}, $(w_0^{(0)}, 0)$ and $(0, U_0^{(0)})$ cannot be in:
\begin{itemize}
\item $\mathscr E_n(r_0)$ for $n\geq 0$, or $\mathscr C(r_0)$, because these sets are open relative to the lines $\mathscr W(r_0)$ and $\mathscr U(r_0)$, cf.\ Theorem \ref{thm-ngbh-initial-sets} (i),
\item $\mathscr O(r_0)$ because this set is neighboured by orbits with arbitrarily many zeros of $w$, cf.\ Theorem \ref{thm-ngbh-initial-sets} (ii),
\item $\mathscr R_n(r_0)$ for $n \geq 1$ because each of these sets respectively is neighboured (relative to the lines $\mathscr W(r_0)$ and $\mathscr U(r_0)$) by orbits with either $n$ or $n+1$ zeros of $w$, cf.\ Theorem \ref{thm-ngbh-initial-sets} (iii).
\end{itemize}
Thus, $(w_0^{(0)},0)$ and $(0, U_0^{(0)})$ must belong to the last remaining option, namely $\mathscr R_0(r_0)$.
We also see in particular that $w_0^{(0)} > w_0^{(\min)}$ and $U_0^{(0)} > U_0^{(\min)}$.
Furthermore, by construction and Theorem \ref{thm-ngbh-initial-sets} (iii), the $(w_0,0)$- and $(0,U_0)$-orbits with sufficiently close $w_0 < w_0^{(0)}$ and $U_0 < U_0^{(0)}$ must be in $\mathscr E_1$.

Now for the induction step, suppose $w_0^{(n-1)}, U_0^{(n-1)}$ have been defined, and let
\begin{align*}
w_0^{(n)} &= \inf\{ w_0 \,\mid\, w_0^{(\min)} \leq w_0 \leq w_0^{(n-1)}, \; (w_0, 0) \in (\mathscr R_{n} \cup \mathscr E_{n})(r_0) \},\\[0.1cm]
U_0^{(n)} &= \inf\{ U_0 \,\mid\, U_0^{(\min)} \leq U_0 \leq U_0^{(n-1)}, \; (0, U_0) \in (\mathscr R_{n} \cup \mathscr E_{n})(r_0)  \}.
\end{align*}
Then $(w_0^{(n)}, 0)$ and $(0,U_0^{(n)})$ belong to $\mathscr R_{n}(r_0)$ by an argument analogous to the base case.
Furthermore, we also have $w_0^{(n)}>w_0^{(\min)}$ and $U_0^{(n)} > U_0^{(\min)}$, and the $(w_0,0)$- and $(0,U_0)$-orbits with sufficiently close $w_0 < w_0^{(n)}$ and $U_0 < U_0^{(n)}$ must be in $\mathscr E_{n+1}$.

Now, $w_0^{(n)}$ and $U_0^{(n)}$ are decreasing bounded sequences, so they converge to some limits $w_0^{(\infty)}$ and $U_0^{(\infty)}$ respectively.
The  $(w_0^{(\infty)},0)$- and $(0,U_0^{(\infty)})$-orbits cannot be in any $\mathscr R_n(r_0)$ or $\mathscr E_n(r_0)$ by construction, nor can they be in $\mathscr C(r_0)$ because this set is open relative to the respective lines.
Hence, they must be in $\mathscr O(r_0)$, completing the proof.
\end{proof}

With this at hand, we can finally show our main result, Theorem \ref{thm-sym-wh}.

\begin{proof}[Proof of Theorem \ref{thm-sym-wh}]
Fix $r_0 > 0$, let $w_0^{(n)} = w_0^{(n)}(r_0) > 0$ and $U_0^{(n)} = U_0^{(n)}(r_0) > 0$ be the sequences manufactured by Theorem \ref{thm-forward-sequence}.

The solutions with initial parameters $(r_0, w_0^{(n)}, 0)$ are then well-defined for all $\rho \geq 0$.
By the symmetries discussed in Remark \ref{remark-symmetries}, we see that the corresponding functions $w,r,\zeta$ are even and $U,N,\kappa$ are odd. 
Therefore each of these solutions is also well-defined backwards for all $\rho < 0$ and shares the same properties (but reflected depending on the parity) of its forward counterpart.
In particular, for the $(w_0^{(n)},0)$-orbit, $w$ in total has $2n$ zeros ($n$ on each half of the real line), and $w \to (-1)^n$ as $\rho \to \pm \infty$.

Similarly, for the solution with initial parameters $(r_0,0,U_0^{(n)})$, we see that $U,r,\zeta$ are even functions while $w,N,\kappa$ are odd, and the solution is also defined for all $\rho < 0$ with the same (but reflected) properties of its forward counterpart.
In particular, for the $(0,U_0^{(n)})$-orbit, $w$ has $2n+1$ zeros ($n$ on each half of the real line, and an additional one at $\rho=0$), and $w \to \pm (-1)^n$ as $\rho \to \pm \infty$.

Now, for each $n \geq 0$, set
\begin{equation*}
x_0^{(n)} = 
\begin{cases}
(r_0, w_0^{(k)}, 0), & \text{if $n = 2k$ is even},\\[0.1cm]
(r_0, 0, U_0^{(k)}), & \text{if $n = 2k+1$ is odd}.
\end{cases} 
\end{equation*}
Recall that we have hidden the metric coefficient $\tau$ by defining $\zeta = \pi_0/(re^\tau)$ where $\pi_0 = r_0e^{\tau_0}\sqrt{E_0}$ is the phantom charge, cf.\ (\ref{eq-breit-variables})
We thus define the metric cofficients
\begin{align*}
\tau^{(n)}(\rho) &=  \log\pi_0 - \log (r\zeta)(\rho,x_0^{(n)})\\[0.1cm]
&= \tau_0 + \log (r_0\sqrt{E_0}) - \log(r\zeta)(\rho,x_0^{(n)}), \\[0.2cm]
r^{(n)}(\rho) &= r(\rho, x_0^{(n)}),
\end{align*}
the Yang-Mills potential coefficient
\begin{align*}
w^{(n)}(\rho) &= w(\rho, x_0^{(n)}),
\end{align*}
and the phantom field (\ref{eq-ph-sol})
\begin{align*}
\phi^{(n)}(\rho) &= \sqrt 2 \int_0^\rho \zeta(s, x_0^{(n)}) \,\mathrm ds.
\end{align*}
Since $\tau_0$ is still free (cf.\ Remark \ref{remark-tau0-free}), we can choose it so as to make $\tau^{(n)} \to 0$ at $\pm\infty$, since $r\zeta$ tends to some finite limit by Theorem \ref{thm-classification}.
Then we see that these functions describe symmetric wormhole solutions, as desired in Theorem \ref{thm-sym-wh}.
Finally, we note that for $r_0 \geq 1$ the variable $N$ has only a single zero located at $\rho=0$ by Lemma \ref{lemma-N-positive-r0-geq1}, and this point describes a non-degenerate wormhole throat.
On the other hand, when $r_0 < 1$ and $n$ is even (resp.\ odd), then $\dot N_0 > 0$ (resp.\ $\dot N_0 < 0$) by (\ref{eq-N}), so the initial point $\rho = 0$ describes a non-degenerate wormhole throat (resp.\ belly).
\end{proof}

\begin{remark}\label{remark-weakness}
Our results have the obvious weakness of not being able to specify the total number of zeros $N$ has for $r_0 < 1$.
One may hope to be able to control the number of zeros of $N$ in the neighbourhood theorem and subsequently the shooting method in a similar way we control the number of zeros of $w$, but this is unfortunately not possible.
The main culprit for this is the fact that nothing prohibits $N$ from having double zeros, i.e.\ points such that $N=0$ and $\dot N = 0$.
In particular, if an orbit has a double zeros of $N$, then for the nearby orbits this zero could bifurcate into several zeros, or even completely disappear.
Even though we cannot prove this using our current techniques, the numerics indicate that the solutions with $r_0<1$ have exactly one non-degenerate throat for even $n$, and exactly two non-degenerate throats for odd $n$.
Nevertheless, Lemma \ref{lemma-N-infinite-zeros} ensures at the very least that $N$ has only a finite zeros for the regular solutions, and in particular the wormhole solutions with $r_0 < 1$ have at most finitely many throats/bellies.
\end{remark}

\section{Asymmetric wormholes}
\label{sec-conclusion-outlook}

In \cite[\S 4]{hairy-wormholes}, it was proposed that it might be interesting to study also asymmetric solutions, i.e.\ those not possessing the symmetry $\rho\mapsto-\rho$.
In fact, our numerical analysis suggests that there exist such wormhole solutions, having $n$ zeros of $w$ for $\rho > 0$ and $m$ zeros for $\rho \leq 0$.
In Table \ref{tab-asym-pairs}, we list our numerical findings of initial data describing such solutions.
We also plot the corresponding solutions for certain choices of $n$ and $m$ in Figure \ref{fig-asym}.

\begin{table}[ht]
\centering
\pgfplotstableread{asym_pairs.dat}{\asympairs}
\pgfplotstabletypeset[
sort, sort key=0,
display columns/0/.style={precision=0, column name=$n$},
display columns/1/.style={precision=0, column name=$m$},
display columns/2/.style={dec sep align, fixed zerofill, precision=12, column name=$w_0$},
display columns/3/.style={dec sep align, fixed zerofill, precision=12, column name=$U_0$},
every head row/.style={before row=\toprule, after row=\midrule},
every last row/.style={after row=\bottomrule} 
] {\asympairs}
\caption{Initial data describing asymmetric wormholes for $r_0 = 0.75$.}
\label{tab-asym-pairs}
\end{table}

\begin{figure}[t]
\makebox[\linewidth][c]{%
\centering
\begin{subfigure}{0.6\textwidth}
\centering
\begin{tikzpicture}
\begin{axis}[
title=$N^{(n,m)}(\rho)$,
xlabel=$\rho$,
xtick={-25,0,25},
xticklabels={$-\infty$, $0$, $\infty$},
legend cell align= left,
legend pos = north west,
legend style={font=\tiny},
grid=both
]
\addplot [no markers, cyan] table {N_n=0_m=2.dat};
\addlegendentry{$(n,m)=(0,2)$}
\addplot [no markers, magenta] table {N_n=1_m=3.dat};
\addlegendentry{$(n,m)=(1,3)$}
\addplot [no markers, green] table {N_n=2_m=4.dat};
\addlegendentry{$(n,m)=(2,4)$}
\addplot [no markers, gray] table {N_n=3_m=5.dat};
\addlegendentry{$(n,m)=(3,5)$}
\end{axis}
\end{tikzpicture}
\end{subfigure}%
\begin{subfigure}{0.6\textwidth}
\centering
\begin{tikzpicture}
\begin{axis}[
title=$w^{(n,m)}(\rho)$,
xlabel=$\rho$,
y label style={at={(-0.1,1)},rotate=-90,anchor=north},
xtick={-25,0,25},
xticklabels={$-\infty$, $0$, $\infty$},
legend cell align= left,
grid=both
]
\addplot [no markers, cyan] table {w_n=0_m=2.dat};
\addplot [no markers, magenta] table {w_n=1_m=3.dat};
\addplot [no markers, green] table {w_n=2_m=4.dat};
\addplot [no markers, gray] table {w_n=3_m=5.dat};
\end{axis}
\end{tikzpicture}
\end{subfigure}
}
\caption{Asymmetric wormhole solutions for $r_0 = 0.75$.
Plotted using the initial data given in Table \ref{tab-asym-pairs}.
The plots of the other dependent variables are not particularly inspiring so they are omitted.}
\label{fig-asym}
\end{figure}
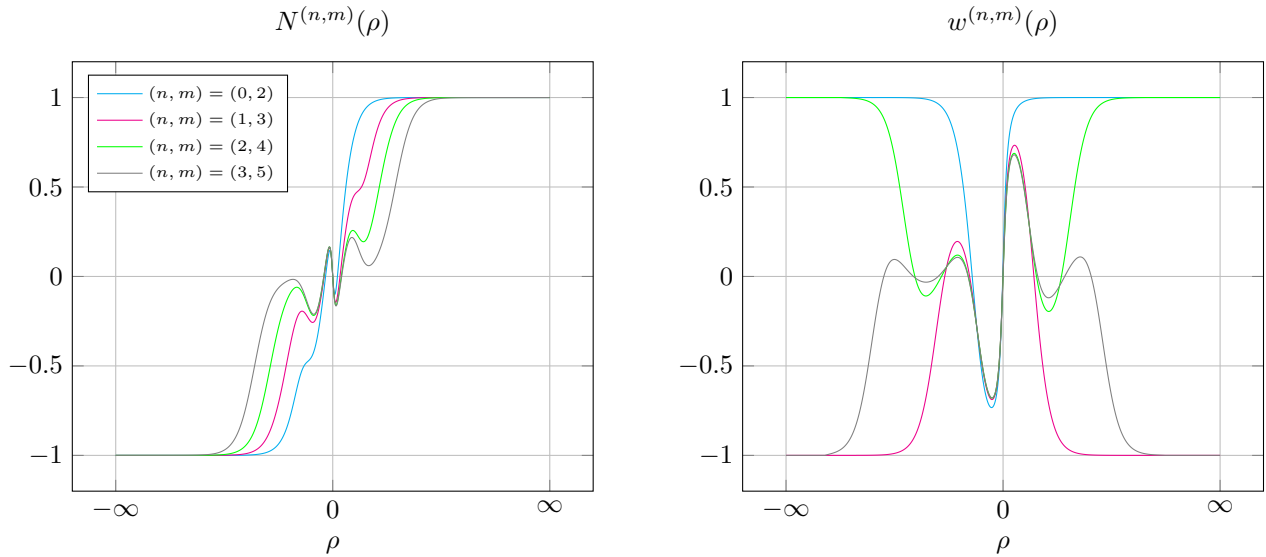

In view of this, we conjecture the following asymmetric extension of Theorem \ref{thm-sym-wh}.

\begin{conjecture}\label{conj-asym-wh}
For each $0 < r_0 < 1$ and integers $0 \leq n < m$, there exists wormhole solution 
\begin{equation*}
(\tau^{(n,m)}, r^{(n,m)}, w^{(n,m)})
\end{equation*}
to the system {\upshape (\ref{eq-r}--\ref{eq-zeta})} such that $w^{(n,m)}$ has $n$ zeros for $\rho > 0$ and $m$ zeros for $\rho \leq 0$.
\end{conjecture}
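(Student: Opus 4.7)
The plan is to extend the one-dimensional shooting framework of Theorem \ref{thm-sym-wh} to a genuinely two-dimensional shooting in the full admissible-data set $\mathscr I_0(r_0)$. The key observation is the time-reversal symmetry $(\rho, N, U, \kappa)\mapsto-(\rho, N, U, \kappa)$ from Remark \ref{remark-symmetries}: the backward-in-$\rho$ orbit of initial data $(w_0, U_0)$ is precisely the spatial reflection of the forward-in-$\rho$ orbit of $(w_0, -U_0)$. The conjecture therefore reduces to showing that for every $0\leq n<m$ one can find $(w_0, U_0)\in\mathscr I_0(r_0)$ such that the forward $(w_0, U_0)$-orbit lies in $\mathscr R_n(r_0)$ while the forward $(w_0, -U_0)$-orbit lies in $\mathscr R_m(r_0)$ (with a trivial shift of indices in the exceptional case $w_0 = 0$, to account for the zero at $\rho = 0$); equivalently, $\mathscr R_n(r_0)$ must meet the image $\sigma(\mathscr R_m(r_0))$ under the involution $\sigma\colon(w_0, U_0)\mapsto(w_0, -U_0)$.

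The first step would be to upgrade Theorem \ref{thm-ngbh-initial-sets} from its one-dimensional slices $\mathscr W_\delta, \mathscr U_\delta$ to full open two-dimensional neighborhoods in $\mathscr I_0(r_0)$. The proofs in \S\ref{sec-neighbourhood-theorem} rely only on continuous dependence on initial data together with the ODE estimates already derived there, and they extend essentially verbatim. One obtains that $\mathscr E_n(r_0)$ and $\mathscr C(r_0)\setminus\{0\}$ are open in $\mathscr I_0(r_0)$, while every point of $\mathscr R_n(r_0)$ admits a 2D neighborhood contained in $\mathscr R_n\cup\mathscr E_n\cup\mathscr E_{n+1}$. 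In particular, $\mathscr R_n(r_0)\subset \partial\mathscr E_n(r_0)\cap\partial\mathscr E_{n+1}(r_0)$, so $\mathscr R_n(r_0)$ topologically separates these two disjoint open sets.

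The second, and crucial, step is to extract from this separation a continuous closed loop $\Gamma_n\subset\mathscr R_n(r_0)$ surrounding the ``inner'' region of $\mathscr I_0(r_0)$ containing $\mathscr O(r_0)$ and $\mathscr C(r_0)$, with the $\Gamma_n$ nested according to $n$. Granting this, the $(w,U)\mapsto-(w,U)$ symmetry forces $\Gamma_n$ to pass through the four points $(\pm w_0^{(n)}, 0)$ and $(0,\pm U_0^{(n)})$ from Theorem \ref{thm-forward-sequence}. For $n\neq m$ the reflected loop $\sigma(\Gamma_m)$ is then a distinct Jordan curve passing through the four points $(\pm w_0^{(m)}, 0), (0,\pm U_0^{(m)})$, none of which lies on $\Gamma_n$. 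Comparing the winding of $\Gamma_n$ and $\sigma(\Gamma_m)$ about the origin (both wind once, since $\sigma$ fixes the origin and preserves the ``inner'' region) and using the fact that their respective four axis points are strictly interleaved along the $w_0$- and $U_0$-axes, an elementary Jordan-curve argument then guarantees that $\Gamma_n$ and $\sigma(\Gamma_m)$ must cross. Each such crossing produces an $(n,m)$-asymmetric wormhole.

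The main obstacle is Step 2: deducing a genuine continuous closed loop $\Gamma_n$ from the mere boundary/separation property of $\mathscr R_n(r_0)$. In general, a set separating two open regions in the plane need not be a 1-manifold---it could be a tree, a Cantor-like continuum, or otherwise pathological. Two natural strategies seem viable. The first is to apply the implicit function theorem locally: at a point of $\mathscr R_n(r_0)$ one would want to show that the derivative of some asymptotic ``phase'' of $w$ at infinity with respect to the initial data does not vanish in any direction, yielding a local smooth parameterization of $\mathscr R_n(r_0)$ which could then be patched into a global loop. The second is a topological degree argument applied to a suitable terminal-phase map defined on a large annulus in $\mathscr I_0(r_0)$. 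In both cases the delicate point is controlling the potentially oscillatory behaviour of $N$ flagged in Remark \ref{remark-weakness}, which is precisely what prevents a direct two-dimensional translation of the one-dimensional argument. The numerical evidence in Table \ref{tab-asym-pairs} and Figure \ref{fig-asym} strongly supports the loop picture, but a rigorous proof appears to require new analytical ingredients beyond those developed in this manuscript.
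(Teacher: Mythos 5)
First, a point of calibration: the paper does not prove this statement either. It is explicitly left as Conjecture~\ref{conj-asym-wh}, supported by numerics and by a proof strategy that itself hinges on the unproven Conjecture~\ref{conj-mu}; the author states that no proof is in sight. So there is no complete argument on either side to certify, and what can be compared are the two strategies. Your reduction via the symmetry of Remark~\ref{remark-symmetries} --- find $(w_0,U_0)$ whose forward orbit lies in $\mathscr R_n(r_0)$ while the forward orbit of $(w_0,-U_0)$ lies in $\mathscr R_m(r_0)$, with the index shift at $w_0=0$ --- is exactly the paper's reduction, and your first step is essentially what \S\ref{sec-conclusion-outlook} carries out rigorously: the shooting of Theorem~\ref{thm-forward-sequence} is run along each vertical line $w_0=\mathrm{const}$, producing $U_0^{(n)}(w_0)$ for every $w_0$ with $r_0+w_0^2\leq 1$. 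Note, though, that this restriction is forced because the shooting needs a crashing orbit at $E_0=0$ as its lower endpoint; your loops $\Gamma_n$ would have to pass through $(\pm w_0^{(n)},0)$ with $w_0^{(n)}>\sqrt{1-r_0}$, i.e.\ partly through vertical lines on which no such lower bound is available. You correctly and honestly flag the extraction of a genuine closed curve $\Gamma_n$ from the separation property of $\mathscr R_n(r_0)$ as the main obstruction; the paper's substitute for this is the continuity and monotonicity of $w_0\mapsto\tilde U_0^{(n)}(w_0)$ in Conjecture~\ref{conj-mu}, which it likewise cannot establish.

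There is, however, a genuine error in your final step that you do not flag. Two disjoint essential Jordan curves in the annulus $\mathscr I_0(r_0)$ (the strip $|w_0|\leq 1$ minus the hole $E_0<0$) are not forced to intersect after reflecting one of them by $\sigma$. Since the sequences of Theorem~\ref{thm-forward-sequence} are strictly decreasing, for $m>n$ one has $w_0^{(m)}<w_0^{(n)}$ and $U_0^{(m)}<U_0^{(n)}$, so all four axis points of $\sigma(\Gamma_m)$ lie strictly closer to the hole than the corresponding axis points of $\Gamma_n$: the configuration is \emph{nested}, not interleaved, and equal winding numbers about the hole give no obstruction whatsoever (two concentric circles reflected about a common diameter remain disjoint). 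The crossings observed numerically are produced by a different, quantitative mechanism, which is precisely what Conjecture~\ref{conj-mu} encodes: in the upper half-plane $\Gamma_n$ is (conjecturally) the graph of the \emph{decreasing} function $\tilde U_0^{(n)}$, while the relevant piece of $\sigma(\Gamma_m)$ is the reflected, hence \emph{increasing}, graph $w_0\mapsto\tilde U_0^{(m-1)}(-w_0)$, and the endpoint inequality $\tilde U_0^{(\infty)}(-\sqrt{1-r_0})\geq\tilde U_0^{(0)}(\sqrt{1-r_0})$ forces a sign change of their difference and hence a crossing by the intermediate value theorem. That monotonicity-plus-endpoint information is not recoverable from the purely topological data (winding numbers and the positions of the four axis points) that your argument uses, so even granting the loop structure in Step~2, Step~3 would not close. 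If you want to pursue the two-dimensional picture, the additional input you would need is exactly a monotonicity statement transverse to the loops, at which point the argument collapses back to the paper's one-dimensional intermediate-value scheme.
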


\begin{remark*}
Note that the even (resp.\ odd) solutions obtained in Theorem \ref{thm-sym-wh} would in this notation have $n=m$ (resp.\ $m=n+1$),
which is why our numerical results only display asymmetric solutions with $m-n\geq 2$.
\end{remark*}

Let us present also some analytical details that could be used to prove Conjecture \ref{conj-asym-wh}.
First, we note that Theorem \ref{thm-forward-sequence} can in fact easily be extended.

\begin{theorem}
Fix $0 < r_0 < 1$.
For each $w_0$ with $r_0+w_0^2 \leq 1$, there exist initial parameters $U_0^{(n)}(w_0) = U_0^{(n)}(r_0,w_0) > 0$ such that
$$ (w_0, U_0^{(n)}(w_0)) \in \mathscr R_n(r_0),\quad\text{and}\quad (w_0,U_0^{(n)}(w_0)) \to (w_0,U_0^{(\infty)}(w_0)) \in \mathscr O(r_0), $$
for each $n \geq 0$ (resp.\ $n\geq 1)$ if $w_0 \geq 0$ (resp.\ $w_0<0$).
\end{theorem}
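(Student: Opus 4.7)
The plan is to mirror the proof of Theorem~\ref{thm-forward-sequence}, but with $w_0$ fixed (alongside $r_0$) and the shooting carried out along the vertical slice $\{U_0 \geq U_0^{(\min)}(w_0)\}$. Concretely, set
\begin{equation*}
U_0^{(\min)}(w_0) = \sqrt{\tfrac{1}{2}\bigl[(1-w_0^2)^2/r_0^2 - 1\bigr]},
\end{equation*}
which is well-defined and non-negative precisely because $r_0 + w_0^2 \leq 1$. The datum $(w_0, U_0^{(\min)}(w_0))$ satisfies $E_0 = 0$ and differs from $(1, 0, 0)$ since $r_0 < 1$, so Theorem~\ref{thm-energy-large-smol}~(i) places it in $\mathscr C(r_0)$. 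By Theorem~\ref{thm-ngbh-initial-sets}~(i), $\mathscr C(r_0)$ is open along $U_0$-slices at this point, giving a small buffer above the minimum that will keep the shooting iterates bounded away from $U_0^{(\min)}(w_0)$.

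For the upper end, Theorem~\ref{thm-energy-large-smol}~(ii) says that when $U_0$ is large enough, $w$ is monotone and leaves the strip $|w| \leq 1$. Since $\dot w(0) = r_0 U_0 > 0$, this exit occurs through $w = +1$: the orbit lands in $\mathscr E_0(r_0)$ when $w_0 \geq 0$ (no zero of $w$ accrues for $\rho > 0$), and in $\mathscr E_1(r_0)$ when $w_0 < 0$ (the monotone $w$ crosses zero exactly once on the way to $+1$). Accordingly, choose the starting index $n_0 = 0$ if $w_0 \geq 0$ and $n_0 = 1$ if $w_0 < 0$, and define inductively for $n \geq n_0$
\begin{equation*}
U_0^{(n)}(w_0) = \inf\bigl\{U_0 \in \bigl[U_0^{(\min)}(w_0),\, U_0^{(n-1)}(w_0)\bigr] : (w_0, U_0) \in (\mathscr R_n \cup \mathscr E_n)(r_0)\bigr\},
\end{equation*}
with the convention $U_0^{(n_0 - 1)}(w_0) = +\infty$ for the base case.

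The core step, following verbatim the strategy of Theorem~\ref{thm-forward-sequence}, is to apply Theorem~\ref{thm-ngbh-initial-sets} in the $\mathscr U_\delta$-slice at $(w_0, U_0^{(n)}(w_0))$ to exclude every possibility other than membership in $\mathscr R_n(r_0)$: part~(i) closes off $\mathscr E_k(r_0)$ and $\mathscr C(r_0)$ (both open along $U_0$-slices), part~(ii) closes off $\mathscr O(r_0)$ (nearby orbits accumulate arbitrarily many zeros of $w$), and part~(iii) closes off $\mathscr R_k(r_0)$ for $k \neq n$ (such orbits are neighboured only by $(\mathscr R_k \cup \mathscr E_k \cup \mathscr E_{k+1})(r_0)$). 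Part~(iii) then also supplies $\mathscr E_{n+1}(r_0)$-orbits just below $U_0^{(n)}(w_0)$, allowing the induction to proceed. The resulting bounded decreasing sequence converges to some $U_0^{(\infty)}(w_0) > U_0^{(\min)}(w_0) \geq 0$, and the same exclusion argument forces $(w_0, U_0^{(\infty)}(w_0)) \in \mathscr O(r_0)$.

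I do not expect any genuinely new obstacle beyond what was already handled in Theorem~\ref{thm-forward-sequence}: the classification, neighbourhood, and energy theorems were all formulated in a way that applies directly to the $\mathscr U_\delta$-slice for any $w_0$, and the initial data at the bottom of the shooting range is automatically nonzero since $r_0 < 1$. The only real subtlety is the dichotomy in the starting index, which reflects the fact that one-sided shooting with $U_0 > 0$ unavoidably generates one zero of $w$ as soon as $w_0 < 0$. Hence $\mathscr R_0(r_0)$ is simply out of reach in that case, matching the restriction $n \geq 1$ in the statement.
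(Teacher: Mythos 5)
Your proposal is correct and follows essentially the same route as the paper: the paper's proof also sets $U_0^{(\min)}(w_0)$ at the $E_0=0$ boundary (placing it in $\mathscr C(r_0)$ by Theorem \ref{thm-energy-large-smol} (i)), invokes Theorem \ref{thm-energy-large-smol} (ii) for the upper bound, and then runs the shooting argument of Theorem \ref{thm-forward-sequence} along the $U_0$-slice. Your explicit justification of the starting-index dichotomy (monotone $w$ exiting through $w=+1$ picks up one zero exactly when $w_0<0$, so the large-$U_0$ orbits lie in $\mathscr E_0$ resp.\ $\mathscr E_1$) is a correct unpacking of a detail the paper leaves implicit.
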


\begin{proof}
Setting $$U_0^{(\min)}(w_0) = \sqrt{\frac12\left( \frac{(1-w_0^2)^2}{r_0^2} - 1 \right)},$$
we see that $(w_0,U_0^{(\min)}(w_0)) \in \mathscr C(r_0)$ by Theorem \ref{thm-energy-large-smol} (i), since $E_0 = 0$ in this case.
On the other hand, $(w_0,U_0) \in (\mathscr E_0 \cup \mathscr E_1)(r_0)$ and $w$ is monotone for sufficiently large $U_0 > 0$ by Theorem \ref{thm-energy-large-smol} (ii),
so we can apply the same argument as in the proof of Theorem \ref{thm-forward-sequence} to obtain the desired sequence for each fixed $r_0$ and $w_0$, as well as its limit.
\end{proof}

\begin{remark*}
The condition $r_0+w_0^2 \leq 1$ is necessary because we would like to have a $U_0$ for which $(w_0,U_0) \in \mathscr C(r_0)$ in order to perform the shooting method. In the case $r_0 > 1$ of Theorem \ref{thm-forward-sequence}, this was not needed because our set of eligible values of $U_0$ contained the trivial solution $w\equiv 0$ belonging to $\mathscr O(r_0)$, which also serves well enough as a lower bound for the shooting method.
\end{remark*}

We thus obtain a sequence of solutions which are defined for all $\rho \geq 0$ with exactly $n$ zeros of $w$, and which have the correct boundary behaviour at $\rho=\infty$, with neither $w_0$ nor $U_0$ being zero.
However, these might not be defined for all $\rho < 0$ nor do they need to have incorrect behaviour at $\rho=-\infty$ (i.e.\ they might be asymptotically cylindrical rather than flat), since they are no longer symmetric.
Nevertheless, the identities from Remark \ref{remark-symmetries} imply that the backwards solution will have the desired behaviour if
\begin{equation*}
U_0^{(n)}(w_0) = U_0^{(m)}(-w_0), \text{ for some } -\sqrt{1-r_0} \leq w_0 \leq \sqrt{1-r_0} \text{ and } n<m,
\end{equation*}
so the goal is to find $w_0$ having this property.
Due to the way we count zeros of $w$ (cf.\ Definition \ref{def-solution-classes}), it however turns out that the functions $U_0^{(n)}(w_0)$ will have a discontinuity at $w_0=0$ for each $n$, so it makes more sense in this context to define
\begin{equation*}
\tilde U_0^{(n)}(w_0) =
\begin{cases}
U_0^{(n)}(w_0), & \text{if } w_0 \geq 0,\\
U_0^{(n+1)}(w_0), & \text{if } w_0 < 0.
\end{cases}
\end{equation*}
Note that, by construction, $\tilde U_0^{(n)}(w_0)$ decreases with $n$ for fixed $w_0$, since $U_0^{(n)}(w_0)$ does.
The numerics suggest even more.

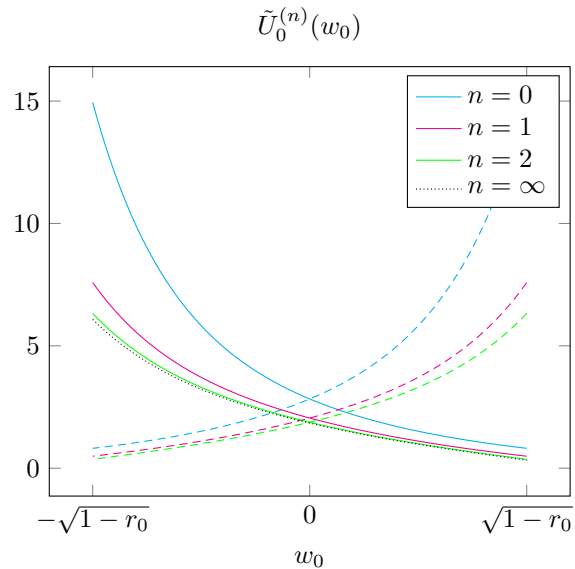
\begin{figure}[t]
\centering
\begin{tikzpicture}
\begin{axis}[
xlabel=$w_0$,
title=$\tilde U_0^{(n)}(w_0)$,
xtick={-0.5,0,0.5},
xticklabels={$-\sqrt{1-r_0}$, $0$, $\sqrt{1-r_0}$},
legend cell align= left
]
\addplot [no markers, cyan] table {mu_0.dat};
\addlegendentry{$n=0$}
\addplot [no markers, cyan, densely dashed, forget plot] table [x expr=-\thisrowno{0}] {mu_0.dat};

\addplot [no markers, magenta] table {mu_1.dat};
\addlegendentry{$n=1$}
\addplot [no markers, magenta, densely dashed, forget plot] table [x expr=-\thisrowno{0}] {mu_1.dat};

\addplot [no markers, green] table {mu_2.dat};
\addlegendentry{$n=2$}
\addplot [no markers, green, densely dashed, forget plot] table [x expr=-\thisrowno{0}] {mu_2.dat};

\addplot [no markers, black, densely dotted] table {mu_inf.dat};
\addlegendentry{$n=\infty$}
\end{axis}
\end{tikzpicture}
\caption{The functions $\tilde U_0^{(n)}(w_0)$ (solid), their reflections $\tilde U_0^{(n)}(-w_0)$ (dashed), and the limit $\tilde U_0^{(\infty)}(w_0)$ (dotted), for $r_0 = 0.75$.  Every intersection between a solid and a dashed curve represents initial data describing a wormhole. The intersections with $w_0=0$ are precisely the odd solutions from Theorem \ref{thm-sym-wh}.}
\label{fig-mu}
\end{figure}

\begin{conjecture}\label{conj-mu}
For each $n \geq 0$, the function $w_0 \mapsto \tilde U_0^{(n)}(w_0)$ is continuous, decreasing, and
\begin{equation*}
\tilde U_0^{(\infty)}(-\sqrt{1-r_0}) \geq \tilde U_0^{(0)}(\sqrt{1-r_0}).
\end{equation*}
\end{conjecture}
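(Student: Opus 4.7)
The plan is to establish the three assertions of Conjecture \ref{conj-mu} separately: continuity, monotonicity, and the boundary inequality. Continuity should follow from techniques already developed in \S\ref{sec-neighbourhood-theorem} and \S\ref{sec-construction}, the boundary inequality from an energy comparison at the two endpoints, while monotonicity is the chief obstruction.

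For continuity, fix $w_0^* \in [-\sqrt{1-r_0}, \sqrt{1-r_0}]$ and write $U^* = \tilde U_0^{(n)}(w_0^*)$, so that $(w_0^*, U^*) \in \mathscr R_{n'}(r_0)$ with $n' = n$ for $w_0^* \geq 0$ and $n' = n + 1$ for $w_0^* < 0$. By the infimum property used in the proof of Theorem \ref{thm-forward-sequence}, for any $\varepsilon > 0$ there exist $U_- \in (U^* - \varepsilon, U^*)$ with $(w_0^*, U_-) \in \mathscr E_{n'+1}(r_0)$ and $U_+ \in (U^*, U^* + \varepsilon)$ with $(w_0^*, U_+) \in (\mathscr R_{n'} \cup \mathscr E_{n'})(r_0)$. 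By Theorem \ref{thm-ngbh-initial-sets} (i) and (iii), these labels persist for nearby initial data on slices with $w_0 \neq 0$, which brackets $\tilde U_0^{(n)}(w_0)$ into $(U^* - \varepsilon, U^* + \varepsilon)$ for $w_0$ close to $w_0^*$. Continuity across $w_0 = 0$ is more delicate, but the index shift in the definition of $\tilde U_0^{(n)}$ is precisely designed to absorb the extra zero of $w$ that drifts into $\rho > 0$ as $w_0$ crosses from positive to negative; one can track this drift via the implicit function theorem (noting $\dot w(0) = r_0 U_0 > 0$) to deduce matching of the two branches.

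The boundary inequality $\tilde U_0^{(\infty)}(-\sqrt{1-r_0}) \geq \tilde U_0^{(0)}(\sqrt{1-r_0})$ should follow from a direct energy comparison. At both endpoints one has $E_0 = 2U_0^2$, so the parameter reduces effectively to $U_0$. At $w_0 = \sqrt{1-r_0}$, $w$ starts close to $1$ and requires only a small push to escape the strip $|w| \leq 1$; at $w_0 = -\sqrt{1-r_0}$, reaching even the oscillatory threshold requires enough kinetic energy for $w$ to swing across the origin repeatedly. Quantifying this calls for the autonomous energy $F = 2\dot w^2 - (1 - w^2)^2$ of Definition \ref{def-energy}, together with its monotonicity under the sign of $\kappa - 2N$ and the ordering of $\mathscr R_n$ and $\mathscr E_n$ along each fixed-$w_0$ slice established in \S\ref{sec-construction}.

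Monotonicity is the main obstacle. The shooting arguments of \S\ref{sec-construction} move only along fixed-$w_0$ lines and therefore provide no handle on the variation of $\tilde U_0^{(n)}$ as $w_0$ changes. Two strategies come to mind: first, to compare two nearby orbits in $\mathscr R_n$ sitting on the graph $\Gamma_n = \{(w_0, \tilde U_0^{(n)}(w_0))\}$ via a refined version of the polar angle $\theta$ from equation (\ref{eq-theta}), matching their winding behaviour; second, to first establish local uniqueness of regular orbits in each $\mathscr R_n$ (as hinted at in Remark \ref{remark-uniqueness-regular-orbits}), promoting $\Gamma_n$ to a smooth curve on which one can linearize and compute the sign of the tangent. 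Neither route is immediately tractable with the current toolkit, and I suspect that settling monotonicity requires a genuinely new ingredient: either a monotone quantity adapted to the two-parameter shooting, or a uniqueness theorem for regular orbits in the haunted setting.
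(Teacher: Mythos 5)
The statement you are trying to prove is explicitly left as a \emph{conjecture} in the paper: immediately after stating it, the author writes that ``there unfortunately does not seem to be a proof of Conjecture \ref{conj-mu} in sight,'' and attributes the difficulty to the lack of a uniqueness theorem for regular orbits with a given number of zeros (cf.\ Remark \ref{remark-uniqueness-regular-orbits} and the end of \S 8 of \cite{breit-forg-mais}). So there is no proof in the paper to compare against, and your proposal --- which candidly concedes that monotonicity ``requires a genuinely new ingredient'' --- does not close that gap either. To your credit, you have independently diagnosed exactly the obstruction the author names: without local (let alone global) uniqueness of regular orbits on each slice, the sets $\mathscr R_n(r_0)$ need not be graphs over $w_0$, and the shooting construction, which only moves along fixed-$w_0$ lines, gives no control over how the infima $\tilde U_0^{(n)}(w_0)$ vary with $w_0$. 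Your two suggested routes (a winding-number comparison of nearby regular orbits, or a linearization along a putative smooth curve $\Gamma_n$) are reasonable directions, but neither is carried out, so the monotonicity claim remains unproven.

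Beyond the admitted gap, the two parts you treat as more accessible are also not yet watertight. For continuity, your bracketing argument needs points of $\mathscr R_{n'}\cup\mathscr E_{n'}$ strictly above the infimum $U^*$ on the slice $w_0=w_0^*$; since the infimum is attained at $U^*$ itself and Theorem \ref{thm-ngbh-initial-sets} (iii) only places the punctured slice $\mathscr U_\delta$ inside $\mathscr R_{n'}\cup\mathscr E_{n'}\cup\mathscr E_{n'+1}$, you cannot a priori exclude that all nearby points above $U^*$ lie in $\mathscr E_{n'+1}$, which would break the upper bracket. Moreover, the neighbourhood theorem as proved controls only one-dimensional slices (fixed $w_0$ or fixed $U_0$ with $w_0\not=0$); continuity of $w_0\mapsto \tilde U_0^{(n)}(w_0)$ requires a genuinely two-dimensional version, which is plausible away from $w_0=0$ given Remark \ref{remark-close-orbits} but is not established in the paper. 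For the boundary inequality, your energy heuristic is qualitatively the right picture, but turning ``$w$ starts close to $1$ and needs only a small push'' into the quantitative comparison $\tilde U_0^{(\infty)}(-\sqrt{1-r_0}) \geq \tilde U_0^{(0)}(\sqrt{1-r_0})$ would require uniform lower bounds on the oscillatory threshold at $w_0=-\sqrt{1-r_0}$ and upper bounds on the escape threshold at $w_0=\sqrt{1-r_0}$ that are not derived anywhere. In short: your proposal is a sensible research plan that correctly identifies the hard core of the problem, but it is not a proof, and the paper itself offers none.
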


If one could show this conjecture, then the desired $w_0$ could be extracted by applying the intermediate value theorem to the functions
\begin{equation*}
\Phi_{m,n} : w_0 \mapsto \tilde U_0^{(n)}(w_0) - \tilde U_0^{(m-1)}(w_0), \quad n < m,
\end{equation*}
because they have opposite signs at each endpoint of the $w_0$-interval.
We visualize this behaviour in Figure \ref{fig-mu}, where each intersection between a solid and a dashed curve represents a zero of some $\Phi_{n,m}$, and hence a wormhole with $n$ zeros of $w$ for $\rho > 0$ and $m$ zeros for $\rho \leq 0$.
The values given in Table \ref{tab-asym-pairs} are precisely these intersections.

Be that as it may, there unfortunately does not seem to be a proof of Conjecture \ref{conj-mu} in sight.
Indeed, we face here similar difficulties as when trying to prove uniqueness of regular orbits with a given number of zeros (with $r_0$ and either $w_0$ or $U_0$ fixed), cf.\ \cite[end of \S 8]{breit-forg-mais} and Remark \ref{remark-uniqueness-regular-orbits}.
We therefore leave the statement merely as a conjecture.

To end the asymmetric discussion, we would also like to recall that we have assumed the initial value $\kappa(0) = 0$ throughout the manuscript.
This is, however, not a necessary condition, and $\kappa(0)$ could be chosen freely.
One should be careful in doing so, because that would require a modification of the proof of the classification given in Theorem \ref{thm-classification}.
Note also that such wormholes would necessarily be asymmetric, so that one should probably first have a good understanding of how these can be constructed for the simpler case $\kappa(0)=0$.

\section{Declarations}

\subsection{Acknowledgements}
The author would like to express his gratitute to his PhD advisor, Anna Siffert, for suggesting the problem and for the helpful discussions surrounding it.
Many thanks also to Gustav Holzegel for suggesting improvements to the text as well as certain results.

\subsection{Funding}
While working on this project, the author was funded by the Deutsche Forschungsgemeinschaft (DFG, German Research Foundation) under Germany's Excellence Strategy EXC 2044--390685587, Mathematics Münster: Dynamics-Geometry-Structure.
At the time of publishing, the author is funded by the Österreichischer Wissenschaftfonds (FWF) within the project P 36862.
The author is grateful to both sources of funding.

\subsection{Data availability}
The numerical data used for the plots in \S \ref{sec-conclusion-outlook} can be obtained through \href{https://doi.org/10.48550/arXiv.2310.14367}{DOI: 10.48550/arXiv.2310.14367} or by contacting the author.

\subsection{Conflicts of interest}
The author has no conflicts of interest.

\printbibliography

\end{document}